\newif\ifpublic
\publicfalse 
\newif\ifIEEEtr
\IEEEtrfalse 

\ifIEEEtr
\documentclass[journal]{IEEEtran}
\else
\documentclass[10pt]{article}
\fi

\usepackage[pagebackref,colorlinks=true,linkcolor=blue,urlcolor=blue,citecolor=blue,pdfstartview=FitH]{hyperref}

\usepackage{amsmath}
\usepackage{amssymb}
\usepackage{amsthm,bbm}

\ifIEEEtr
\renewenvironment{proof}{\begin{IEEEproof}}{\end{IEEEproof}}

\else
\usepackage{fullpage}
\usepackage{mathpazo}
\fi

\usepackage[nameinlink,capitalize]{cleveref}
\renewcommand{\eqref}[1]{\hyperref[#1]{(\ref*{#1})}}

\sloppy

\newtheorem{theorem}{Theorem}[section]
\newtheorem{corollary}[theorem]{Corollary}
\newtheorem{lemma}[theorem]{Lemma}

\newtheorem{definition}[theorem]{Definition}
\newtheorem{claim}[theorem]{Claim}
\newtheorem{fact}[theorem]{Fact}
\theoremstyle{definition}
\newtheorem{remark}[theorem]{Remark}

\newcommand{\prob}[2]{\mathop{\mathrm{Pr}}_{#1}\left[#2\right]}
\newcommand{\avg}[2]{\mathop{\mathbb{E}}_{#1}\left[#2\right]}

\newcommand{\etal}{{\em et~al.}}
\newcommand{\Supp}{\mathrm{Supp}}

\newcommand{\LM}{\mathrm{LM}}
\newcommand{\RM}{\mc{P}}

\newcommand{\F}{\mathbb{F}}

\newcommand{\Test}{\mathrm{Test}}

\setlength{\parindent}{0pt}
\setlength{\parskip}{10pt}
\newcommand{\mc}[1]{\mathcal{#1}}

\ifpublic
\usepackage[disable]{todonotes}
\else
\usepackage[colorinlistoftodos]{todonotes}
\fi

\title{Robust Multiplication-based Tests for Reed-Muller Codes\thanks{A preliminary version of this paper appeared in the {\em Proc. 36th IARCS Conf. on Foundations of Software Technology \& Theoretical Computer Science (FSTTCS)}, 2016~\cite{HarshaS2016-rm}.}}


\ifIEEEtr
\author{Prahladh Harsha%
\thanks{TIFR, Mumbai,
    India. \texttt{prahladh@tifr.res.in}}
 and Srikanth Srinivasan\thanks{Department of Mathematics, IIT Bombay, Mumbai, India.
\texttt{srikanth@math.iitb.ac.in}}}
\else
\author{Prahladh Harsha\thanks{TIFR, Mumbai, India. \texttt{prahladh@tifr.res.in}} \and Srikanth Srinivasan\thanks{Department of Mathematics, IIT Bombay, Mumbai, India.
\texttt{srikanth@math.iitb.ac.in}}}
\date{}
\fi

\ifIEEEtr
\markboth{}{Harsha \MakeLowerCase{and} Srinivasan: Robust Multiplication-based Tests for Reed-Muller Codes}
\fi

\begin{document}

\maketitle

\begin{abstract}
  We consider the following multiplication-based tests to check if a
  given function $f: \F_q^n\to \F_q$ is a codeword of the Reed-Muller
  code of dimension $n$ and order $d$ over the finite field $\F_q$ for
  prime $q$ (i.e., $f$ is the evaluation of a degree-$d$ polynomial
  over $\F_q$ for $q$ prime).
\begin{itemize}
\item $\Test_{e,k}$: Pick $P_1,\ldots,P_k$ independent random degree-$e$
  polynomials and accept iff the function $fP_1\cdots P_k$ is the
  evaluation of a degree-$(d+ek)$ polynomial (i.e., is a codeword
  of the Reed-Muller code of dimension $n$ and order $(d+ek)$).
\end{itemize}
We prove the robust soundness of the above tests for large values of
$e$, answering a question of Dinur and Guruswami [Israel Journal of Mathematics, 209:611-–649, 2015]. Previous
soundness analyses of these tests were known only for the case when
either $e=1$ or $k=1$. Even for the case $k=1$ and $e>1$, earlier
soundness analyses were not robust.

We also analyze a derandomized version of this test, where (for
example) the polynomials $P_1,\ldots,P_k$ can be the \emph{same}
random polynomial $P$. This generalizes a result of Guruswami \etal\
[SIAM J. Comput., 46(1):132-–159, 2017].

One of the key ingredients that go into the proof of this robust
soundness is an extension of the standard Schwartz-Zippel lemma over
general finite fields $\F_q$, which may be of independent interest.
\end{abstract}

\section{Introduction}

\ifIEEEtr\IEEEPARstart{T}{he } \else The \fi  problem of local testing of
codes~\cite{RubinfeldS1996,Arora1994,FriedlS1995,GoldreichS2006} has
received a lot of attention over the last two decades. Informally
speaking, the problem of testing a code is to design a robust
algorithmic procedure that tests if a given received word is a member
of the code or not. The algorithmic procedure on access to the
received word, queries it at a few random locations and decides to
accept/reject the word such that (1) all codewords are accepted by the
procedure and (2) non-codewords are rejected with probability
proportional to their distance from the code. The Reed-Muller code,
due to its inherent local characterization, is extremely amenable to
efficient local testing. Local testing of Reed-Muller codes over large
fields was extensively investigated in the
90's~\cite{RubinfeldS1996,FriedlS1995,AroraS2003,RazS1997}, primarily motivated
by their application to construction of probabilistically checkable
proofs~\cite{FeigeGLSS1996,BabaiFLS1991,AroraS1998,AroraLMSS1998}. More
recently, local testing of Reed-Muller codes over small fields have
also been
investigated~\cite{AlonKKLR2005,KaufmanR2006,BhattacharyyaKSSZ2010,HaramatySS2013}.

The basic problem of Reed-Muller code testing is to check if a given function $f: \F_q^n \to \F_q$ is
close to a degree-$d$ multivariate polynomial (over $\F_q$, the finite
field of $q$ elements)  or equivalently if the word $f$ is close to
the Reed-Muller code $\RM_q(n,d)$ of order $d$ and dimension $n$ over the finite
field $\F_q$. This problem, in its local testing version, for the case
when $q=2$
was first studied by Alon, Kaufman, Krivilevich, Litsyn and
Ron~\cite{AlonKKLR2005}, who proposed and analyzed a natural
$2^{d+1}$-query test for this problem. Subsequent to this work, improved analyses and
generalizations to larger fields were
discovered~\cite{KaufmanR2006,BhattacharyyaKSSZ2010,HaramatySS2013}. These tests
and their analyses led to several applications, especially in hardness
of approximation, which in turn spurred other Reed-Muller testing
results (which were not necessarily local tests)~\cite{DinurG2015,GuruswamiHHSV2017}. In this work, we give a robust
version of one of these latter multiplication based tests due to
Dinur and Guruswami~\cite{DinurG2015}. Below we describe this variation
of the testing problem, its context, and our results.

\subsection{Local Reed-Muller tests}

Given a field $\F_q$ of size $q$, let
$\mc{F}_q(n) := \{f\mid f:\F_q^n\rightarrow \F_q\}$. The Reed-Muller
code $\RM_q(n,d)$, parametrized by two parameters $n$ and $d$, is
the subset of $\mc{F}_q(n)$ that corresponds to those functions which
are evaluations of polynomials of degree at most $d$. If $n$, $d$ and
$q$ are clear from context, we let $r := (q-1)n-d$.

The proximity of two functions $f, g \in \mc{F}_q(n)$ is measured by
the Hamming distance. Specifically, we let $\Delta(f,g)$
denote the absolute  Hamming distance
between $f$ and $g$, i.e., $\Delta(f,g) := \#\{x \in \F_q^n \mid f(x)
\neq g(x)\}$. For a family of functions $\mc{G} \subseteq \mc{F}_q(n)$, we
let $\Delta(f,\mc{G}) : = \min\{ \Delta(f,g)\mid g \in \mc{G}\}$. We
say that $f$ is $\Delta$-close to $\mc{G}$ if $\Delta(f,\mc{G}) \leq
\Delta$ and $\Delta$-far otherwise.

The following natural local test to check membership of a function $f$ in
$\RM_2(n,d)$ was proposed by Alon~\etal~\cite{AlonKKLR2005} for the
case when $q=2$ (and extended by Kaufman and Ron~\cite{KaufmanR2006} to larger $q$).
\begin{itemize}
\item AKKLR Test: Input $f:\F_2^n \to \F_2$
\begin{itemize}
\item Pick a random $d+1$-dimensional affine space $A$. 
\item Accept iff $f|_{A} \in \RM_2(d+1,d)$.
\end{itemize}
\end{itemize}

Here, $f|_{A}$ refers to the restriction of the function $f$ to the
affine space $A$. Bhattacharyya~\etal~\cite{BhattacharyyaKSSZ2010}
showed the following optimal analysis of this test.
\begin{theorem}[\cite{AlonKKLR2005,BhattacharyyaKSSZ2010}]\label{thm:bkssz}
There exists an absolute constant $\alpha >0$ such that the following
holds. If
$f\in\mc{F}_2(n)$ is $\Delta$-far from $\RM_2(n,d)$ for $\Delta\in
\mathbb{N}$, then 
$$\Pr_A[f|_A \not\in \RM_2(d+1,d)] \geq \min\{\Delta/2^r, \alpha\} .$$
\end{theorem}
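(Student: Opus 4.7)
The plan is to establish the contrapositive via a local-correction argument: if $\rho(f) := \Pr_A[f|_A \notin \RM_2(d+1,d)]$ is strictly smaller than both $\Delta/2^r$ and a small absolute constant $\alpha$, I will produce a codeword $g \in \RM_2(n,d)$ with $\Delta(f,g) < \Delta$. The starting point is the observation that, over $\F_2$, a function $h \colon A \to \F_2$ on a $(d+1)$-dimensional affine subspace lies in $\RM_2(d+1,d)$ iff $\sum_{y \in A} h(y) = 0$, since the unique top monomial on $A$ has coefficient equal to this sum. Thus each subspace $A$ containing $x$ ``votes'' for the value $v_A(x) := \sum_{y \in A \setminus \{x\}} f(y)$ at $x$, and I would define the self-corrected function by $g(x) := \Maj_{A \ni x} v_A(x)$, where $A$ is a uniformly random $(d+1)$-flat through $x$.

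The closeness step is routine. Whenever $g(x) \neq f(x)$, at least half of the subspaces $A$ through $x$ are rejected by the test. A standard double counting over $(x, A)$ pairs, using that the distribution of $(x, A)$ where $A \ni x$ is symmetric and marginalizes to the uniform distribution over $A$, turns the global rejection probability $\rho(f)$ into a bound of the form $\Delta(f, g) \leq 2 \cdot 2^r \cdot \rho(f)$; the factor of $2$ is absorbed into the choice of $\alpha$ so that the resulting distance is less than $\Delta$.

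The main obstacle is to show that $g$ is itself the evaluation of a degree-$d$ polynomial. I would argue in two stages. First, a resampling argument shows that for all but an $O(\rho(f))$ fraction of pairs $(x, A)$, the vote $v_A(x)$ actually equals the majority value $g(x)$: replacing one of the defining directions of $A$ by a fresh random direction gives a perturbed flat $A'$ whose distribution is again uniform, so $v_A(x)$ and $v_{A'}(x)$ are close in distribution, and by the definition of $g$ as a majority both must equal $g(x)$ for most $(x, A)$. Second, because by construction $v_A$ satisfies $\sum_{y \in A} v_A(y) = 0$ and now agrees with $g|_A$ on almost every point of $A$, the good distance of $\RM_2(d+1,d)$ leaves no room for deviation and forces $g|_A \in \RM_2(d+1,d)$ on a $1 - o(1)$ fraction of all $(d+1)$-flats $A$.

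The final and hardest step is to upgrade ``$g|_A \in \RM_2(d+1,d)$ on a $1-o(1)$ fraction of $(d+1)$-flats $A$'' to the global statement $g \in \RM_2(n,d)$. I would carry this out by induction on $n$, restricting to random hyperplanes and invoking the inductive hypothesis, combined with an agreement lemma that glues together consistent low-degree restrictions into a single global polynomial; the constant $\alpha$ must be chosen small enough that the agreement fraction obtained above exceeds the threshold demanded by this induction. This combinatorial ``global from local'' passage, rather than the algebraic construction of $g$ or the closeness count, is where the bulk of the difficulty lies and where the analysis of Bhattacharyya et al.\ invests most of its effort.
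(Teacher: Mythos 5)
Your strategy is the classical self-correction analysis of Alon~et~al., which is genuinely different from the route the present paper relies on: \cref{thm:bkssz} is quoted from Bhattacharyya~et~al., and the paper's indicated proof is to identify the AKKLR test with $\Test_{1,r-1}$ and apply \cref{lem:bkssz} inductively. Unfortunately, the self-correction route cannot deliver the theorem as stated, and your outline has two concrete gaps.

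First, the closeness step is off by a factor of $2^{d}$. Your double counting gives $\Delta(f,g)\le 2\cdot 2^{n}\cdot\rho(f)$, not $2\cdot 2^{r}\cdot\rho(f)$: if $g(x)\ne f(x)$ then at least half the flats through $x$ are rejected, and since choosing a uniform $x$ and then a uniform flat through $x$ marginalizes to a uniform flat, one gets $\rho(f)\ge\tfrac12\Pr_x[g(x)\ne f(x)]=\Delta(f,g)/2^{n+1}$ and nothing stronger. You have silently replaced $2^{n}$ by $2^{r}=2^{n-d}$; with the correct bound, the hypothesis $\rho(f)<\Delta/2^{r}$ only yields $\Delta(f,g)<2^{d+1}\Delta$, so the contrapositive does not close. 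The linear term $\Delta/2^{r}$ of the theorem is in fact proved by a different and easier argument: write $f=g+e$ with $g$ the nearest codeword; any flat meeting $\Supp(e)$ in exactly one point is rejected, and a Bonferroni/second-moment estimate shows that when $|\Supp(e)|\le c\,2^{r}$ the density of such flats is at least $|\Supp(e)|/2^{r}$.

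Second, and more fundamentally, the absolute constant $\alpha$ is out of reach of majority decoding. The resampling step and the local-to-global induction in the AKKLR analysis only succeed under a hypothesis of the form $\rho(f)\le 2^{-\Theta(d)}$ (each flat carries $2^{d+1}$ points, and the various union bounds over points of a flat inject this loss), so the self-correction argument proves only: either $\rho(f)\ge 2^{-\Theta(d)}$, or $f$ is close to a codeword. That is precisely the AKKLR bound, with floor $\Omega(2^{-d})$ in place of $\alpha$; removing the exponential degradation is the entire content of the Bhattacharyya~et~al.\ improvement. Their proof of the constant floor does not use majority decoding at all: it observes that the test accepts iff $f\,\ell_1\cdots\ell_{r-1}\in\RM_2(n,n-1)$ for random affine forms $\ell_i$ (the product of the $\ell_i$ is the indicator of a random $(d+1)$-flat), and then applies \cref{lem:bkssz} iteratively to show that, with constant probability, every multiplication keeps the product far from the relevant Reed--Muller code, so the final product is a nonzero function. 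Without that ingredient, or an equivalent new idea, your outline establishes a strictly weaker statement than the one claimed.
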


Subsequent to this result, Haramaty, Shpilka and
Sudan~\cite{HaramatySS2013} extended this result to all constant sized
fields $\F_q$. These optimal analyses then led to the discovery of the
so-called ``short code'' (aka the low degree long code) due to
Barak~\etal~\cite{BarakGHMRS2015} which has played an important role
in several improved hardness of approximation
results~\cite{DinurG2015,GuruswamiHHSV2017,KhotS2017,Varma2015,Huang2015}.

\subsection{Multiplication-based tests}

We now consider the following type of multiplication-based tests to
check membership in $\RM_q(n,d)$, parametrized by two numbers $e, k
\in \mathbb{N}$. 

\begin{itemize}
\item $\Test_{e,k}$: Input $f: \F_q^n \to \F_q$
\begin{itemize}
\item Pick $P_1,\ldots,P_k\in_R \RM_q(n,e)$. 
\item Accept iff $fP_1\cdots P_k\in \RM_q(n,d+ek)$.
\end{itemize}
\end{itemize}

This tests computes the point-wise product of $f$ with $k$ random
degree-$e$ polynomials $P_1,\ldots,P_k$ respectively and checks that
the resulting product function $fP_1\cdots P_k$ is the evaluation of a
degree-$(d+ek)$ polynomial. Unlike the previous test, this test is not
necessarily a local test.

The key lemma due to Bhattacharyya~\etal~\cite{BhattacharyyaKSSZ2010}
that led to the optimal analysis in \cref{thm:bkssz} is the
following robust analysis of $\Test_{1,1}$.

\begin{lemma}[\cite{BhattacharyyaKSSZ2010}]\label{lem:bkssz}
Let $f\in\mc{F}_2(n)$ be $\Delta$-far from $\RM_2(n,d)$ for $\Delta
= 2^r/100$. For randomly picked $\ell \in \RM_2(n,1)$, we have
\[
\prob{\ell}{\Delta(f\cdot \ell,\RM_2(n,d+1)) < \beta \Delta} =
O\left(\frac{1}{2^{r}}\right) ,
\]
for some absolute constant $\beta > 0$.
\end{lemma}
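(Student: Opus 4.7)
The plan is a contrapositive argument: I will assume that $\Pr_\ell[\Delta(f \cdot \ell, \RM_2(n, d+1)) < \beta \Delta] > C/2^r$ for a sufficiently large absolute constant $C$ and a suitably small $\beta > 0$, and construct a polynomial $g^* \in \RM_2(n, d)$ with $\Delta(f, g^*) < \Delta$, contradicting the hypothesis that $f$ is $\Delta$-far from $\RM_2(n, d)$. The overall strategy is to (i) factor each ``bad'' product as $\ell \cdot g_\ell$ for a degree-$d$ polynomial $g_\ell$, (ii) show that the $g_\ell$'s are pairwise consistent enough that there is a single global $g^*$ agreeing with most of them, and (iii) conclude by a double-counting argument. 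Denote the bad set of $\ell$'s by $S$.

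\emph{Step 1 (Factoring out $\ell$).} For each non-constant $\ell \in S$, let $Q_\ell \in \RM_2(n, d+1)$ be a nearest degree-$(d+1)$ codeword to $f \cdot \ell$. Since $f \cdot \ell$ vanishes on the hyperplane $H_\ell := \{\ell = 0\}$, the restriction $Q_\ell|_{H_\ell}$ has Hamming weight at most $\beta \Delta$; as $Q_\ell|_{H_\ell}$ lies in $\RM_2(n-1, d+1)$, which has minimum distance $2^{r-2}$, choosing $\beta$ small enough that $\beta \Delta < 2^{r-2}$ forces $Q_\ell|_{H_\ell} \equiv 0$. We may then factor $Q_\ell = \ell \cdot g_\ell$ for some $g_\ell \in \RM_2(n, d)$, and $f$ agrees with $g_\ell$ on all but at most $\beta \Delta$ points of $\{\ell = 1\}$.

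\emph{Step 2 (Extracting $g^*$).} For two linearly independent $\ell_1, \ell_2 \in S$, I claim that $g_{\ell_1}$ and $g_{\ell_2}$ coincide on the codimension-two affine subspace $A = \{\ell_1 = \ell_2 = 1\}$: on $A$ each $g_{\ell_i}$ differs from $f$ in at most $\beta \Delta$ points, so $(g_{\ell_1} - g_{\ell_2})|_A$ has weight at most $2\beta \Delta$; since $(g_{\ell_1} - g_{\ell_2})|_A$ lies in $\RM_2(n-2, d)$, which has minimum distance $2^{r-2}$, equality is forced once $2\beta \Delta < 2^{r-2}$. Combining this pairwise consistency with a bound on the number of distinct $g_\ell$'s --- obtained by a Johnson-type argument using that each $g_\ell$ is close to $f$ on $\{\ell=1\}$ while restrictions of distinct degree-$d$ polynomials are far apart --- a pigeonhole step produces a single $g^* \in \RM_2(n, d)$ with $g_\ell = g^*$ for every $\ell$ in a subset $S^* \subseteq S$ of size $\Omega(|S|)$.

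\emph{Step 3 (Double counting).} To bound $\Delta(f, g^*)$, I count pairs $(x, \ell)$ with $\ell \in S^*$, $\ell(x) = 1$, and $f(x) \neq g^*(x)$. Fixing $\ell$, there are at most $\beta \Delta$ such $x$, giving at most $|S^*| \beta \Delta$ pairs in all. Fixing $x$ with $f(x) \neq g^*(x)$, roughly a $\tfrac{1}{2}$ fraction of $\ell \in S^*$ satisfy $\ell(x) = 1$ by a standard concentration argument, provided $|S^*|$ is large enough (arranged by taking $C$ large). This yields $\Delta(f, g^*) = O(\beta \Delta) < \Delta$ for $\beta$ sufficiently small, giving the desired contradiction. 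The main obstacle is Step 2: passing from pairwise agreement of the $g_\ell$'s on codimension-two subspaces to a single global polynomial that many $g_\ell$'s match exactly. This is where the minimum distance of Reed--Muller codes is used as a rigidity tool, and the quantitative calibration of $\beta$, $C$, and $|S^*|$ must be arranged carefully so as to match exactly the $O(1/2^r)$ probability bound in the statement.
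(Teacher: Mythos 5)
First, note that the paper does not prove this lemma at all: it is imported verbatim from Bhattacharyya et al.~\cite{BhattacharyyaKSSZ2010} and used only to motivate the inductive template; the quantitative workhorse the paper actually relies on is \cref{lem:HSS}, which is derived from the imported Haramaty--Shpilka--Sudan hyperplane theorem (\cref{thm:HSS}). So what you are really attempting is a reproof of the BKSSZ key lemma, whose content --- after the reduction in your Step 1 --- is precisely the ``many close hyperplane restrictions imply globally close'' statement of \cref{thm:HSS}. Your Step 1 is correct: since $f\cdot\ell$ vanishes on $\{\ell=0\}$ and $\beta\Delta<2^{r-2}$, any nearby codeword $Q_\ell$ must vanish there as well, hence factors as $\ell\, g_\ell$ with $g_\ell\in\RM_2(n,d)$, and the bad event becomes ``$f|_{\ell=1}$ is $\beta\Delta$-close to $\RM_2(n-1,d)$.''

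The argument breaks at Step 2, and this is not a missing detail but the entire difficulty of the theorem. Agreement of $g_{\ell_1}$ and $g_{\ell_2}$ on the codimension-two subspace $\{\ell_1=\ell_2=1\}$ is an extremely weak constraint: the degree-$d$ polynomials vanishing on that subspace include $(1+\ell_1)\cdot\RM_2(n,d-1)+(1+\ell_2)\cdot\RM_2(n,d-1)$, a space of enormous dimension, so pairwise consistency does not rigidify the family $\{g_\ell\}$ and gives no bound on the number of distinct $g_\ell$'s. The proposed ``Johnson-type argument'' does not apply, because each $g_\ell$ is close to $f$ only on its own hyperplane $\{\ell=1\}$ rather than all being close to one received word within the Johnson radius; a priori the $g_\ell$ could all be distinct and no pigeonhole is available. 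Converting $\Theta(2^{d})$ approximately consistent hyperplane restrictions into a single global degree-$d$ polynomial is exactly what \cref{thm:HSS} (equivalently, the inductive core of \cite{BhattacharyyaKSSZ2010}) accomplishes, and it requires an induction on $n$ with a delicate case analysis, not list decoding plus pigeonhole. Step 3 has a further gap: since $|S^*|=\Theta(2^{d+1})$ is far smaller than the $2^{n}$ affine forms vanishing at any fixed point, an adversarial $S^*$ can avoid a given disagreement point entirely, so the claim that half of $S^*$ satisfies $\ell(x)=1$ is not a concentration statement; this coverage issue is precisely why \cref{thm:HSS} carries the extra error term $4(q-1)q^n/K$ and why the proof of \cref{lem:HSS} invokes it with all $(q^n-1)/(q-1)$ hyperplane directions rather than a small subfamily.
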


Observe that the AKKLR test is equivalent to $\Test_{1,r-1}$ for
$r=n-d$. This observation coupled with a simple inductive argument
using the above lemma implies \cref{thm:bkssz}.

Motivated by questions related to hardness of coloring hypergraphs,
Dinur and Guruswami studied the $\Test_{e,1}$ for $e = r/4$ and proved
the following result.

\begin{lemma}[\cite{DinurG2015}]\label{lem:dg}
Let $f\in\mc{F}_2(n)$ be $\Delta$-far from $\RM_2(n,d)$ for $\Delta
= 2^r/100$ and let $e = (n-d)/4$. For randomly picked $P \in \RM_2(n,e)$, we have
\[
\prob{P}{f\cdot P \in \RM_2(n,d+e)} \leq  \frac{1}{2^{2^{\Omega(e)}}}.
\]
\end{lemma}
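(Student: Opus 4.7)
The plan is to recast the probability as the reciprocal of an $\F_2$-codimension and then to lower-bound that codimension by $2^{\Omega(e)}$. The starting observation is that the ``good'' set
\[
W_f := \bra{P \in \RM_2(n,e) : fP \in \RM_2(n, d+e)}
\]
is an $\F_2$-linear subspace of $\RM_2(n,e)$, because $\RM_2(n,d+e)$ is closed under addition. Consequently,
\[
\prob{P}{fP \in \RM_2(n, d+e)} \;=\; 2^{-\operatorname{codim}_{\RM_2(n,e)}(W_f)},
\]
and the lemma reduces to showing $\operatorname{codim}(W_f) \geq 2^{\Omega(e)}$.

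To produce many independent constraints, parametrize $P = \sum_{|S| \leq e} c_S x^S$ with $(c_S)$ independent uniform bits indexed by multilinear monomials of degree $\leq e$. Then $fP = \sum_S c_S \cdot fx^S$, and the event $fP \in \RM_2(n, d+e)$ becomes the homogeneous linear equation
\[
\sum_{|S| \leq e} c_S \cdot \bigl(fx^S \bmod \RM_2(n,d+e)\bigr) \;=\; 0
\]
in the quotient space $\mc{F}_2(n)/\RM_2(n, d+e)$. Its probability under uniform $(c_S)$ equals exactly $2^{-K}$, where $K$ is the $\F_2$-dimension of the span of the vectors $\bra{fx^S \bmod \RM_2(n, d+e) : |S| \leq e}$ in the quotient. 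So it suffices to exhibit $K \geq 2^{\Omega(e)}$.

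To lower-bound $K$, I would iterate \cref{lem:bkssz}. With $r = n - d = 4e$ and $\Delta = 2^r/100$, the robust single-step lemma (applied repeatedly, with a union bound over failure events of probability $O(1/2^{r-j})$ at step $j$) implies that for ``most'' sequences of $j \leq e$ random linear forms $\ell_1, \ldots, \ell_j$, the product $f\ell_1\cdots\ell_j$ remains $\Omega(\beta^j \Delta)$-far from $\RM_2(n,d+j)$ and hence nonzero modulo that code. In particular, many individual products $fx^S$ for $|S| \leq e$ fall outside $\RM_2(n, d+e)$. A Möbius / leading-monomial argument---composed with a generic linear change of coordinates (itself justified probabilistically)---should then extract $2^{\Omega(e)}$ such images that are jointly $\F_2$-linearly independent in the quotient. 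The target count $2^{\Omega(e)}$ is forced by the ambient dimensions: with $n \geq d + 4e$ one has $\binom{n}{e} \geq \binom{4e}{e} = 2^{\Omega(e)}$ distinct degree-$e$ monomials, and $r - e = 3e$ degrees of slack above the code $\RM_2(n, d+e)$ in which distinct leading monomials can reside.

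The main obstacle is precisely this last extraction step: iterated \cref{lem:bkssz} guarantees each image $fx^S \bmod \RM_2(n, d+e)$ is individually nonzero, but promoting this to joint linear independence of $2^{\Omega(e)}$ images requires a delicate combinatorial argument---a monomial-ordering / Gaussian-elimination argument applied to the high-degree Fourier support of $f$, or a direct probabilistic argument showing that a generic linear substitution spreads the cosets across $2^{\Omega(e)}$ distinct top-monomial directions. This is where the scaling $e = (n-d)/4$ and $\Delta = 2^r/100$ are used to the hilt: $e \leq r/4$ ensures enough iteration budget so that the hypothesis $\beta^j \Delta \geq 2^{r-j}/100$ is preserved (taking $\beta \geq 1/2$), while $\Delta = 2^r/100$ supplies enough high-degree Fourier mass on $f$ to feed the extraction of independent directions.
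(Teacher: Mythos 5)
Your opening reduction is sound and is in fact implicit in the argument of Dinur and Guruswami: since $P\mapsto fP \bmod \RM_2(n,d+e)$ is an $\F_2$-linear map, the acceptance probability equals exactly $2^{-K}$ where $K$ is the rank of this map, and the entire content of \cref{lem:dg} is the bound $K\geq 2^{\Omega(e)}$. The problem is that your proposal does not prove that bound; you flag the ``extraction step'' as the main obstacle, and that step \emph{is} the theorem. Knowing that each individual image $fx^S \bmod \RM_2(n,d+e)$ (or $f\ell_1\cdots\ell_e$) is nonzero only gives $K\geq 1$, and neither iterated applications of \cref{lem:bkssz}, nor M\"obius inversion, nor a generic linear change of coordinates is known to upgrade individual nonvanishing to joint independence of $2^{\Omega(e)}$ images for an arbitrary far-from-low-degree $f$. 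A leading-monomial/Gaussian-elimination argument of the kind you gesture at does work when $f$ is genuinely a polynomial of degree $d'>d$ --- that is essentially \cref{lem:fmaxmon} of this paper --- but a function that is merely \emph{far} from $\RM_2(n,d)$ has no useful leading monomial, which is precisely why the two cases are separated in \cref{sec:testek}. There is also a quantitative issue in your iteration: \cref{lem:bkssz} guarantees distance $\beta\Delta$ only for an unspecified absolute constant $\beta$, and the restriction step loses a constant factor in $\Delta$ each time $r$ drops by $1$; you cannot simply ``take $\beta\geq 1/2$,'' and any correct argument must track the decay of $\Delta$ against $r$ explicitly (as the paper does through the parameter $L=\lfloor\log_q\Delta\rfloor$).

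The argument that actually closes this gap (in \cite{DinurG2015}, outlined in the introduction and generalized in \cref{sec:testek}) produces the $2^{\Omega(e)}$ independent constraints by induction rather than by direct extraction. One uses \cref{lem:bkssz} to find a single variable $X$ (after a linear change of coordinates) such that both restrictions $f|_{X=\alpha}$ remain $\Omega(\Delta)$-far, writes $f=Xg+h$ and $P=XQ+R$ with $Q$ uniform of degree $\leq e-1$ and $R$ uniform of degree $\leq e$, and observes that acceptance forces two events arranged in upper-triangular fashion: one depending only on $R$ and one on $(Q,R)$, the latter reduced to an event in $Q$ alone by the comparison of an inhomogeneous linear system with its homogeneous version (\cref{lem:multilineqns} in the general form). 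This yields a recursion of the shape $\rho(e,r,\Delta)\geq \rho(e,r-1,\Delta')+\rho(e-1,r-1,\Delta')$, and it is this Pascal-triangle accumulation over $\Omega(r)$ levels --- giving roughly $\binom{r}{e}=2^{\Omega(e)}$ when $e=r/4$ --- that supplies the rank bound. Without an argument of this kind, your plan is missing its load-bearing step.
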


Note that the $\Test_{e,1}$ is not a local test (as is the case with multiplication
based tests of the form $\Test_{e,k}$). Furthermore, the above lemma does not
give a robust analysis unlike \cref{lem:bkssz}. More precisely, the
lemma only bounds the probability that the product function
$f\cdot P$ is in $\RM_2(n,d+e)$, but does not say anything about
the probability of $f\cdot P$ being close to $\RM_2(n,d+e)$ as in
\cref{lem:bkssz}. Despite this, this lemma has had several
applications, especially towards proving improved inapproximability
results for hypergraph
colouring~\cite{DinurG2015,GuruswamiHHSV2017,KhotS2017,Varma2015,Huang2015}.

\subsection{Our results}

Our work is motivated by the question raised at the end of the
previous section: can the analysis of the Dinur-Guruswami Lemma be
strengthened to yield a robust version of \cref{lem:dg}? Such a
robust version, besides being interesting of its own right, would
yield a soundness analysis of the $\Test_{e,k}$ for $k > 1$ (wherein the input
function $f$ is multiplied by $k$ degree-$e$ polynomials). This is
similar to how \cref{lem:bkssz} was instrumental in proving
\cref{thm:bkssz}.

We begin by first showing this latter result (ie., the soundness
analysis of the $\Test_{e,k}$). First for some notation. For
non-negative $n$ and $d$, let $N_q(n,d)$ denote the number of monomials $m$ in indeterminates
$X_1,\ldots,X_n$ such that the degree of each variable in $m$ is at
most $q-1$ and the total degree is at most $d$. Equivalently,
$N_q(n,d)$ is the dimension of the vector space $\RM_q(n,d).$ For $n <
0$, we define $N_q(n,d) = 1$.

\begin{theorem}[Soundness of $\Test_{e,k}$]
\label{thm:test-e-k-new}
For every prime $q$ there exists a constant $c_q$ such that the
following holds. Let $k\in\mathbb{N}$ be arbitrary
constant. Let
$n,d,r,\Delta,e\in\mathbb{N}$ be positive integers such that $r=(q-1)n-d$,
$\Delta \leq q^{r/4(q-1)-2}$, and
$e\leq r/4k$. Then, given any $f\in \mc{F}_q(n)$ that
is $\Delta$-far from $\RM_q(n,d)$ and for $P_1,\ldots,P_k$ chosen
independently and uniformly at random from $\RM_q(n,e)$, we have
\[
\prob{P_1,\ldots,P_k}{fP_1P_2\cdots P_k \in \RM_q(n,d+ek)} \leq \frac{k}{q^{N}},
\]
for some $N\geq \eta(q,k)\cdot N_q(\lfloor\frac{L}{10}\rfloor-c_q,e),$ for $L = \lfloor \log_q \Delta\rfloor$ and $\eta(q,k) = \frac{1}{q^{k/q-1}\ln q}.$ 
\end{theorem}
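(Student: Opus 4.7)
The plan is to prove the theorem by induction on $k$, with the case $k=1$ being a \emph{robust} generalization of the Dinur--Guruswami Lemma (\cref{lem:dg}) to arbitrary prime fields $\F_q$. Concretely, the base case I would aim to prove says: if $f$ is $\Delta$-far from $\RM_q(n,d)$ with the parameters satisfying the hypotheses of the theorem, then for a uniformly random $P \in \RM_q(n,e)$,
\[
\prob{P}{fP \text{ is } \Delta'\text{-close to } \RM_q(n, d+e)} \leq \frac{1}{q^{M_1}},
\]
where $\log_q \Delta'$ retains a constant fraction of $L = \lfloor \log_q \Delta \rfloor$ and $M_1$ is of the form $\Omega(N_q(\Omega(L) - c_q, e))$. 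The two hypotheses $\Delta \leq q^{r/4(q-1)-2}$ and $e \leq r/4k$ are engineered to preserve admissibility of the parameters across $k$ successive applications of this base case.

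Given such a robust base case, the inductive step is routine. For $k \geq 2$, let $B$ denote the event that $fP_1$ is $\Delta'$-close to $\RM_q(n,d+e)$ and let $A$ denote the event that $fP_1\cdots P_k \in \RM_q(n,d+ek)$. By the law of total probability,
\[
\prob{P_1,\ldots,P_k}{A} \;\leq\; \prob{P_1}{B} \;+\; \prob{P_1,\ldots,P_k}{A \mid \neg B}.
\]
The first term is at most $1/q^{M_1}$ by the base case. Under $\neg B$, the function $fP_1$ is $\Delta'$-far from $\RM_q(n,d+e)$, and the triple $(d+e, \Delta', k-1)$ still satisfies the hypotheses of the theorem (this is precisely where the slack in $e \leq r/4k$ is consumed). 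The inductive hypothesis then bounds the second term by $(k-1)/q^N$, and choosing $N$ so that both $M_1$ and the inductive $N$ dominate $\eta(q,k)\cdot N_q(\lfloor L/10 \rfloor - c_q, e)$ yields the desired bound $k/q^N$. The factor $\eta(q,k)$ in the definition of $N$ is there precisely to absorb the compounded constant-factor loss in the exponent across the $k$ induction levels.

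The main obstacle is the robust base case. The natural strategy is: fix any candidate degree-$(d+e)$ polynomial $h$ and show that, except with tiny probability over $P$, the product $fP$ disagrees with $h$ on many points; a union bound over the $q^{N_q(n,d+e)}$-sized family of such $h$ then supplies the quantitative bound. Concretely, letting $E \subseteq \F_q^n$ be the ``error support'' of $f$ against its closest degree-$d$ polynomial, one needs to show that a random $P \in \RM_q(n,e)$ is nonzero on a significant portion of $E$, so that the error pattern of $f$ survives being multiplied by $P$. Over prime fields $\F_q$ with $q > 2$, the standard Schwartz--Zippel bound is far too weak to control this density, since a random degree-$e$ polynomial over $\F_q$ can vanish on a large fraction of $\F_q^n$; the refined Schwartz--Zippel lemma over general $\F_q$ advertised in the abstract is precisely the technical ingredient required to fill this gap. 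Once the base case is in place, the remaining work---checking that the parameter regime survives all $k$ iterations, and verifying that the $\lfloor L/10 \rfloor$ and $\eta(q,k)$ in the statement correctly dominate the compounded losses---is a bookkeeping exercise.
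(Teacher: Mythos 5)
There is a genuine gap, and it sits exactly where you place your ``main obstacle'': the robust base case for $k=1$ is not something you can take as a starting point, because in this paper it is a \emph{consequence} of the theorem you are trying to prove. \cref{lem:robustdg} (robust soundness of $\Test_{e,1}$) is derived from the $k=2$ instance of the non-robust statement: if $fP$ were $\Delta'$-close to some $F\in\RM_q(n,d+e)$ with noticeable probability, then a second random $P'$ vanishes on the $\leq\Delta'$ disagreement points with probability $q^{-\Delta'}$, forcing $fPP'\in\RM_q(n,d+2e)$ too often. Your induction on $k$ therefore has a chicken-and-egg structure: the base case you need is only known via the $k\geq 2$ case you want to deduce from it. Moreover, the direct attack you sketch for the base case --- a union bound over all candidate $h\in\RM_q(n,d+e)$ --- cannot work quantitatively. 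That family has size $q^{N_q(n,d+e)}$ (roughly $q^{q^n}$), whereas the individual event probabilities are at best around $q^{-N_q(L,e)}$ with $L=\lfloor\log_q\Delta\rfloor$: for the tight example $f=\prod_i(1-x_i^{q-1})$ supported on an $L$-dimensional subspace $V$, the single event ``$fP$ is $0$-close to the zero polynomial'' already has probability $q^{-N_q(L,e)}$, which is astronomically larger than $q^{-N_q(n,d+e)}$. Also, showing that $P$ is nonzero on much of the error set $E$ of $f$ does not by itself show $fP$ is far from $\RM_q(n,d+e)$: the surviving error pattern is measured against $gP$ for the closest degree-$d$ polynomial $g$, not against an arbitrary degree-$(d+e)$ polynomial.

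The paper's actual route keeps $k$ fixed and inducts on the number of variables: \cref{lem:HSS} (from Haramaty--Shpilka--Sudan) supplies a hyperplane direction along which every restriction of $f$ stays $\Delta/q^3$-far; $f$ and the $P_i$ are expanded in the basis $b_i(X)$ of \cref{def:basis} to get the upper-triangular decomposition of \cref{lem:ut-decomp,lem:prod-basis}; and \cref{lem:multilineqns} reduces each conditional event to its set-multilinear part $Q_{(\ell,\ldots,\ell)}f|_{X_{n_1}=\xi_\ell}$, which the induction hypothesis handles. Crucially, this induction only goes through when $f$ is far from $\RM_q(n,d+r/4)$ (Case 1); the complementary Case 2, where $f$ is close to a polynomial of degree $d'$ slightly above $d$, is handled separately by the Schwartz--Zippel extension (\cref{lem:fmaxmon,cor:fmaxmon-new}), which shows $FP_1\cdots P_k$ retains degree exactly $d'+ek$ with high probability. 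Your proposal engages with none of this machinery, and the one concrete mechanism it does propose (the union bound) fails; so the argument as outlined cannot be completed.
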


\begin{remark}
\begin{itemize}
\item To appreciate the parameters of \cref{thm:test-e-k-new}, it is
  instructive to lower bound the probability
  $\prob{P_1,\ldots,P_k}{fP_1P_2\cdots P_k \in \RM_q(n,d+ek)}$ for some fixed $f$. Let $L$ and $\Delta$ be positive
  integers such that $\Delta=q^L$. Let $f:\F_q^n \to \F$ be the
  function defined  as follows:
$$f(x_1,\dots,x_n) = \prod_{i=1}^{n-L} \left( 1-x_i^{q-1}\right).$$
By definition, $f\in \RM_q(n,D)$ where $D:= (n-L)(q-1)$. Hence, it has
distance $\Delta=q^L$ from any other degree-$D$ polynomial (see
\cref{fac:ring}) and hence,
also from all functions in $\RM_q(n,d)$ provided $D \geq d$. Observe
that $f$ is a function which is one on the $L$-dimensional subspace $V:=\{x_1
= x_2 = \cdots = x_{n-L} =0\}$ and 0 elsewhere. Let us consider what
happens when we multiply $f$ with $P_1P_2\cdots P_k$ where each of the
$P_i$ are random functions in $\RM_q(n,e)$. Let $E_i$ denote the event
that $P_i$ vanishes on the subspace $V$. We first note that $\Pr[E_i]
= q^{-N_q(L,e)}$. We then observe that if any of the events
$E_i$ happen, then $fP_1P_2\cdots P_k \equiv 0$. Hence, the
$\prob{P_1,\ldots,P_k}{fP_1P_2\cdots P_k \in \RM_q(n,d+ek)}$ is lower
bounded by the probability $\prob{}{\exists i, \; E_i} \approx
kq^{-N_q(L,e)}$. \cref{thm:test-e-k-new} states that this is roughly the largest that it can be.
\item The constant $c_q$ is obtained from a result of Haramaty,
  Shpilka and Sudan~\cite{HaramatySS2013} (see statement of \cref{lem:HSS}). 
\item The quantity $N_q(n,d)$ is the number of distinct monomials in
  $n$ variables of individual degree at most $(q-1)$ and total degree
  at most $d$. This is certainly lower bounded by the number of
  monomials in $\lfloor d/(q-1) \rfloor$ variables of individual
  degree at most $q-1$, which is exactly $q^{\lfloor d/(q-1)
    \rfloor}$. Plugging this bound of $N_q$ into
  \cref{thm:test-e-k-new} yields the following corollary.
\end{itemize}
\end{remark}

\begin{corollary}
\label{thm:test-e-k}
Let $q,k\in\mathbb{N}$ be constants with $q$ prime and $\varepsilon,\delta\in (0,1)$ be arbitrary constants. Let
$n,d,r,\Delta,e\in\mathbb{N}$ be such that $r=(q-1)n-d$,
$q^{\varepsilon r}\leq \Delta \leq q^{r/4(q-1)-2}$, and
$\delta r \leq e\leq r/4k$. Then, given any $f\in \mc{F}_q(n)$ that
is $\Delta$-far from $\RM_q(n,d)$ and for $P_1,\ldots,P_k$ chosen
independently and uniformly at random from $\RM_q(n,e)$, we have
\[
\prob{P_1,\ldots,P_k}{fP_1P_2\cdots P_k \in \RM_q(n,d+ek)} \leq \frac1{q^{q^{\Omega(r)}}},
\]
where the $\Omega(\cdot)$ above hides a constant depending on $k,q,\delta,\varepsilon$.
\end{corollary}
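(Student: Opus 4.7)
The plan is to reduce directly to Theorem~\ref{thm:test-e-k-new}, which already bounds the probability in question by $k/q^N$ with $N \geq \eta(q,k)\cdot N_q(\lfloor L/10\rfloor - c_q,\, e)$ and $L = \lfloor \log_q \Delta \rfloor$. Since $q$ and $k$ are constants, $\eta(q,k)$ is a positive constant, so the entire task is to show $N_q(\lfloor L/10 \rfloor - c_q,\, e) \geq q^{\Omega(r)}$ in the parameter regime of the corollary.

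First, I would use the hypothesis $\Delta \geq q^{\varepsilon r}$ to conclude $L \geq \varepsilon r - 1$, and hence (for $r$ at least a suitable constant depending on $q,\varepsilon$) $\lfloor L/10 \rfloor - c_q \geq (\varepsilon/20)\, r$. Next I would invoke the elementary monomial-counting bound $N_q(m,d) \geq q^{\min(m,\lfloor d/(q-1) \rfloor)}$ already recorded in the remark following Theorem~\ref{thm:test-e-k-new}. (One case takes all $q^m$ functions on $\F_q^m$ when $m \leq \lfloor d/(q-1) \rfloor$; the other picks any $\lfloor d/(q-1) \rfloor$ variables and uses all products of individual degree $\leq q-1$ in them, which together have total degree $\leq d$.) Applying this bound with $m = \lfloor L/10 \rfloor - c_q \geq (\varepsilon/20)\, r$ and $d = e \geq \delta r$ yields
\[
N_q(\lfloor L/10 \rfloor - c_q,\, e) \;\geq\; q^{\min((\varepsilon/20)\, r,\; \delta r/(q-1) - 1)} \;=\; q^{\Omega(r)},
\]
where the hidden constant in $\Omega(\cdot)$ depends on $q,\varepsilon,\delta$.

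Combining these gives $N \geq \eta(q,k)\cdot q^{\Omega(r)} = q^{\Omega(r)}$ (the constant factor $\eta(q,k)$ can be absorbed into the hidden exponent by shrinking it slightly once $r$ is large enough), whence Theorem~\ref{thm:test-e-k-new} produces the probability bound $k\cdot q^{-q^{\Omega(r)}}$. For $r$ sufficiently large the $k$ prefactor is swallowed by a further mild shrink of the hidden constant, yielding the claimed $1/q^{q^{\Omega(r)}}$. There is no serious obstacle in this derivation: the corollary is essentially a parameter-cleanup of Theorem~\ref{thm:test-e-k-new} via the elementary lower bound on $N_q$, and all the substantive work lies in the proof of the theorem itself.
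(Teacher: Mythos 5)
Your proposal is correct and follows essentially the same route as the paper, which derives the corollary from \cref{thm:test-e-k-new} by plugging in the elementary lower bound $N_q(m,e)\geq q^{\min(m,\lfloor e/(q-1)\rfloor)}$ and using $\Delta\geq q^{\varepsilon r}$, $e\geq \delta r$ to get $N\geq q^{\Omega(r)}$. The handling of the constants $\eta(q,k)$ and $k$ by shrinking the hidden constant for large $r$ is the standard (and intended) reading of the $\Omega(\cdot)$ notation here.
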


We then show that the above corollary can be used to prove the
following robust version of \cref{lem:dg}, answering an open question
of Dinur and Guruswami~\cite{DinurG2015}.

\begin{theorem}[Robust soundness of $\Test_{e,1}$]\label{lem:robustdg}
Let $q\in\mathbb{N}$ be a constant with $q$ prime and $\varepsilon,\delta\in (0,1)$ be arbitrary constants. Let $n,d,r,\Delta, e\in\mathbb{N}$ be such that $r=(q-1)n-d$,
$q^{\varepsilon r}\leq \Delta \leq q^{r/4(q-1)-2}$, and
$\delta r \leq e\leq r/8$. Then, there is a $\Delta' = q^{\Omega(r)}$ such that given any $f\in \mc{F}_q(n)$ that
is $\Delta$-far from $\RM_q(n,d)$ and for $P$ chosen uniformly at random from $\RM_q(n,e)$, we have
\[
\prob{P}{\Delta(f\cdot P ,\RM_q(n,d+e)) < \Delta'} \leq \frac{1}{q^{q^{\Omega(r)}}}\enspace,
\]
where the $\Omega(\cdot)$ above hide constants depending on $q,\delta,\varepsilon$.
\end{theorem}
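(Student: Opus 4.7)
The plan is to deduce the robust soundness from the exact soundness of $\Test_{e,2}$ established in Corollary~\ref{thm:test-e-k}; note that the hypothesis $e \le r/8$ is precisely the condition $e \le r/(4k)$ needed for $k=2$. The intuition is that if $f\cdot P$ happens to be very close (say $\Delta'$-close) to a codeword $g_P \in \RM_q(n,d+e)$, then multiplying by a second independent random $P_2 \in \RM_q(n,e)$ should with noticeable probability push the tiny leftover error into $\RM_q(n,d+2e)$, contradicting the doubly exponentially small exact-soundness bound.

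More precisely, let $p := \prob{P}{\Delta(f\cdot P, \RM_q(n,d+e)) < \Delta'}$, which we wish to bound from above. For each ``good'' $P$ (one for which $f\cdot P$ is $\Delta'$-close to the code), I will fix some nearest codeword $g_P \in \RM_q(n,d+e)$ and set $h_P := f\cdot P - g_P$, so that $|\Supp(h_P)| < \Delta'$. Applying $\Test_{e,2}$ with independent random $P_1, P_2$, the product decomposes as $fP_1P_2 = g_{P_1}P_2 + h_{P_1}P_2$, and since $g_{P_1}P_2 \in \RM_q(n,d+2e)$ automatically, the event ``$fP_1P_2 \in \RM_q(n,d+2e)$'' becomes equivalent to the event $h_{P_1}P_2 \in \RM_q(n,d+2e)$.

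The main technical step will be the following simple dimension bound: for any fixed function $h$ with $|\Supp(h)| \le \Delta'$,
\[
\prob{P_2}{h\cdot P_2 \in \RM_q(n,d+2e)} \;\ge\; q^{-\Delta'}.
\]
Indeed, $hP_2$ is supported on $\Supp(h)$, so the linear map $P_2 \mapsto hP_2 \pmod{\RM_q(n,d+2e)}$ from $\RM_q(n,e)$ into $\mc{F}_q(n)/\RM_q(n,d+2e)$ has image of dimension at most $|\Supp(h)| \le \Delta'$, and hence its kernel has density at least $q^{-\Delta'}$. Summing over good $P_1$ then gives $\prob{P_1,P_2}{fP_1P_2 \in \RM_q(n,d+2e)} \ge p\cdot q^{-\Delta'}$, while Corollary~\ref{thm:test-e-k} bounds this same quantity above by $q^{-q^{\Omega(r)}}$, whence $p \le q^{\Delta' - q^{\Omega(r)}}$.

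The only remaining delicacy---and the main place to exercise care---is parameter tuning. Since the saving $q^{-q^{\Omega(r)}}$ from Corollary~\ref{thm:test-e-k} is \emph{doubly} exponential in $r$ while our loss $q^{\Delta'}$ is only \emph{singly} exponential, any choice $\Delta' = q^{\alpha r}$ with $\alpha > 0$ a sufficiently small constant (depending on $q,\varepsilon,\delta$) simultaneously satisfies $\Delta' = q^{\Omega(r)}$ as required by the theorem and $\Delta' \le q^{\Omega(r)}/2$, yielding the claimed bound $p \le q^{-q^{\Omega(r)}}$. Beyond this parameter bookkeeping I do not anticipate any further obstacle.
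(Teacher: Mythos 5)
Your proposal is correct and follows essentially the same route as the paper: reduce to the exact soundness of $\Test_{e,2}$ by decomposing $f P_1 = g_{P_1} + h_{P_1}$ for each lucky $P_1$ and lower-bounding $\prob{P_2}{h_{P_1}P_2\in \RM_q(n,d+2e)}$ by $q^{-\Delta'}$ via linear algebra (the paper bounds the sub-event that $P_2$ vanishes on $\Supp(h_{P_1})$, which is $\le \Delta'$ homogeneous linear conditions, while you bound the rank of the map $P_2\mapsto h_{P_1}P_2 \bmod \RM_q(n,d+2e)$ directly — an immaterial difference). The final parameter choice $\Delta' = q^{\Omega(r)}$ absorbed by the doubly exponential saving is also exactly as in the paper.
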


Equipped with such multiplication-based tests, we can ask if one can
prove the soundness analysis of other related multiplication-based
tests. For instance, consider the following test which checks
correlation of the function $f$ with the square of a random degree-$e$
polynomial. 

\begin{itemize}
\item $\text{Corr-Square}_{e}$: Input $f: \F_3^n \to \F_3$
\begin{itemize}
\item Pick $P\in_R \RM_3(n,e)$. 
\item Accept iff $f \cdot P^2\in \RM_3(n,d+2e)$.
\end{itemize}
\end{itemize}

This test was used by Guruswami~\etal~\cite{GuruswamiHHSV2017} to
prove the hardness of approximately coloring 3-colorable 3-uniform
hypergraphs. However, their analysis was restricted to the squares of
random polynomials. Our next result shows that this can be extended to
any low-degree polynomial of  random polynomials. More precisely, let
$h \in \RM_q(1,k)$ be a \emph{univariate} polynomial of degree exactly $k$ for some $k <
q$. Consider the following test.

\begin{itemize}
\item $\text{Corr-$h$}_{e}$: Input $f: \F_q^n \to \F_q$
\begin{itemize}
\item Pick $P\in_R \RM_q(n,e)$. 
\item Accept iff $f \cdot h(P)\in \RM_q(n,d+ek)$.
\end{itemize}
\end{itemize}

We show that an easy consequence of \cref{thm:test-e-k} proves
the following soundness claim about the test $\text{Corr-$h$}$.
\begin{corollary}[Soundness of $\text{Corr-$h$}_e$]\label{cor:corrq}
Let $q,k\in\mathbb{N}$ be constants with $q$ prime, $k < q$, and let $\varepsilon,\delta\in (0,1)$ be arbitrary constants.\footnote{The assumption $k<q$ is necessary here since otherwise $h(P)$ could be $P^q-P$, which is always $0$.} Let
$n,d,r,\Delta,e\in\mathbb{N}$ be such that $r=(q-1)n-d$, 
$q^{\varepsilon r}\leq \Delta \leq q^{r/4(q-1)-2}$, and
$\delta r \leq e\leq r/4k$. Let $h \in \RM_q(1,k)$ be a univariate
polynomial of degree
exactly $k$. Then, given any $f\in \mc{F}_q(n)$ that
is $\Delta$-far from $\RM_q(n,d)$ and for $P$ chosen uniformly at random from $\RM_q(n,e)$, we have
\[
\prob{P }{f\cdot h(P) \in \RM_q(n,d+ek)} \leq \frac1{q^{q^{\Omega(r)}/2^k}},
\]
where the $\Omega(\cdot)$ above hides a constant depending on $k,q,\delta,\varepsilon$.
\end{corollary}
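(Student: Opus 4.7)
The plan is to reduce \cref{cor:corrq} to the soundness of $\Test_{e,k}$ (\cref{thm:test-e-k}) via a classical discrete-derivative identity, and to absorb the resulting dependencies using a Gowers-uniformity lower bound.

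Write $h(X) = cX^k + (\text{lower order})$ with $c \in \F_q^\ast$. The identity
\[
\sum_{S \subseteq [k]} (-1)^{k-|S|}\, h\!\left(Y + \sum_{i \in S} Z_i\right) \;=\; k!\cdot c\cdot Z_1 Z_2 \cdots Z_k
\]
holds in $\F_q[Y, Z_1, \ldots, Z_k]$ for any polynomial $h$ of degree $\leq k$, and since $k < q$ the constant $k! c$ is nonzero in $\F_q$. Substituting $Y = P$ and $Z_i = Q_i$ for independent uniform $P, Q_1, \ldots, Q_k \in \RM_q(n,e)$ and multiplying through by $f$ yields the key implication: \emph{if for every $S \subseteq [k]$ the function $f \cdot h(P + \sum_{i \in S} Q_i)$ lies in $\RM_q(n, d+ek)$, then so does $f \cdot Q_1 \cdots Q_k$.} By \cref{thm:test-e-k} (whose hypotheses $\delta r \leq e \leq r/4k$ and $q^{\varepsilon r} \leq \Delta \leq q^{r/4(q-1)-2}$ are inherited from those of \cref{cor:corrq}), the probability of the latter event is at most $1/q^{q^{\Omega(r)}}$.

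Let $A := \{P \in \RM_q(n,e) \mid f \cdot h(P) \in \RM_q(n, d+ek)\}$ and let $p$ denote its density; this $p$ is the quantity to bound. Each shift $P_S := P + \sum_{i \in S} Q_i$ is marginally uniform in $\RM_q(n,e)$, so the implication above gives
\[
\Pr\!\left[\bigcap_{S \subseteq [k]} \{P_S \in A\}\right] \;\leq\; \Pr\!\left[f \cdot Q_1 \cdots Q_k \in \RM_q(n, d+ek)\right] \;\leq\; \frac{1}{q^{q^{\Omega(r)}}}.
\]
On the other hand, the left-hand side is by definition the $2^k$-th power of the Gowers $U_k$ uniformity norm of the indicator $\mathbbm{1}_A$ on the finite abelian group $\RM_q(n,e)$. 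By the standard monotonicity of Gowers norms (a short induction using Cauchy-Schwarz and Jensen's inequality), $\|\mathbbm{1}_A\|_{U_k} \geq \|\mathbbm{1}_A\|_{U_1} = p$, so the left-hand side is at least $p^{2^k}$. Combining, $p^{2^k} \leq 1/q^{q^{\Omega(r)}}$, whence $p \leq 1/q^{q^{\Omega(r)}/2^k}$, which is the claimed bound.

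The main obstacle is circumventing the lack of full independence among the shifts $\{P_S\}_{S \subseteq [k]}$: only $k+1$ of them (e.g.\ those indexed by $\emptyset$ and the singletons) are jointly uniform, so a naive product-of-probabilities argument fails, while a naive union bound yields only the trivial bound $p$. The crucial observation is that the joint distribution of $(P_S)_{S \subseteq [k]}$ is \emph{exactly} the one appearing in the definition of $\|\mathbbm{1}_A\|_{U_k}^{2^k}$, and Gowers monotonicity supplies precisely the $p^{2^k}$ lower bound one needs, accounting for the factor $2^k$ in the exponent of the final bound.
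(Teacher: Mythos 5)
Your proof is correct, and it reaches the same endpoint --- a reduction to \cref{thm:test-e-k} at the cost of a $2^k$-th root --- but by a genuinely different route from the paper's. The paper first converts the acceptance probability into an exponential sum over the dual code of $\RM_q(n,d+ek)$ and then performs the $k$-fold Weyl differencing \emph{inside the exponent} via its ``squaring trick,'' one derivative and one Cauchy--Schwarz/Jensen step at a time; your identity $\sum_{S\subseteq[k]}(-1)^{k-|S|}h\bigl(Y+\sum_{i\in S}Z_i\bigr)=k!\,c\,Z_1\cdots Z_k$ is precisely the fully expanded form of that differencing. You instead stay entirely at the level of events: the identity, together with linearity of $\RM_q(n,d+ek)$ and the fact that $k!\,c\neq 0$ in $\F_q$ (the only place $k<q$ enters, in both arguments), yields the deterministic implication $\bigcap_S\{P_S\in A\}\Rightarrow fQ_1\cdots Q_k\in \RM_q(n,d+ek)$, and the lower bound of $p^{2^k}$ on the probability of the Gowers-cube pattern is the standard monotonicity $\|\mathbbm{1}_A\|_{U_1}\leq \|\mathbbm{1}_A\|_{U_k}$, i.e., the same $k$ applications of Cauchy--Schwarz, applied to the indicator $\mathbbm{1}_A$ rather than to a character sum. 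What your version buys is that it dispenses with the dual-code characterization and Fourier duality altogether and makes the source of the $2^k$ in the final exponent transparent; what the paper's version buys is a self-contained computation that never needs the Gowers-norm formalism (only the two-line squaring identity). The hypotheses of \cref{thm:test-e-k} are indeed inherited verbatim as you claim, each $P_S$ is marginally uniform, the $Q_i$ are independent and uniform in $\RM_q(n,e)$ as that corollary requires, and the final bounds coincide, so there is no gap.
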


\paragraph*{A generalization of the Schwartz-Zippel lemma over $\F_q$.} A special case of \cref{thm:test-e-k-new} is already quite interesting. This case corresponds to when the function $f$ is a polynomial of degree exactly $d'$, for some $d'$ slightly larger than $d$. (It is quite easy to see by the Schwartz-Zippel lemma over $\F_q$ --- which guarantees that a non-zero polynomial of low degree is non-zero at many points --- that this $f$ is far from $\RM_q(n,d)$.) In this case, we would expect that when we multiply $f$ with $k$ random polynomials $P_1,\ldots,P_k\in \RM_q(n,e)$, that the product $fP_1\cdots P_k$ is a polynomial of degree exactly $d'+ek$ and hence not in $\RM_q(n,d+ek)$ with high probability. 

We are able to prove a tight version of this statement (\cref{lem:fmaxmon}). For every degree $d'$, we find a polynomial $f$ of degree exactly $d'$ that maximizes the probability that $fP_1$ has degree $< d'+s$ for any parameter $s\leq e$. This polynomial turns out to be the same polynomial for which the Schwartz-Zippel lemma over $\F_q$ is tight. This is not a coincidence: it turns out that our lemma is a generalization of the Schwartz-Zippel lemma over $\F_q$ (see \cref{sec:SZconn}).

Given  the utility of the Schwartz-Zippel lemma in Coding theory and Theoretical Computer Science, we think this statement may be of independent interest.

\subsection{Proof ideas}

The basic outline of the proof of \cref{thm:test-e-k-new} is similar to the proof of \cref{lem:dg} from the work of Dinur and Guruswami~\cite{DinurG2015} which corresponds to \cref{thm:test-e-k-new} in the case that $q=2$ and $k=1$.  We describe this argument in some detail so that we can highlight the variations in our work.

The argument is essentially an induction on the parameters $e, r=n-d,$ and $ \Delta$. As long as $r$ is a sufficiently large constant, \cref{lem:bkssz} can be used~\cite[Lemma 22]{DinurG2015} to show that for any $f\in \mc{F}_2(n)$ that is $\Delta$-far from $\RM_2(n,d)$, there is a variable $X$ such that for each $\alpha \in \{0,1\} = \F_2$, the restricted function $f|_{X=\alpha}$ is $\Delta' = \Omega(\Delta)$-far from $\RM_2(n-1,d)$.\footnote{Actually, \cref{lem:bkssz} implies the existence of a linear function with this property and not a variable. But after a linear transformation of the underlying space, we may assume that it is a variable.}

Now, to argue by induction, we write 
\begin{equation}
\label{eq:fnPdecomp}
f = Xg + h \text{ and } P_1 = XQ_1 + R_1
\end{equation}
where $g,h,Q_1,R_1$  depend on $n-1$ variables, $Q_1$ is a random polynomial of degree $\leq e-1$ and $R_1$ is a random polynomial of degree $\leq e$. Using the fact that $X^2 = X$ over $\F_2$, we get $fP_1= X((g+h)Q_1 + gR_1) + hR_1$. 

Since $f|_{X=\alpha}$ is $\Delta'$-far from $\RM_2(n-1,d)$, we see
that both $h$ and $g+h$ are $\Delta'$-far from $\RM_2(n-1,d)$. To
apply induction, we note that $fP_1\in \RM_2(n,d+e)$ iff $hR_1\in
\RM_2(n-1,d+e)$ and $(g+h)Q_1 +
hR_1\in \RM_2(n-1,d+e-1)$; we call these events $\mc{E}_1$ and $\mc{E}_2$ respectively. We bound the
overall probability by $\prob{}{\mc{E}_1}\cdot \prob{}{\mc{E}_2\mid R_1}$ (note that $\mc{E}_1$ depends only on $R_1$).

We first observe that $\prob{}{\mc{E}_1}$ can be immediately bounded
using the induction hypothesis since $h$ is $\Delta'$-far from
$\RM_q(n-1,d+e)$ and $R_1$ is uniform over $\RM_q(n-1,e)$. The
second term $\prob{}{\mc{E}_2\mid R_1}$ can also be bounded by the
induction hypothesis with the following additional argument. We argue that (for
any fixed $R_1$) the probability that
$(g+h)Q_1 + gR_1\in \RM_2(n-1,d+e-1)$ is bounded by the probability
that $(g+h)Q_1\in \RM_2(n-1,d+e-1)$: this follows from the fact
that the number of solutions to any system of linear equations is
bounded by the number of solutions of the corresponding homogeneous
system (obtained by setting the constant term in each equation to
$0$). Hence, it suffices to bound the probability that
$(g+h)Q_1\in \RM_2(n-1,d+e-1)$, which can be bounded by the
induction hypothesis since $(g+h)$ is $\Delta'$-far from
$\RM_2(n-1,d)$ and $Q_1$ is uniform over $\RM_2(n-1,e-1)$ and we
are done.

Though our proofs follow the above template, we need to deviate from
the proof above in some important ways which we elaborate below.

The first is the decomposition of $f$ and $P_1$ from
\eqref{eq:fnPdecomp} obtained above, which yields two events
$\mc{E}_1$ and $\mc{E}_2$, the first of which depends only on $R_1$
and the second on both $Q_1$ and $R_1$. For $q > 2$, the standard
monomial decomposition of polynomials does not yield such a nice
``upper triangular'' sequence of events. So we work with a different
polynomial basis to achieve this. This choice of basis is closely
related to the polynomials for which the Schwartz-Zippel lemma over
$\F_q$ is tight. While such a basis was used in the special case of
$q=3$ in the work of Guruswami \etal\ ~\cite{GuruswamiHHSV2017}
(co-authored by the authors of this work), it was done in a somewhat
ad-hoc way. Here, we give, what is in our opinion a more transparent
construction that additionally works for all $q$.

Further modifications to the Dinur-Guruswami argument are required to handle $k>1$. We illustrate this with the example of $q=2$ and $k=2$. Decomposing as in the Dinur-Guruswami argument above, we obtain $f = Xg + h$, $P_1= XQ_1 + R_1$, and $P_2 = XQ_2 + R_2$. Multiplying out, we get
\ifIEEEtr
$fP_1P_2 = XQ + hR_1R_2$ where,
\[
Q:= Q_1Q_2(g+h) + (g+h)(Q_1R_2 + Q_2R_1) + gR_1R_2\enspace.
\]
\else
\[
fP_1P_2 = X(\underbrace{Q_1Q_2(g+h) + (g+h)(Q_1R_2 + Q_2R_1) + gR_1R_2}_{ Q:=}) + hR_1R_2\enspace.
\]
\fi

Bounding the probability that $fP_1P_2\in \RM_2(n,d+2e)$ thus reduces to bounding the probability of event that $hR_1R_2\in \RM_2(n-1,d+2e)$ --- $\mc{E}_1$ depending only on $R_1$ and $R_2$ --- and then the probability that $Q\in \RM_2(n-1,d+2e-1)$ --- denoted $\mc{E}_2$ --- given any fixed $R_1$ and $R_2$. The former probability can be bounded using the induction hypothesis straightforwardly.

By a reasoning similar to the $k=1$ case, we can reduce bounding $\prob{}{\mc{E}_2\mid R_1,R_2}$ to the probability that $Q_1Q_2(g+h)\in \RM_2(n-1,d+2e-1)$. However, now we face a problem. Note that we have $g+h=f|_{X = 1}$ is $\Delta'$-far from $\RM_2(n-1,d)$ and $Q_1,Q_2\in \RM_2(n-1,e-1)$. Thus, the induction hypothesis only allows us to upper bound the probability that $Q_1Q_2(g+h)\in \RM_2(n-1,d+2e-2)$ which is not quite the event that we want to analyze. Indeed, if $f$ is a polynomial of degree exactly $d+1$, then the polynomial $Q_1Q_2(g+h)\in \RM_2(n-1,d+2e-1)$ with probability $1$. A similar problem occurs even if $f$ is a polynomial of degree $d'$ slightly larger than $d$ or more generally, when $f$ is \emph{close} to some polynomial of degree $d'$.

This naturally forces us to break the analysis into two cases. In the first case, we assume not just that $f$ is far from $\RM_2(n,d)$ but also from $\RM_2(n,d')$ but for some $d'$ a suitable parameter larger than $d$. In this case, we can modify the proof of Dinur and Guruswami to bound the probability that $fP_1P_2\in \RM_2(n,d+2e)$ as claimed in \cref{thm:test-e-k-new}. In the complementary case when $f$ is close to some polynomial $F\in \RM_2(n,d')$, we can essentially assume that $f$ \emph{is} a polynomial of degree exactly $d'$. In this case, we can use the extension of Schwartz-Zippel lemma referred to above to show that with high probability $fP_1P_2$ is in fact a polynomial of degree exactly $d'+2e$ and is hence not of degree $d+2e < d'+2e$.

\subsection{Organization}

We begin with some notation and definitions in \cref{sec:prelim}. We
prove the extension of the Schwartz-Zippel lemma (\cref{lem:fmaxmon})
in \cref{sec:SZext} and then \cref{thm:test-e-k-new} in
\cref{sec:testek}. Finally, we give two applications of \cref{thm:test-e-k} in \cref{sec:applns}: one to proving a robust version of the above test (thus resolving a question of Dinur and Guruswami~\cite{DinurG2015}) and the other to proving \cref{cor:corrq}.

\section{Preliminaries}
\label{sec:prelim}

For a prime power $q$, let $\F_q$ denote the finite field of size
$q$. We use $\F_q[X_1,\ldots,X_n]$ to denote the standard polynomial
ring over variables $X_1,\ldots,X_n$ and $\RM_q(n)$ to denote the
ring $\F_q[X_1,\ldots,X_n]/\langle X_1^q-X_1,\ldots,X_n^q-X_n\rangle$.

We can think of the elements of $\RM_q(n)$ as elements of
$\F_q[X_1,\ldots,X_n]$ of individual degree at most $q-1$ in a natural
way. Given $P,Q\in \RM_q(n)$, we use $P\cdot Q$ or $PQ$ to denote
their product in $\RM_q(n)$. We use $P*Q$ to denote their product
in $\F_q[X_1,\ldots, X_n]$.

Given a set $S\subseteq \F_q^n$ and an $f\in \RM_q(n)$, we use $f|_S$ to denote the restricted function on the set $S$. Typically, $S$ will be specified by a polynomial equation. One special case is the case when $S$ is a hyperplane: i.e., there is a non-zero homogeneous degree-$1$ polynomial $\ell(X)\in \RM_q(n)$ and an $\alpha\in \F_q$ such that $S = \{x\mid \ell(x) = \alpha\}$. In this case, it is natural to think of $f|_{\ell(X) = \alpha} = f|_S$ as an element of $\RM_q(n-1)$ by applying a linear transformation that transforms $\ell(X)$ into the variable $X_n$  and then setting $X_n = \alpha$.

For $d\geq 0$, we use $\RM_q(n,d)$ to denote the polynomials in $\RM_q(n)$ of degree at most $d$. 

The following are standard facts about the ring $\RM_q(n)$ and the space of functions mapping $\F_q^n$ to $\F_q$.

\begin{fact}
\label{fac:ring}
\begin{enumerate}
\item Consider the ring of functions mapping $\F_q^n$ to $\F_q$ with addition and multiplication defined pointwise. This ring is isomorphic to $\RM_q(n)$ under the natural isomorphism that maps each polynomial $P\in \RM_q(n)$ to the function (mapping $\F_q^n$ to $\F_q$) represented by this polynomial.
\item In particular, each function $f:\F_q^n\rightarrow\F_q$ can be represented uniquely as a polynomial from $\RM_q(n)$. As a further special case, any non-zero polynomial from $\RM_q(n)$ represents a non-zero function $f:\F_q^n\rightarrow\F_q$.
\item (Schwartz-Zippel lemma over $\F_q$~\cite{KasamiLP1968}) Any non-zero polynomial from $\RM_q(n,d)$ is non-zero on at least $q^{n-a-1}(q-b)$ points from $\F_q^n$ where $d = a(q-1)+b$  and $0\leq b < q-1$. 
\item In particular, if $f,g\in \RM_q(n,d)$ differ from each other at at most $\Delta < q^{n-a-1}(q-b)$ points, then $f=g$.
\item (A probabilistic version of the Schwartz-Zippel lemma~(see, e.g., \cite{HaramatySS2013})) It follows from the above that given a non-zero polynomial $g\in \RM_q(n,d)$, then $g(x)\neq 0$ at a uniformly random point of $\F_q^n$ with probability at least $q^{-d/(q-1)}$. Similarly, if $f,g\in \RM_q(n,d)$ are distinct, then for uniformly random $x\in \F_q^n$, the probability  that $f(x) \neq g(x)$ is at least $q^{-d/(q-1)}$.
\end{enumerate}
\end{fact}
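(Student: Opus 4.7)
The plan is to prove the five parts in order using standard algebraic and combinatorial arguments. For part~(1), I would consider the evaluation map $\phi: \F_q[X_1,\ldots,X_n] \to (\F_q^n \to \F_q)$ sending a polynomial to its induced function. This is a ring homomorphism, and since $\alpha^q = \alpha$ for every $\alpha \in \F_q$, each generator $X_i^q - X_i$ of the defining ideal lies in the kernel, so $\phi$ factors through a homomorphism $\bar\phi: \RM_q(n) \to (\F_q^n \to \F_q)$. To see that $\bar\phi$ is surjective, note that for each $a \in \F_q^n$ the polynomial $\delta_a(X) := \prod_{i=1}^n (1 - (X_i - a_i)^{q-1}) \in \RM_q(n)$ evaluates to $1$ at $a$ and to $0$ elsewhere, so arbitrary functions are linear combinations of the $\delta_a$. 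Both sides are $\F_q$-vector spaces of dimension $q^n$ --- the domain spanned by the $q^n$ monomials of individual degree at most $q-1$, and the codomain by the $q^n$ point indicators --- so surjectivity forces $\bar\phi$ to be a bijection. Part~(2) is then immediate.

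For part~(3), I would induct on $n$, writing $d = a(q-1) + b$ with $0 \leq b \leq q-2$. The base case $n = 1$ forces $a = 0$ and $b = d$, and a nonzero univariate polynomial of degree $\leq d$ has at most $d$ roots, matching the bound $q - b$. For the inductive step, decompose $P = \sum_{i=0}^{q-1} X_n^i \, Q_i(X_1,\ldots,X_{n-1})$ with $Q_i \in \RM_q(n-1, d-i)$, and let $j$ be the largest index for which $Q_j \neq 0$. By induction $Q_j$ is non-zero on at least the claimed number of points of $\F_q^{n-1}$, and for each such $x'$ the univariate $P(x', X_n)$ has leading coefficient $Q_j(x') \neq 0$, hence is non-zero at $\geq q - j$ values of $X_n$. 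A short calculation splitting on $j \leq b$ (in which case the decomposition of $d - j$ keeps $a$ unchanged) versus $j > b$ (in which case one re-expresses $d - j = (a-1)(q-1) + (q-1+b-j)$) reduces the required inequality in both cases to $(j-b)(q-1-j) \geq 0$, which holds since $0 \leq j \leq q-1$.

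Part~(4) is immediate from part~(3) applied to $f - g \in \RM_q(n,d)$. For part~(5), dividing the count from part~(3) by $q^n$ yields a probability lower bound of $(q-b)\, q^{-a-1}$; substituting $t := q-1-b \in \{0,1,\ldots,q-1\}$ reduces the comparison with $q^{-d/(q-1)}$ to the inequality $t + 1 \geq q^{t/(q-1)}$, which follows because the function $t \mapsto (t+1) - q^{t/(q-1)}$ is concave on $[0, q-1]$ and vanishes at both endpoints. The two-polynomial version for distinct $f, g$ follows by applying the single-polynomial bound to $f - g$.

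The main obstacle is the combinatorial case analysis in part~(3): one has to track how the quotient-remainder decomposition of $d$ shifts under subtracting $j$, and verify the resulting inequality in both cases. Everything else amounts to dimension counting, a direct reduction from (3), or a one-variable convexity argument.
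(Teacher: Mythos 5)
The paper treats this entire statement as a known \emph{Fact} and gives no proof: parts (1)--(2) are folklore, part (3) is cited to Kasami--Lin--Peterson, and part (5) to Haramaty--Shpilka--Sudan. So there is no in-paper argument to compare against; your proposal supplies the standard proofs, and they are essentially correct. The evaluation-map plus dimension-count argument for (1)--(2), the induction on $n$ via the decomposition $P=\sum_i X_n^i Q_i$ with $j$ the top nonvanishing index for (3), and the reduction of (5) to the concavity inequality $t+1\geq q^{t/(q-1)}$ on $[0,q-1]$ are all the classical routes. One algebraic slip to fix in (3): the two cases do \emph{not} both reduce to $(j-b)(q-1-j)\geq 0$. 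When $j\leq b$ the required inequality is $(q-b+j)(q-j)\geq q(q-b)$, whose difference of sides is $j(b-j)$, so that case reduces to $j(b-j)\geq 0$; only the case $j>b$ reduces to $(j-b)(q-1-j)\geq 0$. Both quantities are nonnegative under the respective case hypotheses, so the proof goes through, but as written the unified form would be false for $0<j<b$. A second, cosmetic point: in the base case $n=1$ the degree $d=q-1$ gives $a=1$, $b=0$ rather than $a=0$, $b=d$ (the bound is then $1$, which holds trivially by part (2)), so your parenthetical claim there needs that one extra subcase.
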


From now on, we will use without additional comment the fact that functions from $\F_q^n$ to $\F_q$ have unique representations as multivariate polynomials where the individual degrees are bounded by $q-1$.

Recall that $m_1*m_2$ denotes the product of
these monomials in the ring $\F_q[X_1,\ldots,X_n]$ while $m_1\cdot m_2$ denotes their product in $\RM_q(n) = \F_q[X_1,\ldots,X_n]/\langle X_1^q-X_1,\ldots,X_n^q-X_n \rangle$. We say that monomials $m_1,m_2\in \RM_q(n)$ are \emph{disjoint} if
$m_1*m_2 = m_1\cdot m_2$ (where the latter monomial is interpreted naturally as an element of $\F_q[X_1,\ldots,X_n]$). Equivalently, for each variable $X_i$ ($i\in [n]$), the sum of its degrees in $m_1$ and $m_2$ is less than $q$.

Given distinct monomials $m_1,m_2\in \F_q[X_1,\ldots,X_n]$, we say
that $m_1> m_2$ if either one of the following holds:
$\deg(m_1)> \deg(m_2)$, or $\deg(m_1) = \deg(m_2)$ and we have
$m_1 = \prod_i X_i^{e_i}$ and $m_2 = \prod_i X_i^{e_i'}$ where for the
least $j$ such that $e_j \neq e_j'$, we have $e_j > e_j'$.

The above is called the \emph{graded lexicographic} order on monomials~\cite{CoxLittleOShea}. This ordering obviously restricts to an ordering on the monomials in $\RM_q(n)$, which are naturally identified as a subset of the monomials of $\F_q[X_1,\ldots,X_n]$. The well-known fact about this monomial ordering we will use is the following.
\begin{fact}[\cite{CoxLittleOShea}]
\label{fac:mon-order}
For any monomials $m_1,m_2,m_3$, we have $m_1 \leq m_2 \Rightarrow m_1*m_3 \leq m_2*m_3$.
\end{fact}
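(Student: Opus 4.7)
The plan is to verify the claim directly from the definition of the graded lexicographic order by writing each monomial in the form $m_i = \prod_{j=1}^n X_j^{e_j^{(i)}}$. Multiplication by $m_3$ in $\F_q[X_1,\ldots,X_n]$ simply adds the exponent vector of $m_3$ coordinate-wise to the exponent vectors of $m_1$ and $m_2$, so nothing is reduced modulo $q$ (this is important: the statement is about the ungraded polynomial ring, not about $\RM_q(n)$).

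First I would handle the trivial case $m_1 = m_2$ separately, giving equality. Then I would split the strict inequality $m_1 < m_2$ according to the two clauses in the definition of ``$<$''. In the first case, $\deg(m_1) < \deg(m_2)$; since $\deg(m_i * m_3) = \deg(m_i) + \deg(m_3)$, this immediately yields $\deg(m_1 * m_3) < \deg(m_2 * m_3)$, so $m_1 * m_3 < m_2 * m_3$ by the degree clause. In the second case, $\deg(m_1) = \deg(m_2)$ and there is a least index $j$ at which the exponent vectors differ, with $e_j^{(1)} < e_j^{(2)}$. After adding the exponent vector of $m_3$ to both, the degrees remain equal, the first $j-1$ coordinates remain equal, and the $j$th coordinates become $e_j^{(1)} + e_j^{(3)} < e_j^{(2)} + e_j^{(3)}$. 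Hence $j$ is still the least index of disagreement and the lexicographic comparison is preserved, giving $m_1 * m_3 < m_2 * m_3$.

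There is no real obstacle here: the statement is essentially the standard fact that graded lexicographic order is a monomial order, and the argument is a straightforward case analysis on the definition. The only thing to be slightly careful about is to emphasize that we are working in $\F_q[X_1,\ldots,X_n]$ rather than in $\RM_q(n)$, so that exponents genuinely add and no wraparound occurs.
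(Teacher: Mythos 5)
Your argument is correct and is exactly the standard textbook verification that graded lexicographic order is a monomial order; the paper itself gives no proof, simply citing Cox--Little--O'Shea for this fact. Your case analysis (degree clause via additivity of degrees under $*$, lex clause via preservation of the least index of disagreement) is complete, and your remark that the claim concerns $*$ in $\F_q[X_1,\ldots,X_n]$ rather than the reduced product in $\RM_q(n)$ correctly identifies the one point where care is needed.
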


Given an $f\in \RM_q(n)$, we use $\Supp(f)$ to denote the set of points $x\in \F_q^n$ such that $f(x)\neq 0$. If $f\neq 0$, we use $\LM(f)$ to denote the largest monomial (w.r.t. ordering defined above) with non-zero coefficient in $f$.

Let $m = \prod_{i\in [n]} X_i^{e_i}$ with $e_i < q$ for each $i$ and let $d = \deg(m)$. For an integer $s \geq 0$, we let
\begin{align*}
U_{s}(m) &:= \{\prod_{j\in [n]}X_j^{e_j'} \mid \sum_j e_j' = d+s \text{ and } \forall j\ q > e_j'\geq e_j,\},\\
D_s(m) &:= \{\prod_{j\in [n]}X_j^{e_j'} \mid  \sum_j e_j' = s \text{ and } \forall j\  e_j' + e_j < q\}.
\end{align*}

Note that the monomials in $D_s(m)$ are precisely the monomials of degree $s$ that are disjoint from $m$. Further, the map $\rho: D_s(m) \rightarrow U_s(m)$ defined by $\rho(m_1) = m_1\cdot m$ defines a bijection between $D_s(m)$ and $U_s(m)$, and hence we have

\begin{fact}
\label{fac:U=D}
For any monomial $m$ and any $s\geq 0$, $|U_s(m)| = |D_s(m)|$.
\end{fact}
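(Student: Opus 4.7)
The fact is essentially immediate from the discussion preceding it; my plan is simply to verify that the map $\rho : D_s(m) \to U_s(m)$ defined by $\rho(m_1) = m_1 \cdot m$ is a well-defined bijection, from which $|U_s(m)| = |D_s(m)|$ follows.

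First I would check well-definedness. Write $m = \prod_{i \in [n]} X_i^{e_i}$ and take any $m_1 = \prod_{i \in [n]} X_i^{a_i} \in D_s(m)$, so $\sum_i a_i = s$ and $a_i + e_i < q$ for every $i$. Since $m_1$ and $m$ are disjoint (by definition of $D_s(m)$), their product in $\mathcal{R}_q(n)$ coincides with their product in $\F_q[X_1,\ldots,X_n]$, so $\rho(m_1) = \prod_i X_i^{a_i + e_i}$. The exponents satisfy $e_i \le a_i + e_i < q$, and the total degree is $s + d$, so $\rho(m_1) \in U_s(m)$.

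Next I would exhibit the inverse. Given $m_2 = \prod_i X_i^{b_i} \in U_s(m)$, set $\sigma(m_2) := \prod_i X_i^{b_i - e_i}$; the conditions $b_i \ge e_i$ and $b_i < q$ guarantee that the exponents $a_i := b_i - e_i$ lie in $[0, q-1]$ and satisfy $a_i + e_i = b_i < q$, and the degrees sum to $(s+d) - d = s$, so $\sigma(m_2) \in D_s(m)$. Direct computation shows $\sigma \circ \rho$ and $\rho \circ \sigma$ are the identities on $D_s(m)$ and $U_s(m)$ respectively, so $\rho$ is a bijection and the claim follows.

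There is no real obstacle here: the only thing to be slightly careful about is the distinction between multiplication in $\mathcal{R}_q(n)$ and in $\F_q[X_1,\ldots,X_n]$, which is exactly what the disjointness condition built into the definition of $D_s(m)$ handles, ensuring that $\rho$ genuinely lands in the intended set of monomials rather than collapsing some variable's exponent modulo $q$.
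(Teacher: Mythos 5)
Your proof is correct and is exactly the argument the paper intends: the paper simply asserts that $\rho(m_1)=m_1\cdot m$ is a bijection from $D_s(m)$ to $U_s(m)$, and you have verified well-definedness and exhibited the inverse $\sigma$, with the disjointness condition correctly handling the distinction between the product in $\RM_q(n)$ and in $\F_q[X_1,\ldots,X_n]$.
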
 

For non-negative integers $s\leq e$, we define
$U_{s,e}(m) := \bigcup_{s\leq t\leq e}U_t(m)$ and
$D_{s,e}(m) := \bigcup_{s\leq t\leq e}D_{t}(m)$. Since
$|U_t(m)| = |D_t(m)|$ for each $t$, we have
$|U_{s,e}(m)| = |D_{s,e}(m)|$.

\subsection{A different basis for $\RM_q(n)$}
\label{sec:basis}

Applying \cref{fac:ring} in the case that $n=1$, it follows that the monomials $\{X^i\mid 0\leq i < q\}$ form a natural basis for the space of all functions from $\F_q$ to $\F_q$. The following is another such basis which is sometimes more suitable for our purposes. 

\begin{definition}[A suitable basis for the space of functions from $\F_q$ to $\F_q$]
\label{def:basis}
Fix a linear ordering $\preceq$ of all the elements of $\F_q$. Let $\xi_0,\ldots,\xi_{q-1}$ be the elements of $\F_q$ according to this ordering. For any $i\in \{0,\ldots,q-1\}$, let $b_i^{\preceq}(X) = \prod_{j < i}(X-\xi_j)$. Note that for $i < q$, $b_i^{\preceq}(X)$ is a non-zero polynomial of degree $i$. In particular, $\{b_i^{\preceq}(X)\mid 0 \leq i < q\}$ is a basis for the space of all functions from $\F_q$ to $\F_q$. Usually, when we apply this definition, the ordering $\preceq$ will be implicitly clear and hence we will use $b_i(X)$ to refer to $b_i^{\preceq}(X)$.
\end{definition}

The following property of this basis will be useful.
\begin{lemma}
\label{lem:basis-property}
Fix any ordering $\preceq$ of $\F_q$ and let $\{b_i(X)\mid 0\leq i < q\}$ be the corresponding basis as in \cref{def:basis}. Then, for any $f:\F_q\rightarrow \F_q$ and $i\in \{0,\ldots,q-1\}$, we have $f(X)\cdot b_i(X) = f(\xi_i)b_i(X) + b_i'(X)$ where $b_i'(x) \in \mathrm{span}\{b_{i+1}(X),\ldots, b_{q-1}(X)\}$.
\end{lemma}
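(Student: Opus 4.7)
The plan is to expand $f(X)\cdot b_i(X)$ in the basis $\{b_0(X),\ldots,b_{q-1}(X)\}$ and determine the coefficients one at a time by evaluating at $\xi_0,\xi_1,\ldots,\xi_i$ in order. Write
\[
f(X)\cdot b_i(X) = \sum_{j=0}^{q-1} c_j b_j(X),
\]
and note the crucial vanishing property built into the definition: $b_j(\xi_\ell) = \prod_{m<j}(\xi_\ell-\xi_m) = 0$ whenever $\ell < j$, and $b_j(\xi_j) = \prod_{m<j}(\xi_j-\xi_m) \neq 0$ since the $\xi_m$ are distinct. In particular $\{b_0,\ldots,b_{q-1}\}$ is ``upper triangular'' with respect to the evaluation points $\xi_0,\ldots,\xi_{q-1}$.

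The idea is to peel off the low coefficients $c_0,\ldots,c_{i-1}$ inductively. Evaluating the identity at $\xi_0$, the left-hand side is $f(\xi_0)\cdot b_i(\xi_0) = 0$ (as $i \geq 1$ forces $b_i(\xi_0)=0$), while on the right only $b_0(\xi_0)=1$ survives, so $c_0 = 0$. Assuming inductively that $c_0 = \cdots = c_{\ell-1} = 0$ for some $\ell < i$, evaluate at $\xi_\ell$: the left-hand side is still $0$ since $b_i(\xi_\ell)=0$, and on the right all terms with $j>\ell$ vanish and all terms with $j<\ell$ vanish by hypothesis, leaving $c_\ell b_\ell(\xi_\ell) = 0$; since $b_\ell(\xi_\ell)\neq 0$, we conclude $c_\ell = 0$. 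This gives $c_0 = \cdots = c_{i-1} = 0$.

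Finally, evaluating at $\xi_i$ kills every $b_j(\xi_i)$ with $j>i$ and every $c_j$ with $j<i$, so the identity reduces to $f(\xi_i)\cdot b_i(\xi_i) = c_i\cdot b_i(\xi_i)$, and dividing by $b_i(\xi_i)\neq 0$ yields $c_i = f(\xi_i)$. Therefore
\[
f(X)\cdot b_i(X) = f(\xi_i)\,b_i(X) + \sum_{j>i} c_j\,b_j(X),
\]
which is exactly the claimed form with $b_i'(X) := \sum_{j>i} c_j b_j(X) \in \mathrm{span}\{b_{i+1},\ldots,b_{q-1}\}$. The case $i=0$ is degenerate but consistent: $b_0(X)=1$, so $f(X)b_0(X) = f(X) = f(\xi_0)\cdot 1 + (f(X)-f(\xi_0))$, and the remainder lies in $\mathrm{span}\{b_1,\ldots,b_{q-1}\}$ since it vanishes at $\xi_0$. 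There is no real obstacle here; the only thing to be careful about is the degenerate $i=0$ case and making sure the inductive peeling uses the strict upper-triangular structure $b_j(\xi_\ell)=0$ for $\ell<j$ together with $b_j(\xi_j)\neq 0$.
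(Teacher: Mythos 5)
Your proof is correct, but it takes a genuinely different route from the paper's. You expand $f(X)\cdot b_i(X)$ in the basis $\{b_0,\ldots,b_{q-1}\}$ and solve for the coefficients by successive evaluation at $\xi_0,\xi_1,\ldots,\xi_i$, using the triangular evaluation structure $b_j(\xi_\ell)=0$ for $\ell<j$ together with $b_j(\xi_j)\neq 0$; this is an interpolation-style (dual) argument that handles an arbitrary $f$ in one shot. The paper instead reduces by linearity to $f(X)=X^k$ and inducts on $k$, with the key algebraic identity $X\cdot b_i(X)=\xi_i b_i(X)+(X-\xi_i)b_i(X)=\xi_i b_i(X)+b_{i+1}(X)$ driving the induction. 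Your approach is arguably more transparent and avoids both the linearity reduction and the induction; the paper's approach produces the explicit recursion $X\cdot b_i=\xi_i b_i+b_{i+1}$ as a byproduct, which is a nice structural fact about the basis but is not needed elsewhere beyond the lemma's statement. Your handling of the degenerate case $i=0$ is fine (and in fact the general peeling argument already covers it vacuously, since there are no coefficients $c_j$ with $j<i$ to eliminate).
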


\begin{proof}
We know that $f(X)$ is a polynomial of degree at most $q-1$ in $X$. By linearity, it suffices to prove the lemma for $f(X) = X^k$ for $0\leq k \leq q-1$. We prove this by induction on $k$. The base case ($k=0$) of the induction is trivial. We also handle the case $k=1$ by noting that
\[
X\cdot b_i(X) = \xi_ib_i(X) + (X-\xi_i)b_i(X) = \xi_i b_i(X) + b_{i+1}(X)
\]
which has the required form.

Now consider $k\in \{2,\ldots,q-1\}$. By the induction hypothesis, we
know that $X^{k-1}\cdot b_i(X) = \xi_i^{k-1}b_i(X) + b_i'(X)$ where
$b_i'(X)\in \mathrm{span}\{b_{i+1}(X),\ldots, b_{q-1}(X)\}$. Hence, we
see that $X^k\cdot b_i(X) = X\cdot \xi_i^{k-1}b_i(X) + Xb_i'(X) =
(X-\xi_i + \xi_i) \cdot \xi_i^{k-1}b_i(X) + Xb_i'(X)$. Expanding we
obtain
\ifIEEEtr
\begin{align*}
X^k\cdot b_i(X) &= \xi_i^k b_i(X) + (X-\xi_i)b_i(X) + Xb_i'(X)\\
&= \xi_i^k b_i(X) + b_{i+1}(X) + Xb_i'(X)\\
& = \xi_i^k b_i(X) + b_i''(X)
\end{align*}
\else
\[
X^k\cdot b_i(X) = \xi_i^k b_i(X) + (X-\xi_i)b_i(X) + Xb_i'(X) = \xi_i^k b_i(X) + b_{i+1}(X) + Xb_i'(X) = \xi_i^k b_i(X) + b_i''(X)
\]
\fi
where $b_i''(X)\in \mathrm{span}\{b_{i+1}(X),\ldots, b_{q-1}(X)\}$ by using the fact that $Xb_i'(X)\in \mathrm{span}\{b_{i+1}(X),\ldots, b_{q-1}(X)\} $, which follows from the case $k=1$. This proves the induction statement and hence also the lemma.
\end{proof}

We now consider functions $f:\F_q^n\rightarrow\F_q$ over $n$ variables $X_1,\ldots,X_n$. As noted above, this space of functions is ring isomorphic to $\RM_q(n)$. We will use an alternate basis for this space also.

We fix an ordering $\preceq$ of $\F_q$ and let $\{b_{i}(X_j)\mid 0\leq  i< q\}$ be the corresponding basis in the variable $X_j$. We refer to functions of the form $\prod_{j\in [n]}b_{i_j}(X_j)$ as \emph{generalized monomials} w.r.t. $\preceq$: we call this set $\mc{B}_q(n)$ (the orderings will be implicit). The \emph{degree} of the monomial $\prod_{j\in [n]}b_{i_j}(X_j)$ is $\sum_{j\in [n]}i_j$. Given a degree parameter $d\in\mathbb{N}$, we let $\mc{B}_q(n,d)$  denote the set of all monomials in $\mc{B}_q(n)$ of degree at most $d$. 

The following fact is easily proved.

\begin{fact}
\label{fac:basis}
\begin{enumerate}
\item For any $n,d\in \mathbb{N}$, the set $\mc{B}_q(n,d)$ is a basis for the space of polynomials in $\RM_q(n,d)$.
\item In particular, the set $\mc{B}_q(n) = \mc{B}_q(n,(q-1)n)$ is a basis for $\RM_q(n)$.
\end{enumerate}
\end{fact}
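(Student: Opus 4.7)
The plan is to prove both items via a triangular relationship between the standard monomial basis of $\RM_q(n)$ and the generalized monomial set $\mc{B}_q(n)$. The first step is to observe that $b_i(X) = \prod_{j<i}(X-\xi_j)$ is a polynomial of the form $X^i + (\text{lower-degree terms in } X)$, and hence expanding the product, for any tuple $(i_1,\ldots,i_n) \in \{0,\ldots,q-1\}^n$ the generalized monomial $\prod_{j\in[n]} b_{i_j}(X_j)$ equals $\prod_{j\in[n]} X_j^{i_j}$ plus an $\F_q$-linear combination of standard monomials $\prod_j X_j^{e_j}$ with $e_j \le i_j$ for every $j$ and strict inequality somewhere. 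A sanity check here is that no term in this expansion has individual degree $\ge q$ in any variable, so the identity is valid in $\RM_q(n)$ without any reduction modulo $\langle X_j^q - X_j\rangle$; this is immediate since each $b_{i_j}(X_j)$ has $X_j$-degree exactly $i_j \le q-1$.

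Two consequences follow. First, each term on the right has total degree at most $\sum_j i_j$, so $\mc{B}_q(n,d) \subseteq \RM_q(n,d)$ for every $d \ge 0$. Second, ordering generalized monomials by the graded-lex order of their leading standard monomials $\prod_j X_j^{i_j}$, the transition matrix expressing $\mc{B}_q(n)$ in terms of the standard monomial basis of $\RM_q(n)$ is lower triangular with $1$'s on the diagonal, hence invertible.

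From these two consequences, item 2 is immediate: the invertibility shows that $\mc{B}_q(n)$ is a basis of $\RM_q(n)$. For item 1, I would count that $|\mc{B}_q(n,d)|$ equals the number of tuples $(i_1,\ldots,i_n) \in \{0,\ldots,q-1\}^n$ with $\sum_j i_j \le d$, which matches $N_q(n,d) = \dim \RM_q(n,d)$ by the very definition of $N_q(n,d)$. Since $\mc{B}_q(n,d)$ is a subset of the basis $\mc{B}_q(n)$, it is linearly independent, and combining this with $\mc{B}_q(n,d)\subseteq \RM_q(n,d)$ and matching dimensions forces it to be a basis of $\RM_q(n,d)$.

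There is no substantive obstacle; the whole argument is a routine change-of-basis fact about triangular matrices. The only point requiring attention is verifying that the expansion of a generalized monomial never produces a standard monomial of individual degree $\ge q$, so that the triangular identity above holds within $\RM_q(n)$ (and not merely in $\F_q[X_1,\ldots,X_n]$).
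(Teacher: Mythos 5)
Your proof is correct; the paper omits the argument entirely (stating only that the fact ``is easily proved''), and your triangular change-of-basis argument --- each $\prod_j b_{i_j}(X_j)$ equals $\prod_j X_j^{i_j}$ plus standard monomials with componentwise smaller exponents, combined with the count $|\mc{B}_q(n,d)| = N_q(n,d)$ --- is exactly the intended routine verification.
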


What makes the above basis useful is the following lemma.
\begin{lemma}
\label{lem:ut-decomp}
Fix any ordering $\xi_0,\ldots,\xi_{q-1}$ of $\F_q$ and let $b_{i}(X)$ ($0\leq i \leq q-1$) be the corresponding basis. Given
 any $f\in\RM_q(n)$ and any $P\in \RM_q(n,d)$, we may write the function $f\cdot P\in \RM_q(n)$ as
\[
fP = \sum_{k=0}^{q-1}b_k(X_n) \left(Q_k\cdot f|_{X_n=\xi_k} + \sum_{0\leq j<k}Q_j\cdot h_{j,k}\right)
\]
where $P = \sum_{k=0}^{q-1}b_k(X_n)Q_k(X_1,\ldots,X_{n-1})$, and $h_{j,k}(X_1,\ldots,X_{n-1})\in \RM_q(n-1)$.
\end{lemma}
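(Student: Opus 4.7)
The plan is to expand $f$ in the basis $\{b_k(X_n)\}_{0\le k<q}$ with coefficients in $\RM_q(n-1)$, multiply out against the given expansion of $P$, and then reorganize using \cref{lem:basis-property} to obtain the desired upper-triangular form.

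First I would invoke \cref{fac:basis} (applied in the single variable $X_n$ over the coefficient ring $\RM_q(n-1)$; equivalently, use $\RM_q(n)\cong \RM_q(n-1)\otimes_{\F_q}\RM_q(1)$) to write $f = \sum_{j=0}^{q-1} b_j(X_n)\cdot g_j$ uniquely for some $g_j\in \RM_q(n-1)$. Combined with the hypothesis $P = \sum_{k=0}^{q-1} b_k(X_n)Q_k$, this yields
\[
fP \;=\; \sum_{j,k=0}^{q-1} g_j Q_k\cdot\bigl(b_j(X_n)\cdot b_k(X_n)\bigr).
\]
I also record the substitution identity $f|_{X_n=\xi_k} = \sum_{j} b_j(\xi_k)\,g_j$ for each $k$, which will be used when collecting the diagonal terms.

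Next I would apply \cref{lem:basis-property} to reduce each product $b_j(X_n)\cdot b_k(X_n)$. The proof of that lemma is purely an induction using the identity $X\cdot b_i(X) = \xi_i b_i(X) + b_{i+1}(X)$ and linearity, so it applies verbatim when the scalar $f(X)$ is replaced by the polynomial $b_j(X_n)$. This gives
\[
b_j(X_n)\cdot b_k(X_n) \;=\; b_j(\xi_k)\, b_k(X_n) \;+\; r_{j,k}(X_n),
\]
with $r_{j,k}(X_n) \in \mathrm{span}\{b_{k+1}(X_n),\ldots, b_{q-1}(X_n)\}$. Writing $r_{j,k}(X_n) = \sum_{\ell>k} c_{j,k,\ell}\, b_\ell(X_n)$ with $c_{j,k,\ell}\in \F_q$, I substitute into the product formula above and collect the coefficient of each $b_\ell(X_n)$.

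Finally, the coefficient of $b_\ell(X_n)$ splits naturally into two pieces. The diagonal contribution (terms with $k=\ell$) is $\bigl(\sum_j g_j b_j(\xi_\ell)\bigr)\,Q_\ell = f|_{X_n=\xi_\ell}\cdot Q_\ell$ by the substitution identity above. The off-diagonal contribution (terms with $k<\ell$) is $\sum_{k<\ell} Q_k\cdot \bigl(\sum_j c_{j,k,\ell}\,g_j\bigr)$. Setting $h_{k,\ell} := \sum_j c_{j,k,\ell}\,g_j \in \RM_q(n-1)$ produces exactly the claimed decomposition. The only point requiring a moment of care is justifying that \cref{lem:basis-property} applies with $\RM_q(n-1)$-valued scalars rather than $\F_q$-valued ones, but since its proof uses only commutativity and the two ring identities above, this is not a genuine obstacle.
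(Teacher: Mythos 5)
Your proof is correct and follows essentially the same route as the paper: expand $f$ in the $b_j(X_n)$ basis with coefficients in $\RM_q(n-1)$, apply \cref{lem:basis-property} to the products $b_j(X_n)\cdot b_k(X_n)$, and collect the coefficient of each $b_\ell(X_n)$, identifying the diagonal term with $f|_{X_n=\xi_\ell}$ via $\sum_j b_j(\xi_\ell)g_j$. Your final worry is unnecessary: since $b_j(X_n)b_k(X_n)$ is a product of univariate polynomials in $X_n$, \cref{lem:basis-property} applies exactly as stated (with $f=b_j$ a function from $\F_q$ to $\F_q$), and the $\RM_q(n-1)$-valued coefficients simply multiply through afterwards.
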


\begin{remark}
\label{rem:ut}
The above statement encapsulates the advantage of working with the basis from \cref{def:basis}. Note that the coefficient of $b_k(X_n)$ only involves $Q_i(X_1,\ldots,X_{n-1})$ for $i\leq k$. This gives us an ``upper triangular'' decomposition of the polynomial $fP$ that we will find useful.
\end{remark}

\begin{proof}
By \cref{fac:basis} point 1, we can write $f = \sum_{i=0}^{q-1}b_i(X_n) f_i(X_1,\ldots,X_{n-1})$. Expanding $fP$, we get

\begin{align*}
fP &= \sum_{i,j \in \{0,\ldots,q-1\}} b_i(X_n) b_j(X_n) f_iQ_j\\
\ifIEEEtr\else(\text{by \cref{lem:basis-property}})\fi&= \sum_{i,j}f_i Q_j\cdot \left(b_i(\xi_j)b_j(X_n) + \sum_{k >
  j}\alpha_{i,j,k} b_k(X_n) \right)\\
\ifIEEEtr &\qquad [\text{by \cref{lem:basis-property}}]\fi\\
&= 	\sum_{k=0}^{q-1}b_k(X_n) \left(Q_k \sum_{i}f_ib_i(\xi_k) + \sum_{j < k,i}\alpha_{i,j,k}f_iQ_j\right)\\
&= \sum_{k=0}^{q-1}b_k(X_n)\left(Q_k f|_{X_n=\xi_k} + \sum_{j<k}Q_j\cdot h_{j,k}\right),\\
\end{align*}
where $h_{j,k} := \sum_{i}\alpha_{i,j,k}f_i$.
\end{proof}

We will also need to analyze the product of many polynomials in the above basis, for which we use the following.

\begin{lemma}
\label{lem:prod-basis}
Say $P_1,\ldots, P_k\in \RM_q(n,d)$ with $P_i = \sum_{j = 0}^{q-1}b_j(X_n) Q_{i,j}(X_1,\ldots,X_{n-1})$. Let $P = \prod_{i=1}^k P_i = \sum_{j=0}^{q-1}b_j(X_n) Q_{j}(X_1,\ldots,X_{n-1})$. Given $j_1,\ldots,j_k \in \{0,\ldots,q-1\}$, we say that $(j_1,\ldots,j_k)\leq j$ if $j_i \leq j$ for each $i\in [k]$ and $(j_1,\ldots,j_k) < j$ if $j_i\leq j$ for each $i\in [k]$ \emph{and} there is some $i$ such that $j_i < j$. Also, let $Q_{(j_1,\ldots,j_k)}$ denote $\prod_{i\in [k]}Q_{i,j_i}$.

For each $j\in \{0,\ldots,q-1\}$, we have
\[
Q_j = \sum_{(j_1,\ldots,j_k)\leq j}\beta^{(j)}_{(j_1,\ldots,j_k)} Q_{(j_1,\ldots,j_k)},
\]
where $\beta^{(j)}_{(j_1,\ldots,j_k)} \in \F_q$ and further $\beta^{(j)}_{(j,\ldots,j)}\neq 0$.
\end{lemma}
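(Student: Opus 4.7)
The plan is to expand the product $\prod_{i=1}^k P_i$ directly using the given decomposition of each $P_i$, obtaining
$$\prod_{i=1}^k P_i \;=\; \sum_{(j_1,\ldots,j_k) \in \{0,\ldots,q-1\}^k} \bigl(b_{j_1}(X_n)\cdots b_{j_k}(X_n)\bigr)\cdot Q_{(j_1,\ldots,j_k)}.$$
The task then reduces to rewriting each product $b_{j_1}(X_n)\cdots b_{j_k}(X_n)$ in the basis $\{b_0(X_n),\ldots,b_{q-1}(X_n)\}$ as $\sum_{s} \gamma^{(s)}_{(j_1,\ldots,j_k)} b_s(X_n)$ and setting $\beta^{(j)}_{(j_1,\ldots,j_k)} := \gamma^{(j)}_{(j_1,\ldots,j_k)}$, then verifying (a) the support condition and (b) the non-vanishing of the ``diagonal'' coefficient.

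The key structural claim, which is an immediate consequence of \cref{lem:basis-property}, is the following: for any indices $i,j \in \{0,\ldots,q-1\}$ with (WLOG) $i \leq j$, we have
$$b_i(X)\cdot b_j(X) \;=\; b_i(\xi_j)\, b_j(X) + \tilde b(X), \qquad \tilde b \in \mathrm{span}\{b_{j+1}(X),\ldots,b_{q-1}(X)\},$$
and the leading coefficient $b_i(\xi_j) = \prod_{\ell < i}(\xi_j - \xi_\ell)$ is nonzero (since no factor $\xi_j - \xi_j$ appears, as $\ell < i \leq j$). Thus $b_i(X)b_j(X) \in \mathrm{span}\{b_s(X) : s \geq \max(i,j)\}$, with a nonzero coefficient on $b_{\max(i,j)}(X)$.

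Iterating this claim by induction on $k$ gives $b_{j_1}(X)\cdots b_{j_k}(X) \in \mathrm{span}\{b_s(X) : s \geq \max(j_1,\ldots,j_k)\}$: assuming the conclusion for $k-1$, each element of $\mathrm{span}\{b_s : s \geq \max(j_1,\ldots,j_{k-1})\}$ multiplied by $b_{j_k}$ lies in $\mathrm{span}\{b_t : t \geq \max(j_1,\ldots,j_k)\}$ by applying the key claim termwise. This immediately yields $\gamma^{(s)}_{(j_1,\ldots,j_k)} = 0$ for $s < \max(j_1,\ldots,j_k)$, which is exactly the condition $\beta^{(j)}_{(j_1,\ldots,j_k)} = 0$ unless $(j_1,\ldots,j_k) \leq j$.

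For the non-vanishing of $\beta^{(j)}_{(j,\ldots,j)}$ we need the coefficient $c_k$ of $b_j(X)$ in $b_j(X)^k$ to be nonzero, which we prove by a second induction on $k$. Suppose $b_j^{k-1} = c_{k-1} b_j + \tilde b'$ with $\tilde b' \in \mathrm{span}\{b_t : t > j\}$ and $c_{k-1} \neq 0$. Multiplying by $b_j$ and using the key claim for the main term together with the fact (established above) that $\mathrm{span}\{b_t : t > j\}$ is preserved under multiplication by $b_j$, we obtain $b_j^k = c_{k-1} b_j(\xi_j) b_j + (\text{terms in } \mathrm{span}\{b_t : t > j\})$, so $c_k = c_{k-1} b_j(\xi_j) \neq 0$. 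Combining both inductions yields the claimed decomposition with $\beta^{(j)}_{(j,\ldots,j)} = c_k \neq 0$. The argument is essentially pure bookkeeping on top of \cref{lem:basis-property}; the only conceptual point is the observation that the ``upper-triangular'' behaviour of the basis under multiplication persists under iteration.
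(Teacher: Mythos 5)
Your proof is correct and follows essentially the same route as the paper's: both rest on \cref{lem:basis-property} (the upper-triangular behaviour of the products $b_i(X)b_j(X)$, with non-vanishing leading coefficient $b_i(\xi_j)$ when $i\leq j$) together with an induction on $k$. The only difference is organizational --- you expand the full $k$-fold product at once and track the basis expansion of $b_{j_1}\cdots b_{j_k}$ directly, whereas the paper peels off one factor $P_k$ at a time --- but the mathematical content is identical.
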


\begin{proof}
We prove the lemma by induction on $k$. The base case $k = 1$ is trivial since we can take $\beta^{(j)}_{(j_1)} = 1$ if $j_1 = j$ and $0$ otherwise. 

Now, consider the inductive case $k > 1$. For $\tilde{P} = \prod_{i < k}P_i$, we have the above claim, which yields
\[
\tilde{Q}_j = \sum_{(j_1,\ldots,j_{k-1})\leq j}\tilde{\beta}^{(j)}_{(j_1,\ldots,j_{k-1})} Q_{(j_1,\ldots,j_{k-1})},
\]
where $\tilde{P} = \sum_j b_j(X_n) \tilde{Q}_j$. Also, $\tilde{\beta}^{(j)}_{(j,j,\ldots,j)}\neq 0$.

To prove the inductive claim, we expand $P = \prod_i P_i = \tilde{P} P_k$ and use \cref{lem:basis-property}. The computation is as follows.
\begin{align}
P = \tilde{P}P_k &= \left(\sum_{j}b_j(X_n)\tilde{Q}_j \right)\cdot \left(\sum_{\ell=0}^{q-1}b_\ell(X_n) Q_\ell\right)\notag\\
&= \sum_{j,\ell} \tilde{Q}_jQ_\ell b_j(X_n) b_\ell(X_n).\label{eq:prod-1}
\end{align}

By \cref{lem:basis-property}, it follows that
\[
b_j(X_n) b_\ell(X_n) = \sum_{r \geq (j,\ell)} \gamma^{(r)}_{(j,\ell)} b_r(X_n),
\]
where $\gamma^{(r)}_{(j,\ell)}\in \F_q$ for each $(j,\ell)\leq r$ and
in particular $\gamma^{(r)}_{(r,r)} = b_r(\xi_r)\neq 0$. Substituting
in \eqref{eq:prod-1} we get 
\begin{align*}
P &= \sum_{j,\ell} \tilde{Q}_j Q_\ell\sum_{r\geq (j,\ell)} \gamma^{(r)}_{(j,\ell)} b_r(X_n)\\
&= \sum_{r} b_r(X_n) \sum_{(j,\ell)\leq r}\gamma^{(r)}_{(j,\ell)} \tilde{Q}_j Q_\ell\\
\ifIEEEtr\else\text{(by Induction Hypothesis) }\fi&= \sum_{r} b_r(X_n)
                                                    \sum_{(j,\ell)\leq
                                                    r}\gamma^{(r)}_{(j,\ell)}Q_{\ell}\sum_{(j_1,\ldots,j_{k-1})\leq
                                                    j}
                                                    \tilde{\beta}^{(j)}_{\bar{j}}Q_{\bar{j}}\\
\ifIEEEtr&\quad [\text{by Induction Hypothesis and where } \bar{j}
           = (j_1,\ldots,j_{k-1})]\\\fi
&= \sum_r b_r(X_n) \sum_{(j_1,\ldots,j_{k-1},\ell)\leq r} \beta^{(r)}_{(j_1,\ldots,j_{k-1},\ell)}Q_{(j_1,\ldots,j_{k-1},\ell)},
\end{align*}
where 
$$\beta^{(r)}_{(j_1,\ldots,j_{k-1},\ell)} = \sum_{j\geq (j_1,\ldots,j_{k-1}), j\leq r}\gamma^{(r)}_{(j,\ell)}\tilde{\beta}^{(j)}_{(j_1,\ldots,j_{k-1})}.$$ 
In particular, $\beta^{(r)}_{(r,\ldots,r)} = \gamma^{(r)}_{(r,r)}\tilde{\beta}^{(r)}_{(r,\ldots,r)}\neq 0$ since we showed that $\gamma^{(r)}_{(r,r)}\neq 0$ above and $\tilde{\beta}^{(r)}_{(r,\ldots,r)}\neq 0$ by the Induction Hypothesis.
\end{proof}

\subsection{Multilinear and set-multilinear systems of equations}
\label{sec:multilin}

Fix any set $\mc{Z}$ of variables and say we have a partition  $\Pi= \{\mc{Z}_1,\ldots, \mc{Z}_k\}$ of $\mc{Z}$. A polynomial $P\in \F_q[\mc{Z}]$ is \emph{$\Pi$-set-multilinear} (or just \emph{set-multilinear} if $\Pi$ is clear from context) if every monomial appearing in $P$ involves exactly one variable from each $\mc{Z}_i$ ($i\in [k]$). The polynomial $P$ is $\Pi$-multilinear if every monomial involves \emph{at most} one variable from each $\mc{Z}_i$ ($i\in [k]$). Note that a $\Pi$-set-multilinear polynomial is homogeneous of degree $k$ and a $\Pi$-multilinear polynomial has degree at most $k$.

Given a $\Pi$ as above and a $\Pi$-multilinear polynomial $P$, its homogeneous degree $k$ component is a $\Pi$-set-multilinear polynomial $Q$. We call $Q$ the \emph{set-multilinear part} of $P$.

\begin{lemma}
\label{lem:multilineqns}
Fix any set $\mc{Z} = \{Z_1,\ldots,Z_N\}$ of variables and a partition $\Pi = \{\mc{Z}_1,\ldots,\mc{Z}_k\}$ of $\mc{Z}$. Let $P_1,\ldots,P_m$ be any set of $\Pi$-multilinear polynomials with set-multilinear parts $Q_1,\ldots,Q_m$ respectively. Then, we have
\ifIEEEtr
\[
\begin{split}&\prob{z\sim \F_q^N}{P_1(z) = 0\wedge \cdots \wedge P_m(z)
    = 0} \\
&\leq \prob{z\sim \F_q^N}{Q_1(z) = 0\wedge \cdots \wedge Q_m(z) =
  0}.\\
\end{split}
\]
\else
\[
\prob{z\sim \F_q^N}{P_1(z) = 0\wedge \cdots \wedge P_m(z) = 0} \leq \prob{z\sim \F_q^N}{Q_1(z) = 0\wedge \cdots \wedge Q_m(z) = 0}.
\]
\fi
\end{lemma}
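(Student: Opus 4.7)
The plan is to prove the inequality by an iterative reduction, peeling off one block of the partition $\Pi$ at a time, using at each step only the classical linear-algebra fact that for any affine system $Az = b$ over $\F_q$, the number of solutions is at most $|\ker A|$, i.e., at most the number of solutions of the corresponding homogeneous system $Az = 0$.

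More concretely, I would define a sequence $P_j^{(k)}, P_j^{(k-1)}, \ldots, P_j^{(0)}$ of $\Pi$-multilinear polynomials with $P_j^{(k)} := P_j$, maintaining the invariant that every monomial of $P_j^{(i)}$ involves exactly one variable from each of $\mc{Z}_{i+1}, \ldots, \mc{Z}_k$ (and at most one from each of $\mc{Z}_1, \ldots, \mc{Z}_i$). To go from $P_j^{(i)}$ to $P_j^{(i-1)}$, write
\[
P_j^{(i)} \;=\; \sum_{v \in \mc{Z}_i} z_v\, A_{j,v}^{(i)} \;+\; B_j^{(i)},
\]
where $A_{j,v}^{(i)}, B_j^{(i)} \in \F_q[\mc{Z} \setminus \mc{Z}_i]$, and set $P_j^{(i-1)} := \sum_{v \in \mc{Z}_i} z_v A_{j,v}^{(i)}$. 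Both $A_{j,v}^{(i)}$ and $B_j^{(i)}$ inherit from $P_j^{(i)}$ the property that each of their monomials uses a variable from each of $\mc{Z}_{i+1}, \ldots, \mc{Z}_k$, so the invariant propagates one step further, and $P_j^{(i-1)}$ additionally satisfies that every monomial uses a variable from $\mc{Z}_i$.

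The key inequality happens at each reduction step. Fix the values of $z_w$ for $w \notin \mc{Z}_i$ arbitrarily. Then $\{P_j^{(i)}(z) = 0 : j \in [m]\}$ becomes an affine linear system in the variables $z^{(i)} := (z_v)_{v \in \mc{Z}_i}$, and $\{P_j^{(i-1)}(z) = 0\}$ is exactly its associated homogeneous system. By the linear-algebra fact, the number of $z^{(i)} \in \F_q^{|\mc{Z}_i|}$ satisfying the former is at most the number satisfying the latter. Averaging over the remaining coordinates gives
\[
\prob{z}{\forall j:\, P_j^{(i)}(z) = 0} \;\leq\; \prob{z}{\forall j:\, P_j^{(i-1)}(z) = 0}.
\]
Chaining these $k$ inequalities and noting that $P_j^{(0)}$ is $\Pi$-multilinear with every monomial using a variable from each of $\mc{Z}_1, \ldots, \mc{Z}_k$ — hence a degree-$k$ set-multilinear polynomial equal to the set-multilinear part $Q_j$ of $P_j$ — yields the claim.

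The main thing to get right is identifying the correct reduction: recognizing that dropping the monomials of $P_j^{(i)}$ that miss $\mc{Z}_i$ corresponds precisely to discarding the affine tail of a linear system in $z^{(i)}$. Beyond that single observation the argument is just bookkeeping — checking that the two structural invariants (being $\Pi$-multilinear, and having each monomial use a variable from every already-processed block) propagate cleanly, and that the terminal polynomial $P_j^{(0)}$ coincides with $Q_j$. I do not expect any deeper obstacle.
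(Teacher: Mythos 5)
Your proposal is correct and follows essentially the same route as the paper's proof: the same block-by-block peeling, the same observation that after fixing the variables outside one block the system becomes affine linear in that block with the ``dropped-monomial'' system as its homogeneous version, and the same counting fact that a homogeneous linear system has at least as many solutions as any affine system with the same coefficient matrix. The only (immaterial) difference is that you process the blocks in the reverse order.
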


The above lemma generalizes the well-known fact that a system of (inhomogeneous) linear equations has at most as many solutions as the corresponding \emph{homogeneous} system of linear equations obtained by setting the constant term in each equation to $0$. 

\begin{proof}
The proof uses the above fact about the number of solutions for systems of linear equations. Consider the following systems of multilinear polynomial equations. For $j \in \{0,\ldots,k\}$ and $i\in [m]$, define $P_{j,i}$  as follows: $P_{0,i} = P_i$ and given $P_{j,i}$ for $j < k$, we define $P_{j+1,i}$ by dropping all monomials from $P_{j,i}$ that do \emph{not} involve the variables from $\mc{Z}_{j+1}$. In particular, we see that $P_{k,i} = Q_i$ for each $i\in [m]$.

We claim that for each $j < k$ we have
\begin{equation}
\label{eq:jth}
\prob{z\sim \F_2^N}{\bigwedge_{i\in [m]}P_{j,i}(z) = 0} \leq \prob{z\sim \F_2^N}{\bigwedge_{i\in [m]}P_{j+1,i}(z) = 0}.
\end{equation}
The above clearly implies the lemma.

To show that \eqref{eq:jth} holds, we argue as follows. Fix any assignment to all the variables in $\mc{Z}\setminus\mc{Z}_{j+1}$. For each such assignment, the event on the Left Hand Side of \eqref{eq:jth} is the event that a system of $m$ linear equations $\mc{L}$ in $\mc{Z}_{j+1}$ is satisfied by a uniformly random assignment to $\mc{Z}_{j+1}$: this follows since each $P_{j,i}$ is a multilinear polynomial w.r.t. $\Pi$. On the Right Hand Side, we have the event that some other system $\mc{L}'$ of $m$ linear equations is satisfied. By inspection, it can be verified that $\mc{L}'$ is the homogeneous version of $\mc{L}$: i.e., each equation in $\mc{L}'$ is obtained by zeroing the constant term of the corresponding equation in $\mc{L}$. By standard linear algebra, $\mc{L}'$ has at least as many solutions as $\mc{L}$. Hence, the probability that a random assignment to the variables in $\mc{Z}_{j+1}$ satisfies $\mc{L}'$ is at least the probability that a random assignment satisfies $\mc{L}$. This implies \eqref{eq:jth}.
\end{proof}

\subsection{A result of Haramaty, Shpilka, and Sudan}

The following is an easy corollary of a result from the work of Haramaty, Shpilka, and Sudan~\cite{HaramatySS2013}. Analogous corollaries have been observed before by Dinur and Guruswami~\cite{DinurG2015} (using~\cite{BhattacharyyaKSSZ2010}) and Guruswami \etal~\cite{GuruswamiHHSV2017}.

\begin{lemma}
\label{lem:HSS}
Let $q$ be any constant prime. There is a constant $c_q > q$ depending only on $q$ such that the following holds. Let $n,d,\Delta,r$ be non-negative integers with $d < (q-1)n$, $r:= (q-1)n-d$, $q^{5}< \Delta < q^{r/(q-1)}$, and $r \geq c_q$. Then, for any $f\in \RM_q(n)$ that is $\Delta$-far from $\RM_q(n,d)$, there is a non-zero homogeneous linear function $\ell(X_1,\ldots,X_n)$ such that for each $\alpha\in \F_q$, the restriction $f|_{\ell(X) = \alpha}$ is at least $\Delta/q^3$-far from $\RM_q(n-1,d)$.
\end{lemma}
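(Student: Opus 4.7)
The plan is to derive this lemma from the HSS~\cite{HaramatySS2013} robust soundness analysis of the natural local tester for $\RM_q(n,d)$ --- the $\F_q$-analog of~\cref{lem:bkssz} --- by combining a robust hyperplane-restriction statement with a short union bound. Under the assumptions $q^{5} < \Delta < q^{r/(q-1)}$ and $r \geq c_q$, HSS guarantees that, with high probability, the restriction of $f$ to a random affine subspace of suitable dimension is itself far from the restricted Reed--Muller code. Specialized to hyperplane restrictions, this gives, for each fixed $\alpha \in \F_q$ and a uniformly random non-zero homogeneous linear form $\ell$, a bound of the form
\[
\Pr_{\ell}\bigl[\Delta(f|_{\ell(X)=\alpha},\RM_q(n-1,d)) < \Delta/q^{3}\bigr] \leq 1/q^{2}.
\]
Here the upper bound $\Delta < q^{r/(q-1)}$ ensures, via~\cref{fac:ring}, that the closest codeword in $\RM_q(n,d)$ to $f$ is unique and that distances behave well under restriction; the lower bound $\Delta > q^{5}$ and the condition $r \geq c_q$ place HSS in its robust regime. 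These together account for the two $q$-factors of slack in the displayed inequality beyond the naive expectation $\Delta/q$.

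Given the above, a union bound over the $q$ values of $\alpha \in \F_q$ yields
\[
\Pr_{\ell}\bigl[\exists\,\alpha\in\F_q : \Delta(f|_{\ell(X)=\alpha},\RM_q(n-1,d)) < \Delta/q^{3}\bigr] \leq q\cdot\frac{1}{q^{2}} = \frac{1}{q} < 1.
\]
In particular, there exists a non-zero homogeneous linear form $\ell^{*}$ for which \emph{every one} of the $q$ parallel restrictions $f|_{\ell^{*}(X)=\alpha}$ is $\Delta/q^{3}$-far from $\RM_q(n-1,d)$, which is exactly the conclusion of the lemma.

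The main obstacle is extracting the first displayed inequality from the HSS theorem: HSS is naturally phrased for random affine subspaces of dimension $O(d/(q-1))$ rather than for individual $(n{-}1)$-dimensional hyperplane restrictions, and the distributions on the hyperplane $\{\ell = \alpha\}$ differ slightly between the cases $\alpha = 0$ (linear hyperplane) and $\alpha \neq 0$ (affine hyperplane off the origin), so each case needs to be handled. The standard technique --- used in the analogous corollaries of~\cite{DinurG2015} (via~\cite{BhattacharyyaKSSZ2010}) and of~\cite{GuruswamiHHSV2017} --- is a hybrid argument that progressively reduces the dimension of the random subspace, losing only a constant-in-$q$ factor in both the distance threshold and the failure probability at each stage. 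Tracking these constant losses carefully is what produces the $\Delta/q^{3}$ distance guarantee as well as the value of $c_q$ in the lemma statement; once this is in hand, the remainder of the argument is just the union bound sketched above.
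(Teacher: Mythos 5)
There is a genuine gap, and it sits exactly where you flagged it: the first displayed inequality, $\Pr_{\ell}[\Delta(f|_{\ell(X)=\alpha},\RM_q(n-1,d)) < \Delta/q^{3}] \leq 1/q^{2}$, is asserted but not proved, and the route you propose for proving it does not work. You suggest extracting it from the HSS soundness analysis of the low-dimensional subspace tester via a ``hybrid argument that progressively reduces the dimension of the random subspace, losing only a constant-in-$q$ factor at each stage.'' But the HSS local tester lives on affine subspaces of dimension roughly $d/(q-1)+O(1)$, which is $\Theta(r/(q-1))$ dimensions away from a hyperplane; a stage-by-stage argument that loses a constant factor in the distance threshold and failure probability per dimension would compound to a loss exponential in $r$, which is fatal. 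This is also not what \cite{DinurG2015} or \cite{GuruswamiHHSV2017} do: the ingredient those corollaries (and this paper) actually use is the HSS ``local correction from hyperplanes'' theorem (\cref{thm:HSS}, i.e.\ Theorems 1.7 and 4.16 of \cite{HaramatySS2013}), which says that if $f$'s restriction to more than $K_0 = q^{\lceil (d+1)/(q-1)\rceil+\lambda_q}$ hyperplanes is $\Delta_1$-close to degree $d$, then $f$ itself is $(2q\Delta_1 + 4(q-1)q^n/K)$-close to degree $d$. That statement is already about full hyperplane restrictions, so no dimension-reduction hybrid is needed.

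Once that theorem is in hand, your probabilistic framing becomes an unnecessary detour: the paper argues directly by contradiction. If no good $\ell$ exists, then every one of the $(q^n-1)/(q-1)$ directions $\ell$ contributes at least one hyperplane $\{\ell=\alpha\}$ on which $f$ is $\Delta/q^3$-close to degree $d$; taking $K=(q^n-1)/(q-1) > K_0$ (this is where $r\geq c_q$ enters) and $\Delta_1=\Delta/q^3$ (the bound $\Delta < q^{r/(q-1)}$ is used precisely to verify $\Delta_1 < q^{n-d/(q-1)-2}/2$, not for uniqueness of the closest codeword as you suggest), \cref{thm:HSS} forces $\Delta(f,\RM_q(n,d)) \leq 2\Delta/q^2 + 8(q-1)^2 < \Delta$ using $\Delta > q^5$ --- a contradiction. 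Your union bound over $\alpha$ is logically equivalent to this counting (``not every direction has a bad $\alpha$'' versus ``a random direction has no bad $\alpha$ with positive probability''), so the surrounding scaffolding of your argument is fine; what is missing is the correct identification and application of the hyperplane-based HSS theorem that makes the key inequality true.
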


We need the following theorem due to Haramaty, Shpilka and Sudan~\cite{HaramatySS2013}.

\begin{theorem}[{\cite[Theorem~1.7 and 4.16]{HaramatySS2013}}\label{thm:HSS}
  using absolute distances instead of
  fractional distances] For every prime $q$, there exists a constant $\lambda_q$ such that
  the following holds.
For  $\beta: \F_q^n \to \F_q$, let $A_1,\dots, A_K$ be hyperplanes
such that $\beta|_{A_i}$ is $\Delta_1$-close to some degree $d$ polynomial
on $A_i$. If $K > q^{\lceil \frac{d+1}{q-1}\rceil+\lambda_q}$ and $\Delta_1 < q^{n-d/(q-1)-2}/2$,
then $\Delta(\beta,\RM_q(n,d)) \leq 2q\Delta_1 + 4(q-1)\cdot q^n/K$.
\end{theorem}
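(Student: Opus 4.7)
The plan is to derive the lemma by contrapositive, using Theorem~\ref{thm:HSS} as a black box. Suppose, for contradiction, that for every non-zero homogeneous linear function $\ell(X_1,\ldots,X_n)$ there exists some $\alpha_\ell \in \F_q$ such that $f|_{\ell(X) = \alpha_\ell}$ is $\Delta_1$-close to $\RM_q(n-1,d)$ with $\Delta_1 := \Delta/q^3$. I will exhibit many close hyperplanes of this form and feed them to Theorem~\ref{thm:HSS} to contradict the $\Delta$-farness of $f$.

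First, I would count the close hyperplanes. Two non-zero linear functions differing by a non-zero scalar define the same family of $q$ parallel affine hyperplanes in $\F_q^n$, so the number of distinct ``directions'' is $(q^n-1)/(q-1)$. For each direction, the assumption yields at least one close hyperplane, and picking one such per direction gives pairwise distinct hyperplanes $A_1,\ldots,A_K$ with $K \geq (q^n - 1)/(q - 1) > q^{n-1}$. Next, I would verify the two quantitative hypotheses of Theorem~\ref{thm:HSS}. Using $d = (q-1)n - r$, one computes $\lceil (d+1)/(q-1)\rceil = n - \lfloor (r-1)/(q-1)\rfloor$, so the threshold $q^{\lceil(d+1)/(q-1)\rceil + \lambda_q}$ is dominated by $q^{n-1}$ provided $r$ exceeds an explicit constant depending on $q$ and $\lambda_q$; I would take $c_q$ to be any such constant, e.g. $c_q := (q-1)(\lambda_q + 2) + 1$. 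For the second hypothesis, $\Delta_1 = \Delta/q^3 < q^{r/(q-1) - 3} = q^{n - d/(q-1) - 3} \leq q^{n - d/(q-1) - 2}/2$, using $q \geq 2$.

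Finally, Theorem~\ref{thm:HSS} would yield
\[
\Delta(f, \RM_q(n,d)) \;\leq\; 2q\Delta_1 + \frac{4(q-1)q^n}{K} \;\leq\; \frac{2\Delta}{q^2} + 8(q-1)^2,
\]
which is strictly less than $\Delta$ as soon as $\Delta > q^5$ (holding for every prime $q \geq 2$), contradicting the hypothesis and proving the lemma. I expect the only real obstacle to be bookkeeping: calibrating $c_q$ so that both the lower bound on $K$ dominates the threshold in Theorem~\ref{thm:HSS} and the final arithmetic inequality $2\Delta/q^2 + 8(q-1)^2 < \Delta$ go through uniformly over all primes $q$. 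No genuine conceptual step beyond Theorem~\ref{thm:HSS} should be required.
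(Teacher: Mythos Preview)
Your proposal does not address the stated theorem. The statement labeled \texttt{thm:HSS} is a result quoted from Haramaty, Shpilka and Sudan~\cite{HaramatySS2013}; the paper does not prove it and simply invokes it as a black box. What you have written is instead a proof of the \emph{preceding} statement, \cref{lem:HSS}, which uses \cref{thm:HSS} as an ingredient --- indeed, you say so explicitly in your first sentence (``derive the lemma \ldots using Theorem~\ref{thm:HSS} as a black box'').

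Read as a proof of \cref{lem:HSS}, your argument is essentially identical to the paper's: the same contrapositive setup, the same count $K=(q^n-1)/(q-1)\geq q^{n-1}$ of close hyperplanes (one per direction), the same verification that $\Delta_1=\Delta/q^3<q^{n-d/(q-1)-2}/2$, and the same final arithmetic $2\Delta/q^2+8(q-1)^2<\Delta$ using $\Delta>q^5$. The only cosmetic difference is that the paper leaves $c_q$ implicit (``$c$ chosen large enough'') whereas you give an explicit candidate $c_q=(q-1)(\lambda_q+2)+1$. If the intended target was \cref{lem:HSS}, your proposal is correct and matches the paper; if the intended target was genuinely \cref{thm:HSS}, then a proof would require reproducing the analysis from~\cite{HaramatySS2013}, which neither you nor the paper attempt.
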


\begin{proof}[Proof of \cref{lem:HSS}]
Let $c_q = cq\lambda_q$ where $\lambda_q$ is the constant from \cref{thm:HSS} and $c$ is an absolute constant determined below.

Suppose \cref{lem:HSS} were false with $r \geq c_q$.
Then, for every nonzero homogeneous linear function $\ell$, at least one of
$\{f|_{\ell=\alpha }\mid \alpha \in \F_q\}$ is $\Delta/q^3$-close to a degree $d$ polynomial. We thus, get
$K=(q^n-1)/(q-1)$ hyperplanes such that the restriction of $f$ to these
hyperplanes is $\Delta/q^3$-close to a degree $d$
polynomial. Observe that $K \geq q^{n-1} > q^{\lceil \frac{d+1}{q-1}\rceil +\lambda_q}$ if $r \geq c_q$ and the constant $c$ is chosen large enough. Also note that since $\Delta < q^{r/(q-1)}$, we have $\Delta/q^3 < q^{(r/(q-1))-3} \leq q^{n-d/(q-1)-2}/2$. Hence, by \cref{thm:HSS} we
have  $\Delta(f, \RM_q(n,d)) \leq 2
\Delta/q^2 + 4\cdot(q-1)^2\cdot q^n/(q^{n}-1) < 2\Delta/q^2 + 8(q-1)^2 <
\Delta$ (since $\Delta \geq q^5$). This contradicts the hypothesis
that $f$ is $\Delta$-far from $\RM_q(n,d)$.
\end{proof}

\section{An extension of the Schwartz-Zippel Lemma over $\mathbb{F}_q$}
\label{sec:SZext}

The results of this section hold over $\F_q$ where $q$ is any prime power.

\begin{lemma}
\label{lem:Umax}
Let $d,s\geq 0$ be arbitrary integers with $d+s\leq n(q-1)$. Assume $d
= (q-1)u + v$ for $u,v \geq 0$ with $v < (q-1)$. Then the monomial
$m_0:=X_1^{q-1}\cdots X_u^{q-1}X_{u+1}^v$ of degree $d$ satisfies
$|U_s(m_0)| \leq |U_s(m)|$ for all monomials $m$ of degree exactly $d$.
\end{lemma}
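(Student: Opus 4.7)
The plan is to express $|U_s(m)|$ via a generating function in the exponents of $m$, and then show that it is Schur-concave in the exponent multiset, so that the most ``concentrated'' exponent sequence $(q{-}1,\dots,q{-}1,v,0,\dots,0)$ corresponding to $m_0$ is the minimizer.

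First, I would observe that if $m = \prod_{j=1}^n X_j^{e_j}$ with $\sum_j e_j = d$, then an element of $U_s(m)$ is a choice of exponents $e_j' = e_j + g_j$ with $g_j \in \{0,1,\dots,q-1-e_j\}$ and $\sum_j g_j = s$. Hence
\[
|U_s(m)| = [y^s]\prod_{j=1}^n S_{q-1-e_j}(y), \qquad \text{where } S_k(y) := 1 + y + \cdots + y^k.
\]
In particular $|U_s(m)|$ depends only on the multiset of exponents of $m$, so I may reorder freely.

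The key step is a swap inequality: if $a \le b$ with $a \ge 1$ and $b \le q-2$, then replacing the pair $(a,b)$ in the exponent sequence by $(a-1,b+1)$ does not increase $|U_s|$. Setting $A = q-1-a$ and $B = q-1-b$, so $A \ge B \ge 1$, this reduces to showing that $S_A(y)S_B(y) - S_{A+1}(y)S_{B-1}(y)$ has nonnegative coefficients. A brief calculation using $S_k(y) = (1-y^{k+1})/(1-y)$ gives
\[
S_A S_B - S_{A+1}S_{B-1} \;=\; \frac{(1-y^{A+1})(1-y^{B+1}) - (1-y^{A+2})(1-y^B)}{(1-y)^2} \;=\; y^B\bigl(1 + y + \cdots + y^{A-B}\bigr),
\]
which indeed has nonnegative coefficients. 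Since the product of the remaining factors $\prod_{j} S_{q-1-e_j}(y)$ (for the unaffected indices) also has nonnegative coefficients, multiplying preserves the pointwise inequality of coefficients, hence the inequality on $[y^s]$.

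Finally, I would observe that starting from any exponent sequence $(e_1,\dots,e_n)$ of sum $d$ with entries in $\{0,\dots,q-1\}$, I can iteratively apply the swap $(a,b) \mapsto (a-1,b+1)$ (choosing any $a \le b$ with $a \ge 1$, $b \le q-2$) to increase the lexicographic rank of the sorted sequence, terminating only at the unique sequence whose sorted form is $(q-1,\dots,q-1,v,0,\dots,0)$, i.e.\ the exponent sequence of $m_0$. Each such step weakly decreases $|U_s|$, so $|U_s(m_0)| \le |U_s(m)|$ for all $m$ of degree $d$, as required. The only mildly delicate point is the identity for $S_A S_B - S_{A+1}S_{B-1}$; everything else is a straightforward majorization/swap argument.
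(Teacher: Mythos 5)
Your proof is correct, and its overall skeleton is the same as the paper's: an exchange argument that repeatedly moves a unit of degree from a smaller exponent to a larger one (exactly the paper's move $m \mapsto m' = X_i^{e_i+1}X_{i+1}^{e_{i+1}-1}\prod_{j\notin\{i,i+1\}}X_j^{e_j}$), terminating at $m_0$. Where you differ is in how the two-coordinate swap inequality is verified. The paper fixes the exponents of the other $n-2$ variables and compares the resulting sets $S$ and $T$ of lattice points directly, showing $|T\setminus S|\le 1$ and exhibiting an element of $S\setminus T$ whenever $T\setminus S$ is nonempty; you instead encode $|U_s(m)|$ as $[y^s]\prod_j S_{q-1-e_j}(y)$ and compute the exact identity $S_AS_B - S_{A+1}S_{B-1} = y^B(1+y+\cdots+y^{A-B})$, so coefficient-wise domination (for all $s$ at once) follows upon multiplying by the remaining nonnegative factors. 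Your identity checks out (the numerator over $(1-y)^2$ collapses to $(1-y)(y^B-y^{A+1})$), the hypotheses $a\ge 1$, $b\le q-2$ translate correctly to $A\ge B\ge 1$, and the termination claim is sound since a sequence admitting no swap has at most one exponent strictly between $0$ and $q-1$ and hence must be $m_0$'s exponent multiset. The generating-function route buys an exact formula for the deficit and avoids the paper's case analysis; both arguments are equally valid.
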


\begin{proof}
Fix any monomial $m$ of degree $d$ such that $|U_s(m)|$ is as small as possible; say $m = \prod_{j\in [n]} X_j^{e_j}$. By renaming the variables if necessary, we assume that $e_1\geq e_2\geq\cdots \geq e_n$. 

If $m\neq m_0$, then we can find an $i< n$ such that $0< e_{i+1}\leq e_i < q-1$. Consider the monomial $m' = X_i^{e_{i}+1}X_{i+1}^{e_{i+1}-1}\prod_{j\not\in\{i,i+1\}}X_j^{e_j}$. We claim that $|U_s(m')|\leq |U_s(m)|$. This will complete the proof of the lemma, since it is easy to check that by repeatedly modifying the monomial in this way at most $d$ times, we end up with the monomial $m_0$. By construction, we will have shown that $|U_s(m_0)|\leq |U_s(m)|$.

We are left to show that $|U_s(m')|\leq |U_s(m)|$ or equivalently (by \cref{fac:U=D}) that
$|D_s(m')|\leq |D_s(m)|$. To this end, we show that for any
$(n-2)$-tuple
$\mathbf{e}' = (e_1',\ldots,e_{i-1}',e_{i+2}',\ldots,e_n')$, we have
$|D_s(m',\mathbf{e}')|\leq |D_s(m,\mathbf{e}')|$ where
$D_s(m,\mathbf{e}')$ denotes the set of monomials $\tilde{m}\in D_s(m)$ such that for
each $j\in [n]\setminus\{i,i+1\}$, the degree of $X_j$ in $\tilde{m}$ is $e_j'$. To
see this, note that $D_s(m,\mathbf{e}')$ and $D_s(m',\mathbf{e}')$ are
in bijective correspondence with the sets $S$ and $T$ respectively,
defined as follows:
\begin{align*}
S &= \{(d_1,d_2)\mid 0\leq d_1 \leq a, 0\leq d_2 \leq b, d_1+d_2 = c \},\\
T &= \{(d_1,d_2)\mid 0\leq d_1 \leq a-1, 0\leq d_2 \leq b+1, d_1+d_2 =
    c \},
\end{align*}
where $a := (q-1)-e_i$, $b := (q-1)-e_{i+1}$, and $c = s-\sum_{j\not\in\{i,i+1\}} e_j'$; note that by assumption, $(q-1) > e_i\geq e_{i+1}$ and hence $1\leq a\leq b$. Our claim thus reduces to showing $|T|\leq |S|$, which is done as follows.

If $c < 0$ or $c > a+b$, then both $S$ and $T$ are empty sets and the claim is trivial. So assume that $0\leq c\leq a+b$. In this case, we see that $|T\setminus S|\leq 1$: in fact, $T\setminus S$ can only contain the element $(c-b-1,b+1)$ and this happens only when the inequalities $0\leq c - b - 1\leq a-1$ are satisfied. But this allows us to infer that $S\setminus T$ contains $(a,c-a)$ since $0\leq c-b-1\leq c-a$ and $c-a \leq b$. Thus, $|T\setminus S|\leq |S\setminus T|$ and hence $|T|\leq |S|$.
\end{proof}

We have the following immediate corollary of \cref{lem:Umax}.
 
\begin{corollary}
\label{cor:Umax}
Let $d,e,s\geq 0$ be arbitrary parameters with $s\leq e$ and $d \leq
n(q-1)$. Assume $d = (q-1)u + v$ for $u,v \geq 0$ with $v <
(q-1)$. Then the monomial $m_0:=X_1^{q-1}\cdots X_u^{q-1}X_{u+1}^v$
satisfies $|U_{s,e}(m_0)|\leq |U_{s,e}(m)|$ for all monomials $m$ of
degree exactly $d$.
\end{corollary}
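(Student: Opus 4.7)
The plan is to reduce the corollary immediately to \cref{lem:Umax} by exploiting the degree stratification of $U_{s,e}(m)$.

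The key observation is that for any monomial $m$ of degree exactly $d$, the sets $\{U_t(m)\}_{t \geq 0}$ consist of monomials of \emph{distinct} total degrees $d+t$, so the union $U_{s,e}(m) = \bigcup_{s\leq t\leq e} U_t(m)$ is in fact a disjoint union. Hence
\[
|U_{s,e}(m)| = \sum_{t=s}^{e} |U_t(m)|.
\]
Thus it suffices to show the termwise inequality $|U_t(m_0)|\leq |U_t(m)|$ for each $t\in \{s,s+1,\ldots,e\}$.

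For each such $t$, I would split into two cases. If $d+t \leq n(q-1)$, then \cref{lem:Umax} (applied with the parameter $s$ there replaced by $t$) directly yields $|U_t(m_0)| \leq |U_t(m)|$, since $m_0 = X_1^{q-1}\cdots X_u^{q-1}X_{u+1}^v$ is the same extremal monomial identified in \cref{lem:Umax} (this monomial depends only on $d$, not on $t$). If instead $d+t > n(q-1)$, then every monomial of degree exactly $d+t$ must have some variable of degree at least $q$, so $U_t(m) = \emptyset$ for every degree-$d$ monomial $m$, and the inequality $|U_t(m_0)|\leq |U_t(m)|$ holds trivially as $0\leq 0$.

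Summing the termwise inequality over $t=s,\ldots,e$ gives $|U_{s,e}(m_0)| \leq |U_{s,e}(m)|$, which is the desired conclusion. There is no real obstacle here: the only mild subtlety is verifying that the degree-cap case $d+t > n(q-1)$ is handled correctly so that \cref{lem:Umax}'s hypothesis $d+t \leq n(q-1)$ can be safely assumed in the nontrivial range, but as noted this case is vacuous on both sides of the inequality.
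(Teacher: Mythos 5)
Your proof is correct and is exactly the intended derivation: the paper states \cref{cor:Umax} as an immediate consequence of \cref{lem:Umax} without further proof, and your termwise argument (decomposing $U_{s,e}(m)$ as a disjoint union over degrees and applying the lemma for each $t$, with the vacuous case $d+t>n(q-1)$ handled separately) is the natural way to make that "immediate" step explicit.
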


The main technical lemma of this section is the following. 

\begin{lemma}[Extension of the Schwartz-Zippel lemma over $\F_q$]
\label{lem:fmaxmon}
Let $e,d,s\geq 0$ be integer parameters with $s\leq e$. Let $f\in \RM_q(n)$ be non-zero and of degree \emph{exactly} $d$ with $\LM(f) = m_1$. Then,  
\[
\prob{P\in_R \RM_q(n,e)}{\deg(fP) < d+s}\leq \frac{1}{q^{|U_{s,e}(m_1)|}}.
\]
In particular, using \cref{cor:Umax}, the probability above is upper bounded by $\frac{1}{q^{|U_{s,e}(m_0)|}}$ where the monomial $m_0$ is as defined in the statement of \cref{cor:Umax}.
\end{lemma}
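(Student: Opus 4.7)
My plan is to translate the event $\{\deg(fP) < d + s\}$ into an invertibility statement in linear algebra. Consider the $\F_q$-linear map $\Phi: \RM_q(n,e) \to \F_q^{|U_{s,e}(m_1)|}$ that sends $P$ to the tuple of coefficients of the monomials in $U_{s,e}(m_1)$ appearing in the $\RM_q(n)$-expansion of $fP$. Since every $m' \in U_{s,e}(m_1)$ has degree at least $d + s$, the event $\deg(fP) < d + s$ implies $\Phi(P) = 0$, and hence it suffices to prove that $\Phi$ is surjective, from which
\[
\prob{P \in_R \RM_q(n,e)}{\deg(fP) < d+s} \leq \prob{P}{\Phi(P) = 0} = \frac{1}{q^{|U_{s,e}(m_1)|}}.
\]

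To establish surjectivity I would exhibit an invertible $|U_{s,e}(m_1)| \times |U_{s,e}(m_1)|$ submatrix of the matrix representing $\Phi$. The bijection $\rho: D_{s,e}(m_1) \to U_{s,e}(m_1)$ given by $\rho(\tilde m) = \tilde m \cdot m_1 = \tilde m * m_1$ (recall \cref{fac:U=D}) is strictly order-preserving in the graded lex order on monomials, by \cref{fac:mon-order} together with the fact that $\F_q[X_1,\ldots,X_n]$ is an integral domain. I restrict the matrix of $\Phi$ to the columns indexed by $D_{s,e}(m_1)$ -- these are valid monomials of $P$, being of degree between $s$ and $e$ -- and index both rows and columns in increasing graded lex order, so the $(\rho(\tilde m_i), \tilde m_j)$-entry is $[m_1 \cdot \tilde m_i](f \cdot \tilde m_j)$. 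The claim is that this submatrix is upper triangular with every diagonal entry equal to the leading coefficient $c_1 \neq 0$ of $f$, and hence invertible.

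The heart of the proof is therefore the statement that $\LM(f \cdot \tilde m) = m_1 * \tilde m$ with leading coefficient $c_1$ for every $\tilde m \in D_{s,e}(m_1)$. Writing $f = c_1 m_1 + \sum_{m'' < m_1} c_{m''} m''$, the term $c_1 m_1 \cdot \tilde m = c_1 (m_1 * \tilde m)$ already produces the candidate leading monomial of degree $d + \deg(\tilde m)$. The main technical step, and the main obstacle, is to show that no other summand $c_{m''} m'' \cdot \tilde m$ contributes a monomial that is $\geq m_1 * \tilde m$ in graded lex order. If $m''$ and $\tilde m$ are disjoint, then $m'' \cdot \tilde m = m'' * \tilde m < m_1 * \tilde m$ by \cref{fac:mon-order}. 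If they are not disjoint, then some variable $X_i$ satisfies $\deg_{X_i}(m'') + \deg_{X_i}(\tilde m) \geq q$, and the reduction $X_i^q = X_i$ drops the total degree by exactly $q - 1$ for each such overflowing coordinate; this gives $\deg(m'' \cdot \tilde m) \leq d + \deg(\tilde m) - (q - 1) < \deg(m_1 * \tilde m)$, so again the contributed monomial is strictly smaller. Once this is in hand, the leading monomial and coefficient are as claimed, and upper triangularity of the submatrix follows: for $i > j$ we have $\tilde m_i > \tilde m_j$, whence $m_1 \cdot \tilde m_i > m_1 \cdot \tilde m_j = \LM(f \cdot \tilde m_j)$, forcing the $(i,j)$-entry to vanish. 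Combining with the first paragraph completes the proof.
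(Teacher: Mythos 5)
Your proposal is correct and follows essentially the same route as the paper: both reduce the event to the vanishing of a homogeneous linear system in the coefficients of $P$, restrict to the columns indexed by $D_{s,e}(m_1)$, and verify that the resulting square submatrix is triangular with non-zero diagonal via the same graded-lex computation $m''\cdot \tilde m \leq m''*\tilde m < m_1 * \tilde m$. The only difference is cosmetic: you phrase the triangularity column-wise as a statement about $\LM(f\cdot\tilde m)$, whereas the paper phrases it row-wise as a claim about which monomials ``influence'' each equation (\cref{clm:UT}).
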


\begin{proof}
  Let $P = \sum_{m:\deg(m)\leq e}\alpha_m m$ where $m$ ranges over all monomials in $\RM_q(n)$ of degree at most $e$ and the $\alpha_m$ are
  chosen independently and uniformly at random from $\F_q$. Also, let
  $f = \sum_{i=1}^N \beta_i m_i$ where $\beta_i\neq 0$ for each $i$
  and we have $m_1 > m_2 > \cdots > m_N$ in the graded lexicographic
  order defined earlier.

Thus, we have 
\ifIEEEtr
\begin{align*}
  fP &= \left(\sum_{m: \deg(m)\leq e} \alpha_m m\right) \cdot
       \left(\sum_{i=1}^N \beta_i m_i\right)\\ 
& = \sum_{\tilde{m}} \left(\sum_{(m,j): mm_j = \tilde{m}} \alpha_m \beta_j\right) \tilde{m}.
\end{align*}
\else
\begin{align*}
  fP &= \left(\sum_{m: \deg(m)\leq e} \alpha_m m\right) \cdot
       \left(\sum_{i=1}^N \beta_i m_i\right)
= \sum_{\tilde{m}} \left(\sum_{(m,j): mm_j = \tilde{m}} \alpha_m \beta_j\right) \tilde{m}.
\end{align*}
\fi
The polynomial $fP$ has degree $< d+s$ iff for each $\tilde{m}$ of
degree at least $d+s$, its coefficient in the above expression is
$0$. Since the $\beta_i$'s are fixed, we can view this event as the
probability that some set of \emph{homogeneous} linear equations in
the $\alpha_m$ variables (one equation for each $\tilde{m}$ of degree at least $d+s$) are satisfied. By standard linear algebra,
this is exactly $q^{-t}$ where $t$ is the rank of the linear
system. So it suffices to show that there are at least
$|U_{s,e}(m_1)|$ many \emph{independent} linear equations in the
system.

Recall that $|D_{s,e}(m_1)| = |U_{s,e}(m_1)|$. Now, for each
$m\in D_{s,e}(m_1)$, consider the ``corresponding'' monomial
$\tilde{m} = m\cdot m_1 = m * m_1\in U_{s,e}(m_1)$ (the second equality is true since
$m$ is disjoint from $m_1$). Note that each $\tilde{m}\in U_{s,e}(m_1)$ has
degree exactly $\deg(m) + \deg(m_1) \in [d+s,d+e]$. Thus, for $fP$ to
have degree $< d+s$, the coefficient of each $\tilde{m}$ must
vanish. Further, since
$|D_{s,e}(m_1)| = |U_{s,e}(m_1)|$ it suffices to
show that the linear equations corresponding to the different
$\tilde{m}\in U_{s,e}(m_1)$ are all linearly independent.

To prove this, we argue as follows. Let $m'$ be a monomial of degree
at most $e$. We say that $m'$ \emph{influences} $\tilde{m}\in
U_{s,e}(m_1)$ if $\alpha_{m'}$ appears with non-zero coefficient in
the equation corresponding to $\tilde{m}$. We now make the following claim.

\begin{claim}
\label{clm:UT}
Let $\tilde{m}\in U_{s,e}(m_1)$ and $m\in D_{s,e}(m_1)$ be such that
$\tilde{m} = m * m_1$. Then, $m$ influences $\tilde{m}$. Further, if
some monomial 
$m'$ influences $\tilde{m}$, then $m' \geq m$.
\end{claim}

Assuming the above claim, we complete the proof of the lemma as
follows. Consider the matrix $B$ of coefficients obtained by writing
the above linear system in the following manner. For each
$\tilde{m} = m*m_1\in U_{s,e}(m_1)$, we have a row of $B$ and let the
rows be arranged from top to bottom in increasing order of $m$ (w.r.t. the
graded lexicographic order). Similarly, for each $m'$ of degree at
most $e$, we have a column and again the columns are arranged from
left to right in increasing order of $m'$. The $(\tilde{m},m')$th
entry contains the coefficient of $\alpha_{m'}$ in the equation
corresponding to the coefficient of $\tilde{m}$.

Restricting our attention only to columns corresponding to
$m'\in D_{s,e}(m_1)$, \cref{clm:UT} guarantees to us that the
submatrix thus obtained is a $|D_{s,e}(m_1)|\times |D_{s,e}(m_1)|$
matrix that is upper triangular with non-zero entries along the
diagonal. Hence, the submatrix is full rank. In particular, the matrix
$B$ (and hence our linear system) has rank at least
$|D_{s,e}(m_1)|$. This proves the lemma.
\end{proof}

\begin{proof}[Proof of \cref{clm:UT}]
We start by showing that $m$ does indeed influence $\tilde{m}$. The linear equation corresponding to $\tilde{m}$ is 
\begin{equation}
\label{eq:tildemeq}
\sum_{(m',j):m'\cdot m_j = \tilde{m}} \beta_j \alpha_{m'} = 0
\end{equation}
where $m'$ runs over all monomials of degree at most $e$. 

Clearly, one of the summands in the LHS above is $\beta_1 \alpha_m$. Thus, to ensure that $m$ influences $\tilde{m}$, it suffices to ensure that no other summand containing the variable $\alpha_m$ appears. That is, that $m\cdot m_j \neq \tilde{m}$ for any $j > 1$. (Note that in general unique factorization is \emph{not true} in $\RM_q(n)$, since $X^q=X$.)

To see this, note further that $m\cdot m_j$ is either equal to $m*m_j$ (if they are disjoint) or has smaller degree than $m*m_j$. In either case, we have $m\cdot m_j \leq m*m_j$. Thus, we obtain
\[
m\cdot m_j \leq m* m_j < m*m_1 = \tilde{m}
\]
where the second inequality follows from the fact that $m_1 > m_j$ and
hence (by \cref{fac:mon-order}) $m'*m_1 > m'*m_j$ for any monomial $m'$. This shows that $\alpha_m$ appears precisely once in the left hand side of \eqref{eq:tildemeq} and in particular, that it must influence $\tilde{m}$.

Now, we show that no $m'< m$ influences $\tilde{m}$. Fix some $m' < m$. For any $j\in [N]$ we have
\[
m'\cdot m_j \leq m' * m_j \leq m' * m_1 < m* m_1 = \tilde{m}
\]
where the first two inequalities follow from a similar reasoning to above and the third from the fact that $m' < m$. Hence, we see that no monomial that is a product of $m'$ with another monomial from $f$ can equal $\tilde{m}$. In particular, this means that $m'$ cannot influence $\tilde{m}$.

This completes the proof of the claim.
\end{proof}

\begin{corollary}
\label{cor:fmaxmon-new}
Let $n,e,d,P,f$ be as in \cref{lem:fmaxmon}. Further, let $r$ be such that $(q-1)n - d = r$ and assume $r\geq 3e$. Then, 
$
\prob{P\sim \RM_q(n,e)}{\deg(fP) < d+e}\leq {q^{-N_q(\lfloor L/3\rfloor,e)}}
$
where $L = \lfloor r/(q-1)\rfloor.$
\end{corollary}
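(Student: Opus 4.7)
The plan is to derive the corollary as an essentially immediate consequence of Lemma \ref{lem:fmaxmon} specialised to $s = e$, after reducing the extremal quantity to $N_q(\lfloor L/3\rfloor, e)$ via a short combinatorial argument.

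First I would invoke Lemma \ref{lem:fmaxmon} with $s=e$. Since the event $\deg(fP) < d + e$ corresponds exactly to this choice of $s$, the lemma yields the bound $q^{-|U_{e,e}(\LM(f))|} = q^{-|U_e(\LM(f))|}$. Next I would apply Corollary \ref{cor:Umax} (with $s = e$) to bound this from above by $q^{-|U_e(m_0)|}$, where $m_0 = X_1^{q-1}\cdots X_u^{q-1}X_{u+1}^v$ is the extremal monomial defined from the decomposition $d = (q-1)u + v$, $0\leq v<q-1$.

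The remaining task is to establish the combinatorial inequality
\[
|U_e(m_0)| \;\geq\; N_q(\lfloor L/3\rfloor,\, e).
\]
Here I would pass to $|D_e(m_0)|$ via Fact \ref{fac:U=D}, and interpret $|D_e(m_0)|$ as the number of degree-$e$ monomials supported on the ``free'' variables $X_{u+1}, \ldots, X_n$ (with $X_{u+1}$ capped at degree $q-1-v$ and the rest at degree $q-1$). A short arithmetic computation from $(q-1)(n-u) = r + v$ shows that there are at least $L$ free variables available (indeed, $n-u = L$ if $v=0$ and $n-u = L+1$ if $v > 0$). I would then construct an explicit injection from the set of monomials in $\lfloor L/3\rfloor$ variables of degree $\leq e$ (with individual degrees $\leq q-1$) into $D_e(m_0)$: embed such a monomial $\mu$ into the first $\lfloor L/3\rfloor$ free variables (chosen to avoid $X_{u+1}$ if $v > 0$) and then multiply by a canonical padding monomial of degree $e - \deg(\mu)$ placed on the remaining $\geq \lceil 2L/3\rceil$ free variables, for instance the lex-smallest monomial of the required degree with individual degrees $\leq q-1$. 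The map $\mu \mapsto \mu \cdot \pi_{\deg(\mu)}$ is clearly injective, since restricting the product to the host variables recovers $\mu$.

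The hypothesis $r \geq 3e$ is exactly what is needed to ensure the padding always fits: it implies $L \geq \lfloor 3e/(q-1)\rfloor$, from which one derives $\lceil 2L/3\rceil \geq \lceil e/(q-1)\rceil$, the maximum number of variables the canonical padding may ever require. I expect this arithmetic verification (with the nested floors in $L = \lfloor r/(q-1)\rfloor$ and $\lfloor L/3\rfloor$) to be the main, though still routine, obstacle, together with a small number of boundary cases ($L \in \{0,1\}$) which may need a separate direct check; these latter cases are easy since $N_q(0, e) = 1$ and $D_e(m_0)$ is visibly non-empty whenever $e \leq r$ (which follows from $r \geq 3e$).
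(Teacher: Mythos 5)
Your proposal is correct and follows essentially the same route as the paper: apply \cref{lem:fmaxmon} with $s=e$ together with \cref{cor:Umax}, pass to $|D_e(m_0)|$ via \cref{fac:U=D}, and lower-bound it by $N_q(\lfloor L/3\rfloor,e)$ by splitting the $L$ free variables into a host set of size $\lfloor L/3\rfloor$ and a padding set whose degree capacity (guaranteed by $r\geq 3e$) completes each monomial to degree exactly $e$ — this is precisely the paper's partition of $T$ into $T_1$ and $T_2$. The only difference is bookkeeping: the paper dispenses with the small-$L$ regime by assuming $r\geq 3(q-1)$ up front (so that $r\geq 2e+(q-1)$ and the capacity bound $(L-L')(q-1)\geq e$ follows), whereas you isolate $L\in\{0,1\}$ as separate trivial cases; both are sound.
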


\begin{proof}
To prove the corollary, we use \cref{lem:fmaxmon} with $s = e$
and prove a lower bound on $|U_{e,e}(m_0)| = |U_e(m_0)| = |D_e(m_0)|$ where $m_0$ is the monomial from the statement of \cref{lem:Umax}. 

We first observe that we can assume that $r \geq 3(q-1).$ If this is not the case, then $\lfloor L/3\rfloor = 0$ and hence $N_q(\lfloor L/3\rfloor,e) = 1.$ Thus, the claimed bound on $\prob{P\sim \RM_q(n,e)}{\deg(fP) < d+e}$ follows from the fact that $|D_e(m_0)| \geq 1.$ Hence, we will assume from now on that $r \geq 3(q-1).$ In conjunction with our assumption that $r \geq 3e,$ this implies that 
\begin{equation}
    \label{eq:rlbd}
    r \geq 2e + (q-1).
\end{equation}

Let $T$ index the $L = \left\lfloor \frac{r}{q-1}\right\rfloor$ variables not present in the monomial $m_0$. We can lower bound $|D_e(m_0)|$ by the number of monomials of degree \emph{exactly} $e$ in $\RM_q(n,e)$ supported on variables from $T$; let $\mc{M}$ denote this set of monomials.

Partition $T$ arbitrarily into two sets $T_1$ and $T_2$ such that $|T_1| = L' = \lfloor L/3\rfloor$. 

To lower bound $|\mc{M}|$, note that given any monomial $m_1\in \RM_q(n,e)$ in the variables of $T_1$ of degree at most $e$, we can find a monomial $m_2$ over the variables of $T_2$ such that their product has degree exactly $e$. The reason for this is that the maximum degree of a monomial in the variables in $T_2$ is 
\ifIEEEtr
\begin{align*}
(L-L')(q-1) & \geq \frac{L}{2}(q-1)\geq
              \frac{1}{2}(\frac{r}{q-1}-1)(q-1)\\ 
&= \frac{r-(q-1)}{2} \geq e
\end{align*}
\else
\begin{align*}
(L-L')(q-1)
\geq \frac{L}{2}(q-1)\geq \frac{1}{2}(\frac{r}{q-1}-1)(q-1)
= \frac{r-(q-1)}{2} \geq e
\end{align*}
\fi
where the last inequality follows from \eqref{eq:rlbd}. Hence, we can always find a monomial $m_2$ over the variables in $T_2$ such that $\deg(m_1m_2) = e$. Hence, we can lower bound $|\mc{M}|$ by the number of monomials $m_1$ over the variables in $T_1$ of degree at most $e$ which is $N_q(L',e)$. We have thus shown that $|U_{e,e}(m_0)| \geq N_q(L',e)$. An application of \cref{lem:fmaxmon} now implies the corollary.
\end{proof}

\subsection{Connection to the Schwartz-Zippel Lemma over $\F_q$}
\label{sec:SZconn}

Consider the special case of \cref{lem:fmaxmon} when $e = (q-1)n$ and $s = 0$. In this case, note that $\RM_q(n,e)$ is just the ring $\RM_q(n)$ and hence the above lemma implies
$
\prob{P\sim \RM_q(n)}{\deg(fP) < d} \leq \frac{1}{q^{|U_{s,e}(m_0)|}}
$
where $m_0$ is the monomial from the statement of \cref{lem:Umax}. Note that as a special case, this implies that $\prob{P\sim \RM_q(n)}{fP = 0} \leq \frac{1}{q^{|U_{s,e}(m_0)|}}$.

Observe that  by \cref{fac:ring}, $fP = 0$ if and only if the polynomial $fP$ vanishes at each point of $\mathbb{F}_q^n$. However, since $P$ evaluates to an independent random value in $\mathbb{F}_q$ at each input $x\in \F_q^n$, we see that the probability that $fP$ evaluates to $0$ at each point is exactly the probability that $P(x) = 0$ at each point where $f(x)\neq 0$. This happens with probability exactly $\frac{1}{q^{|\Supp(f)|}}$. 

Putting it all together, we see that $\frac{1}{q^{|\Supp(f)|}} \leq \frac{1}{q^{|U_{s,e}(m_0)|}}$ and hence,
$
|\Supp(f)| \geq |U_{s,e}(m_0)| = |D_{s,e}(m_0)|.
$

For the chosen values of $e$ and $s$, the latter quantity is exactly the total number of monomials --- of \emph{any} degree --- that are disjoint from $m_0$, which is exactly $(q-v)q^{n-u-1}$, matching the Schwartz-Zippel lemma over $\F_q$ (\cref{fac:ring}).

It is also known that the Schwartz-Zippel lemma over $\F_q$ is tight for a suitably chosen degree $d$ polynomial $f$. \cref{lem:fmaxmon} is also tight for the same polynomial $f$, as we show below.

The Schwartz-Zippel lemma is tight for any $d \leq n(q-1)$ for the polynomial $f(X_1,\ldots, X_n)$ defined as follows. Write $d = u(q-1)+v$ so that $0\leq v < q-1$. Fix any ordering $\xi_0,\ldots,\xi_{q-1}$ of $\F_q$. Recall (see \cref{sec:basis}) that $\mc{B}_q(n,d)$ is the space of \emph{generalized monomials} w.r.t. this ordering of degree at most $d$. Let $f = b_{v}(X_{u+1})\cdot\prod_{i=1}^{u} b_{q-1}(X_i)$. Note that $f\in \mc{B}_q(n,d)$.

We show that this same $f$ also witnesses the tightness of \cref{lem:fmaxmon}. 

\begin{claim}
\label{clm:SZtight}
Let $f\in \RM_q(n)$ be as defined above. Then, for any $e,s\geq 0$ we have
\[
\prob{P\sim\RM_q(n,e)}{\deg(fP) < d+s}  = \frac{1}{q^{|U_{s,e}(m_0)|}}
\]
where $m_0$ is as defined in the statement of \cref{cor:Umax}.
\end{claim}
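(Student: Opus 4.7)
The plan is to derive the claimed equality by combining Lemma~\ref{lem:fmaxmon} (which already yields the $\le$ direction) with a matching lower bound, argued directly through the generalized monomial basis of Section~\ref{sec:basis}. For the upper bound half, I would first observe that $\LM(f) = m_0$: since each $b_k(X)$ is monic of degree $k$, the leading monomial of $f = b_v(X_{u+1})\prod_{i=1}^u b_{q-1}(X_i)$ in graded lex is exactly $X_1^{q-1}\cdots X_u^{q-1} X_{u+1}^v = m_0$. Lemma~\ref{lem:fmaxmon} then gives $\Pr[\deg(fP)<d+s] \le q^{-|U_{s,e}(m_0)|}$ immediately, so all the real work is in the matching lower bound.

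For the lower bound, my plan is to build a subspace $W \subseteq \RM_q(n)$ such that $f \cdot \RM_q(n) \subseteq W$ and whose degree structure matches $|U_{s,e}(m_0)|$ exactly. Concretely, I would compute $f \cdot \tilde m$ for an arbitrary generalized monomial $\tilde m = \prod_j b_{k_j}(X_j)$ using Lemma~\ref{lem:basis-property}. The two pointwise identities I would verify are (i) $b_{q-1}(X_i)\cdot b_{k_i}(X_i) = b_{k_i}(\xi_{q-1})\, b_{q-1}(X_i)$ for each $i \le u$, which holds because $b_{q-1}$ vanishes on $\F_q \setminus \{\xi_{q-1}\}$; and (ii) $b_v(X_{u+1})\cdot b_{k_{u+1}}(X_{u+1}) \in \mathrm{span}\{b_\ell(X_{u+1}) : \ell \ge \max(v, k_{u+1})\}$, obtained by applying Lemma~\ref{lem:basis-property} with the smaller of $v, k_{u+1}$ in the role of $f$ and the larger in the role of $b_i$ (the case analysis is forced because Lemma~\ref{lem:basis-property} is asymmetric in its two arguments). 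Combining (i) and (ii), $f \cdot \tilde m$ lies in the span of generalized monomials of the form
\[
\tilde m' \;=\; \prod_{i=1}^u b_{q-1}(X_i) \cdot b_\ell(X_{u+1}) \cdot \prod_{j=u+2}^n b_{k_j}(X_j), \qquad \ell \ge v,
\]
and I define $W$ as the span of all such $\tilde m'$.

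The final step is a count. The degree of $\tilde m'$ above equals $d + (\ell - v) + \sum_{j \ge u+2} k_j$, so the substitution $\ell' := \ell - v$ puts the set $\{\tilde m' \in W : d+s \le \deg(\tilde m') \le d+e\}$ in bijection with the tuples $(\ell', k_{u+2}, \ldots, k_n)$ satisfying $0 \le \ell' \le q-1-v$, $0 \le k_j \le q-1$, and $s \le \ell' + \sum_{j \ge u+2} k_j \le e$; unwinding the definition of $U_{s,e}(m_0)$ this count is exactly $|U_{s,e}(m_0)|$. Since $fP$ has total degree at most $d+e$ and distinct generalized monomials have distinct graded-lex leading (ordinary) monomials, when I expand $fP = \sum_{\tilde m' \in W} c_{\tilde m'} \tilde m'$ the coefficient $c_{\tilde m'}$ must vanish whenever $\deg(\tilde m') > d+e$ (read off from leading monomials via a triangular solve). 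Hence $\deg(fP) < d+s$ is equivalent to $c_{\tilde m'} = 0$ for every $\tilde m' \in W$ with $d+s \le \deg(\tilde m') \le d+e$, which is a linear condition on $P \in \RM_q(n,e)$ whose codomain has dimension $|U_{s,e}(m_0)|$. This forces $\Pr \ge q^{-|U_{s,e}(m_0)|}$ and, combined with Lemma~\ref{lem:fmaxmon}, closes the proof.

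The main obstacle I anticipate is identity (ii) together with the triangular bookkeeping: while Lemma~\ref{lem:basis-property} handles each product neatly, one must verify carefully that $fP$ has no $\tilde m' \in W$-component of degree exceeding $d+e$, which amounts to noting that the leading (ordinary) monomials of distinct generalized monomials are all distinct, so the coefficients $c_{\tilde m'}$ are determined inductively from high-degree to low-degree, and a monomial of degree $>d+e$ can never appear because $fP$ has none. Once this triangular argument is in place, the rest reduces to the clean bijective count above.
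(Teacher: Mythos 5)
Your proposal is correct, and while the upper-bound half is identical to the paper's (both just invoke \cref{lem:fmaxmon}; note its ``in particular'' clause even makes your observation $\LM(f)=m_0$ unnecessary, though that observation is true), your lower bound takes a genuinely different route. The paper expands the random $P$ in the generalized-monomial basis built from the \emph{reversed} ordering of $\F_q$, uses the orthogonality $b_i(X)\cdot b_j'(X)=0$ iff $i+j\geq q$ to show that $f$ annihilates every non-disjoint basis element, counts exactly $|U_{s,e}(m_0)|$ disjoint ones of degree in $[s,e]$, and exhibits the explicit sub-event ``those coordinates of $P$ vanish,'' which has probability exactly $q^{-|U_{s,e}(m_0)|}$ and forces $\deg(fP)<d+s$. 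You instead work on the image side: you show $f\cdot \RM_q(n)$ lands in a subspace $W$ spanned by generalized monomials (in the original ordering) of a restricted shape, count that exactly $|U_{s,e}(m_0)|$ of those spanning monomials have degree in $[d+s,d+e]$, and note that the target event is the kernel of a linear map into a space of that dimension, so its probability is at least $q^{-|U_{s,e}(m_0)|}$ by a codimension bound. Your identities (i) and (ii) are correct consequences of \cref{lem:basis-property}, your bijective count matches the definition of $U_{s,e}(m_0)$ after the shift $\ell'=\ell-v$, and the triangularity step (degree of a function equals the maximal degree of a generalized monomial appearing in its expansion) is sound --- it is essentially \cref{fac:basis}. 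The trade-off: the paper's version yields an explicit witnessing sub-event and so could be read off as a tightness certificate, while yours dispenses with the second (reversed) basis and treats $P$ basis-free, at the cost of establishing the containment $f\cdot\RM_q(n)\subseteq W$.
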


\begin{proof}
By \cref{lem:fmaxmon}, we already know that 
\[
\prob{P\sim\RM_q(n,e)}{\deg(fP) < d+s}  \leq \frac{1}{q^{|U_{s,e}(m_0)|}}.
\]
So it suffices to prove the opposite inequality. Namely that
\begin{equation}
\label{eq:lbd}
\prob{P\sim\RM_q(n,e)}{\deg(fP) < d+s}  \geq \frac{1}{q^{|U_{s,e}(m_0)|}}.
\end{equation}

For this proof, it is convenient to work with generalized monomials w.r.t. two different orderings. Consider the reverse ordering to the one defined above: i.e., $\xi_{q-1},\ldots,\xi_0$. Let $b_i'(X)$ denote the basis from \cref{sec:basis} w.r.t. this ordering. We define $\mc{B}'_q(n,e)$ to be the generalized monomials (see \cref{sec:basis}) w.r.t. this ordering of degree at most $e$.

We make a simple observation. Since each $b_i$ vanishes \emph{exactly} at $\xi_0,\ldots,\xi_{i-1}$ and each $b_j'$ vanishes exactly at $\xi_{q-1},\ldots, \xi_{q-j}$, we obtain
\begin{equation}
\label{eq:bibj'}
b_i(X)\cdot b_j'(X) = 0 \text{ iff $i + j \geq q$.}
\end{equation}

We say that $b_i$ and $b_j'$ are disjoint if $i+j < q$. Similarly, two generalized monomials $\prod_{i\in [n]}b_{j_i}(X_i)$ and $\prod_{i\in [n]}b_{j'_i}'(X_i)$ are disjoint if for each $i$, the basis elements $b_{j_i}$ and $b_{j_i'}'$ are disjoint. From \eqref{eq:bibj'} above, the product of any pair of non-disjoint generalized monomials with one from each of $\mc{B}_q(n,d)$ and $\mc{B}'_q(n,e)$ is $0$.

Since $\mc{B}'_q(n,e)$ forms a basis for $\RM_q(n,e)$ (\cref{fac:basis}), we can view the process of sampling  $P$ uniformly from $\RM_q(n,e)$ as picking $\alpha_{i_1,\ldots,i_n}\in \F_q$ independently and uniformly at random for each $(i_1,\ldots,i_n)$ such that $\sum_{j\in [n]}i_j \leq e$ and setting 
\[
P = \sum_{(i_1,\ldots,i_n): \sum_j i_j \leq e}\alpha_{i_1,\ldots,i_n}\prod_{j\in [n]} b_{i_j}'(X_j).
\]

We now consider the product $fP$, which is expanded as
\[fP = \sum_{(i_1,\ldots,i_n): \sum_j i_j \leq e}\alpha_{i_1,\ldots,i_n}f\cdot \prod_{j\in [n]} b_{i_j}'(X_j).
\]

From the definition of $f$ and using \eqref{eq:bibj'}, we see that the product of $f$ with each generalized monomial from $\mc{B}_q'(n,e)$ is non-zero if and only if
$i_j = 0$ for all $j \in [u]$ and $i_{u+1} + v < q$. In particular, the number of generalized monomials in $\mc{B}_q'(n,e)$ of degree exactly $t\leq e$ that are disjoint from $f$ is equal to the cardinality of the set
\[
D'_t(f)=\{(i_1,\ldots,i_n)\mid \sum_j i_j = t, i_j = 0\ \forall j \in [u], i_{u+1} + v < q\}
\]
By inspection, it is easily verified that the above set has the same cardinality as $D_t(m_0)$. In particular the size of the set $\bigcup_{s\leq t\leq e}D'_t(f)$ is $\sum_{s\leq t\leq e}|D'_t(f)|=|D_{s,e}(m_0)| = |U_{s,e}(m_0)|$. 

Note that when $\alpha_{i_1,\ldots,i_n} = 0$ for all $(i_1,\ldots,i_n) \in \bigcup_{s\leq t\leq e}D'_t(f)$, then we have
$\deg(fP) < d+s$. Since the coefficients $\alpha_{i_1,\ldots,i_n}$ are
chosen independently and uniformly at random from $\F_q$, this happens
with probability $q^{-|U_{s,e}(m_0)|}$. This implies \eqref{eq:lbd} and completes the proof of the claim.
\end{proof}

\section{Analyzing $\Test_{e,k}$}
\label{sec:testek}

We prove the main theorem of the paper, namely
\cref{thm:test-e-k-new}, in this section. The results
of this section only hold for \emph{prime} fields. 

For any non-negative integer parameters $L$ and $e$, recall that
$N_q(L,e)$ denotes the number of monomials $m$ in indeterminates
$X_1,\ldots,X_L$ such that the degree of each variable in $m$ is at
most $q-1$ and the total degree is at most $e$. Equivalently,
$N_q(L,e)$ is the dimension of the vector space $\RM_q(L,e).$ For $L <
0$, we define $N_q(L,e) = 1.$ 

We will choose the constant $c_q$ as in \cref{lem:HSS}.

We argue that the theorem holds by considering two cases.  We argue that when $f$ is $\Delta$-far from polynomials of degree $d + r/4$ --- a much stronger assumption than the hypothesis of the theorem --- then a modification of the proof of Dinur and Guruswami~\cite{DinurG2015} coupled with a suitable choice of basis for $\RM_q(n,d)$ yields the desired conclusion. 

If not, then $f$ is $\Delta$-close to some polynomial of degree
exactly $d'$ that is slightly larger than $d$. In this case, we can
argue that $f$ is ``essentially'' a polynomial of degree exactly $d'$ and for any such polynomial, the product $fP_1\ldots P_k$ is, w.h.p., a polynomial of degree exactly $d' + ek$ and hence $f\not\in\RM_q(n,d+ek)$. This requires the results of \cref{sec:SZext}.

We now proceed with the proof details. We consider the following two cases. 

\ifIEEEtr\else\begin{description}\fi
\ifIEEEtr
\paragraph{Case 1: $f$ is $\Delta$-far from $\RM_q(n,d+\frac{r}{4})$}
\else
\item[Case 1: $f$ is $\Delta$-far from $\RM_q(n,d+\frac{r}{4})$.]
\fi
See \cref{sec:case1} below.

\ifIEEEtr
\paragraph{Case 2: $f$ is $\Delta$-close to $\RM_q(n,d+\frac{r}{4})$}
\else
\item[Case 2: $f$ is $\Delta$-close to $\RM_q(n,d+\frac{r}{4})$.]
\fi 
Let $F\in \RM_q(n,d+\frac{r}{4})$ be such that $f$ is $\Delta$-close to $F$. Let $d' = \deg(F)$. Note that $d' > d$ since $f$ is $\Delta$-far from $\RM_q(n,d)$ by assumption. Hence, we must have $d < d' \leq d+\frac{r}{4}$. 

Note that for any $P_1,\ldots,P_k\in \RM_q(n,e)$, we have
$fP_1\cdots P_k$ is $\Delta$-close to $FP_1\cdots P_k$ (since $f(x) =
F(x)$ implies that $f(x)\cdot \prod_i P_i(x) = F(x)\cdot \prod_i
P_i(x)$). We have $FP_1\cdots P_k \in \RM_q(n,d'+ek) \subseteq \RM_q(n,d'+r/4) \subseteq \RM_q(n,d+r/2)$. Now if $fP_1\cdots P_k\in \RM_q(n,d+ek) \subseteq \RM_q(n,d+r/2)$, then by the Schwartz Zippel lemma over $\F_q$ (\cref{fac:ring}) applied to polynomials of degree at most $d+r/2$, we see that $fP_1\cdots P_k = FP_1\cdots P_k$. Hence, we have $FP_1\cdots P_k \in \RM_q(n,d+ek)$ which in particular implies that $FP_1\cdots P_k$ must have degree strictly less than $d'+ek$.

For this event to occur there must be some $i < k$ such that $FP_1\cdots P_i$ has degree exactly $d'_i := d' + ei$ but $FP_1\cdots P_{i+1}$ has degree strictly less than $d'_{i} + e$.

We have shown that
\begin{align}
\ifIEEEtr&\fi
\prob{P_1,\ldots,P_k}{fP_1\cdots P_k \in \RM_q(n,d+ek)}
\ifIEEEtr \leq\notag\\ \fi
&\ifIEEEtr\else\leq\fi \prob{P_1,\ldots,P_k}{\deg(FP_1\cdots P_k) < d'+ek}\ifIEEEtr\leq\fi\notag\\
&\ifIEEEtr\else\leq\fi \sum_{i = 0}^{k-1} \prob{P_1\cdots
  P_k}{\deg\left(F\prod_{j=1}^{i+1}P_j \right) < d'_i + e\mid
  \deg\left(F\prod_{j=1}^i P_j\right) = d'_i}.\label{eq:case2eq1}
\end{align}

For each $i$, conditioning on any fixed choice of
$P_1,\ldots,P_{i}$, the right hand side of \eqref{eq:case2eq1} can
be bounded using \cref{cor:fmaxmon-new} applied with $d$ replaced by $d'_{i} \leq d
+ r/2-e= (q-1)n - (r/2+e)$ (the parameter $r/2+e$ satisfies the hypothesis of \cref{cor:fmaxmon-new} as $r \geq 4ek\geq 4e$ and hence $r/2+e\geq 3e$). The upper bound on the probability obtained from \cref{cor:fmaxmon-new} is $q^{-N_q(\lfloor L'/3\rfloor,e)}$ where $L' = \lfloor r/(q-1)\rfloor.$ Note that by our assumption that $\Delta \leq q^{r/4(q-1)-2},$ we have $\lfloor L'/3\rfloor \geq L = \lfloor \log_q\Delta\rfloor.$ Hence, using \eqref{eq:case2eq1} we have 
\[
\prob{P_1,\ldots,P_k}{fP_1\cdots P_k \in \RM_q(n,d+ek)} \leq k q^{-N_q(L,e)}.
\]

This implies \cref{thm:test-e-k-new} in this case.
\ifIEEEtr\else\end{description}\fi

\subsection{Case 1 of \cref{thm:test-e-k-new}: $f$ is $\Delta$-far from $\RM_q(n,d+\frac{r}{4})$}
\label{sec:case1}

In this case, we adopt the method of Dinur and Guruswami~\cite{DinurG2015} along with a suitable choice of basis (\cref{sec:basis}) and \cref{lem:multilineqns} to bound the required probability. The proof is an induction, the key technical component of which is \cref{lem:HSS}, which follows from the work of Haramaty et al.~\cite{HaramatySS2013}. 

Let $d' = d+r/4$. Since we know that $f$ is not of degree $d'$ (indeed
it is $\Delta$-far from $\RM_q(n,d')$), we intuitively believe that
$fP_1\cdots P_k$ should not even belong to
$\RM_q(n,d'+ek)\supsetneq \RM_q(n,d+ek)$. Hence, we associate
with the event that $fP_1\cdots P_k\in \RM_q(n,d+ek)$ the
``surprise'' parameter $s := d'-d$. This will be one of the
parameters we will track in the induction. Recall that for our setting
of parameters $s = r/4 \geq ek$. 

\begin{definition}
\label{def:rho}
For any positive integers $n_1,r_1,\Delta_1,$ $e_1\geq 0,$ and $s_1\geq e_1k$, we define the quantity $\rho(n_1,e_1,r_1,\Delta_1,s_1)$ to be the largest $\rho\in \mathbb{R}$ such that for any $d_1\geq 0$ such that $d_1 \leq (q-1)n_1-s_1-r_1$ and for any $f$ that is $\Delta_1$-far from $\RM_q(n_1,d_1+s_1)$ for $0 < \Delta_1 < q^{r_1/(q-1)}$, we have
\[
\prob{P_1,\ldots,P_k\sim \RM_q(n_1,e_1)}{fP_1\cdots P_k \in \RM_q(n_1,d_1+e_1k)} \leq q^{-\rho}.
\]
\end{definition}

We prove by induction on $e_1,r_1,$ and $\Delta_1$ that for any $n_1,e_1,r_1,\Delta_1,s_1$ as above,
\begin{equation}
\label{eq:rho-lbd}
\rho(n_1,e_1,r_1,\Delta_1,s_1) \geq \eta(q,k)\cdot N_q(\lfloor\frac{L_1}{10}\rfloor-c_q,e_1)
\end{equation}
where $\eta(q,k)$ is as in the statement of the theorem, $L_1 = \lfloor \log_q \Delta_1 \rfloor$, and $c_q$ is as defined in \cref{lem:HSS}. Note that applying \eqref{eq:rho-lbd} with $n_1 =n, e_1=e,r_1 =r,\Delta_1 = \Delta$ and $s_1=s$ immediately implies the result of this section (i.e. the statement of Theorem~\ref{thm:test-e-k-new} in this case).
%

The base case of the induction --- which we apply when either $e_1 = 0$, $r_1 \leq c_q$, or $\Delta_1 \leq q^5$ --- is the following simple lemma. (It is stated in greater generality than needed in the rest of the proof.)

\begin{lemma}
\label{lem:case1base}
For any positive $n_1,r_1,$ and $\Delta_1$; $e_1 \geq 0$; and $s_1\geq e_1 k$, we have $\rho(n_1,e_1,r_1,\Delta_1,s_1) \geq \eta(q,k)$.
\end{lemma}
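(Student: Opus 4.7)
The approach is to isolate the constant terms of $P_1,\ldots,P_k$ at a well-chosen point and invoke \cref{lem:multilineqns}. Since $\Delta_1 > 0$ and $s_1 \geq e_1 k$, we have $\RM_q(n_1,d_1+e_1 k) \subseteq \RM_q(n_1,d_1+s_1)$ and therefore $f \notin \RM_q(n_1,d_1+e_1 k)$; in particular $f$ is not identically zero, so we can fix $x_0 \in \F_q^{n_1}$ with $f(x_0) \neq 0$. The evaluation map $\mathrm{eval}_{x_0}: \RM_q(n_1,e_1) \to \F_q$ is $\F_q$-linear and surjective (the constants lie in $\RM_q(n_1,e_1)$), which gives a direct-sum decomposition $\RM_q(n_1,e_1) = \F_q\cdot 1 \oplus K$ with $K := \ker(\mathrm{eval}_{x_0})$. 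Correspondingly, each $P_i$ decomposes uniquely as $P_i = c_i + P_i'$ with $c_i \in \F_q$ and $P_i' \in K$, and when $P_i$ is uniform, the pair $(c_i,P_i')$ is uniform over $\F_q\times K$ with the two coordinates independent.

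Conditioning on fixed $(p_1',\ldots,p_k') \in K^k$, we expand
\[
fP_1\cdots P_k \;=\; f\prod_{i=1}^k(c_i + p_i') \;=\; \sum_{T\subseteq [k]}\Bigl(\prod_{i\in T}c_i\Bigr)\cdot \Bigl(f\prod_{j\notin T}p_j'\Bigr).
\]
Viewed as a polynomial in the indeterminates $c_1,\ldots,c_k$ (with coefficients in $\RM_q(n_1)$) and equipped with the partition $\Pi = \{\{c_1\},\ldots,\{c_k\}\}$, this expression is $\Pi$-multilinear, and its set-multilinear (homogeneous degree-$k$) component is precisely $c_1 c_2\cdots c_k \cdot f$. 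Passing to the quotient $\pi:\RM_q(n_1)\to \RM_q(n_1)/\RM_q(n_1,d_1+e_1 k)$ and expressing the image in a fixed basis of the quotient, the event ``$fP_1\cdots P_k \in \RM_q(n_1,d_1+e_1 k)$'' becomes a system of $\Pi$-multilinear equations in $c_1,\ldots,c_k$ over $\F_q$, whose set-multilinear parts are $\alpha_j\, c_1 c_2\cdots c_k$ with $(\alpha_j)$ the coordinate vector of $\pi(f)$. Since $f\notin\RM_q(n_1,d_1+e_1 k)$, we have $\pi(f)\neq 0$, so some $\alpha_j\neq 0$. By \cref{lem:multilineqns}, the conditional probability of the event is at most
\[
\prob{c\sim\F_q^k}{c_1 c_2\cdots c_k = 0} \;=\; 1-(1-1/q)^k,
\]
and averaging over $(p_1',\ldots,p_k')$ preserves the bound.

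It remains to verify $1-(1-1/q)^k \leq q^{-\eta(q,k)} = e^{-1/q^{k/(q-1)}}$. This follows by combining $1-x\leq e^{-x}$ (for $x\geq 0$) with the elementary inequality $(1-1/q)^k \geq q^{-k/(q-1)}$, which in turn reduces (by extracting the $(q-1)/k$-th root and taking logarithms) to $(q-1)\ln(q-1) \geq (q-2)\ln q$ for $q\geq 2$; this is trivial at $q=2$ and holds in general since $\frac{d}{dq}[(q-1)\ln(q-1)-(q-2)\ln q] = \ln(1-1/q) + 2/q \geq 0$, using $\ln(q/(q-1)) \leq 1/(q-1) \leq 2/q$. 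The only nontrivial step in the whole proof is spotting the decomposition $P_i = c_i + P_i'$ that makes the set-multilinear machinery applicable; the rest is the numerical inequality and routine bookkeeping.
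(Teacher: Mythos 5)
Your proof is correct, but it takes a genuinely different route from the paper's. The paper argues degree-theoretically: it picks a monomial $m$ of some degree $d''>d_1+e_1k$ (built from $\LM(f)$ and a disjoint monomial from $D_\delta(\LM(f))$ factored into $k$ pieces of degree at most $e_1$), observes that the coefficient of $m$ in $fP_1\cdots P_k$ is a nonzero polynomial of degree at most $k$ in the coefficients of the $P_i$, and applies the probabilistic Schwartz--Zippel lemma (\cref{fac:ring}) to conclude that this coefficient is nonzero with probability at least $q^{-k/(q-1)}$, so $\deg(fP_1\cdots P_k)>d_1+e_1k$ with that probability. You instead split each $P_i$ as $c_i+P_i'$ via evaluation at a point, condition on the kernel parts, and invoke \cref{lem:multilineqns} with the singleton partition $\Pi=\{\{c_1\},\ldots,\{c_k\}\}$, reducing the event to $c_1\cdots c_k=0$ because the set-multilinear part of $fP_1\cdots P_k$ modulo $\RM_q(n_1,d_1+e_1k)$ is $c_1\cdots c_k\cdot\pi(f)\neq 0$. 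This is arguably cleaner (no monomial combinatorics, and it reuses the machinery already needed for the inductive step) and yields the slightly sharper failure probability $1-(1-1/q)^k$ versus the paper's $1-q^{-k/(q-1)}$; your verification that $(1-1/q)^k\geq q^{-k/(q-1)}$, equivalently $(q-1)^{q-1}\geq q^{q-2}$, correctly shows your bound dominates and suffices for $\eta(q,k)$. One small remark: the condition $f(x_0)\neq 0$ plays no role in your argument --- any base point $x_0$ gives the direct-sum decomposition and the independence of $(c_i,P_i')$, since only $\pi(f)\neq 0$ is used --- so that step can be dropped.
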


The inductive case is captured in the following lemma.

\begin{lemma}
\label{lem:case1indn}
For any positive $n_1,e_1,r_1,\Delta_1$ and $s_1\geq e_1 k$ with $e_1 > 0$, $r_1 \geq c_q$ and $q^5< \Delta_1 < q^{r_1/(q-1)}$, we have 
\ifIEEEtr
\begin{align*}
&\rho(n_1,e_1,r_1,\Delta_1,s_1) \geq\\
&\sum_{i=0}^{\min\{e_1,q-1\}} \rho(n_1-1,e_1 - i,r_1 - (q-1),
  \Delta_1/q^3, s_1- ki).
\end{align*}
\else
\[
\rho(n_1,e_1,r_1,\Delta_1,s_1) \geq \sum_{i=0}^{\min\{e_1,q-1\}} \rho(n_1-1,e_1 - i,r_1 - (q-1), \Delta_1/q^3, s_1- ki).
\]
\fi
\end{lemma}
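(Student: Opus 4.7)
The plan is to adapt the Dinur--Guruswami template to general prime $q$ and $k\geq 1$ via the basis from Section~\ref{sec:basis} and the set-multilinear reduction in Lemma~\ref{lem:multilineqns}. Fix $d_1$ and any $f$ satisfying the hypotheses of Definition~\ref{def:rho}; the goal is to bound $p(f):=\prob{P_1,\ldots,P_k}{fP_1\cdots P_k\in\RM_q(n_1,d_1+e_1 k)}$ by the product on the right-hand side of the lemma. First, apply Lemma~\ref{lem:HSS} to $f$ with degree $d_1+s_1$; its hypotheses follow from $r_1\geq c_q$, $q^5<\Delta_1<q^{r_1/(q-1)}$, and $(q-1)n_1-(d_1+s_1)\geq r_1$. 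After a linear change of variables we take the resulting linear form to be $X_{n_1}$, so $\tilde f_\alpha:=f|_{X_{n_1}=\alpha}$ is $\Delta_1/q^3$-far from $\RM_q(n_1-1,d_1+s_1)$ for every $\alpha\in\F_q$.

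Next, fix an ordering $\xi_0,\ldots,\xi_{q-1}$ of $\F_q$ and the associated basis $\{b_j(X_{n_1})\}$. Expand $P_\alpha=\sum_j b_j(X_{n_1})Q_{\alpha,j}$ with the $Q_{\alpha,j}\in \RM_q(n_1-1,e_1-j)$ independent and uniform. Lemma~\ref{lem:prod-basis} gives $P_1\cdots P_k=\sum_j b_j(X_{n_1})Q_j$, where $Q_j=\sum_{(j_1,\ldots,j_k)\leq j}\beta^{(j)}_{(j_1,\ldots,j_k)}\prod_\alpha Q_{\alpha,j_\alpha}$ and $\beta^{(j)}_{(j,\ldots,j)}\neq 0$. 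Combining this with Lemma~\ref{lem:basis-property} exactly as in the proof of Lemma~\ref{lem:ut-decomp} yields
\[
fP_1\cdots P_k\;=\;\sum_{j=0}^{q-1}b_j(X_{n_1})R_j,\qquad R_j=\tilde f_j Q_j+\sum_{j'<j}Q_{j'}h_{j',j},
\]
for suitable $h_{j',j}\in \RM_q(n_1-1)$. Since the leading monomial in $b_j(X_{n_1})R_j$ has $X_{n_1}$-degree exactly $j$, the leading monomials cannot cancel across different $j$, so the total degree of the sum is $\max_j(j+\deg R_j)$. Therefore $fP_1\cdots P_k\in\RM_q(n_1,d_1+e_1k)$ is equivalent to the conjunction of the events $A_j:=\{R_j\in\RM_q(n_1-1,d_1+e_1k-j)\}$ for $j=0,\ldots,q-1$.

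Let $\mathbf{Q}^{(j)}:=(Q_{1,j},\ldots,Q_{k,j})$. The $\mathbf{Q}^{(j)}$ are independent across $j$ and $A_j$ depends only on $\mathbf{Q}^{(0)},\ldots,\mathbf{Q}^{(j)}$, so iterating conditional expectations from $j=q-1$ down to $j=0$ yields
\[
p(f)\;\leq\;\prod_{j=0}^{\min\{e_1,q-1\}}\;\sup_{\mathbf{Q}^{(<j)}}\prob{\mathbf{Q}^{(j)}}{A_j\mid\mathbf{Q}^{(<j)}},
\]
with factors for $j>e_1$ absorbed trivially into the bound $\leq 1$. With the old variables $\mathbf{Q}^{(<j)}$ fixed, $R_j$ is multilinear in the coefficients of $\mathbf{Q}^{(j)}$ under the partition into $k$ parts $\tilde{\mc{Z}}_\alpha:=\{\text{coefficients of }Q_{\alpha,j}\}$; the tail $\sum_{j'<j}Q_{j'}h_{j',j}$ involves only old variables, and in the expansion of $Q_j$ only the term indexed by $(j,\ldots,j)$ contributes one new variable from each of the $k$ parts. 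Hence the set-multilinear part of $R_j$ is precisely $\beta^{(j)}_{(j,\ldots,j)}\tilde f_j\prod_\alpha Q_{\alpha,j}$, and Lemma~\ref{lem:multilineqns} (applied to the system ``coefficients of monomials of degree $>d_1+e_1k-j$ in $R_j$ vanish'') bounds each conditional probability by $\prob{}{\tilde f_j\prod_\alpha Q_{\alpha,j}\in\RM_q(n_1-1,d_1+e_1k-j)}\leq\prob{}{\tilde f_j\prod_\alpha Q_{\alpha,j}\in\RM_q(n_1-1,d_1+e_1k)}$, using $\beta^{(j)}_{(j,\ldots,j)}\neq 0$ to absorb the scalar.

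The final step is to read each factor as an instance of $\rho$. Set $d_1'=d_1+jk$, $e_1'=e_1-j$, $s_1'=s_1-jk$, $r_1'=r_1-(q-1)$, and $\Delta_1'=\Delta_1/q^3$. Then $d_1'+e_1'k=d_1+e_1k$ and $\tilde f_j$ is $\Delta_1'$-far from $\RM_q(n_1-1,d_1'+s_1')=\RM_q(n_1-1,d_1+s_1)$. The constraints $d_1'+s_1'+r_1'\leq (q-1)(n_1-1)$, $\Delta_1'<q^{r_1'/(q-1)}$, and $s_1'\geq e_1' k$ follow directly from $d_1+s_1+r_1\leq (q-1)n_1$, $\Delta_1<q^{r_1/(q-1)}$, and $s_1\geq e_1 k$. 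Hence each factor is at most $q^{-\rho(n_1-1,e_1-j,r_1-(q-1),\Delta_1/q^3,s_1-jk)}$, producing the claimed recursion. The most delicate step is the set-multilinear reduction at each stage $j$: it turns crucially on the nonvanishing of $\beta^{(j)}_{(j,\ldots,j)}$ supplied by Lemma~\ref{lem:prod-basis}, which is precisely what prevents the ``old-variable'' contamination from $\sum_{j'<j}Q_{j'}h_{j',j}$ from derailing the recursive step.
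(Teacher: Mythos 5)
Your proposal is correct and takes essentially the same route as the paper's proof: \cref{lem:HSS} to obtain the hyperplane direction, the upper-triangular decomposition of $fP_1\cdots P_k$ via \cref{lem:ut-decomp,lem:prod-basis}, the product of conditional probabilities over the levels $j$, the reduction of each conditional event to its set-multilinear part $\beta^{(j)}_{(j,\ldots,j)}\tilde f_j\prod_\alpha Q_{\alpha,j}$ via \cref{lem:multilineqns}, and the same parameter bookkeeping to invoke \cref{def:rho}. The only (harmless) cosmetic difference is that you observe the conjunction $\bigwedge_j A_j$ is actually \emph{equivalent} to $fP_1\cdots P_k\in\RM_q(n_1,d_1+e_1k)$, whereas the paper only needs and uses the one-sided implication.
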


Assuming both these lemmas, we can quickly finish the proof of \eqref{eq:rho-lbd} as follows. We proceed by induction on $e_1+\Delta_1 + r_1.$ In case either $e_1=0$ or $\Delta_1 \leq q^5$ or $r_1 < c_q$, we can easily infer \eqref{eq:rho-lbd} using \cref{lem:case1base} and using the fact that $N_q(\lfloor L_1/10 \rfloor-c_q,e_1) = 1$. This is by a simple case analysis.
\begin{itemize}
\item Assume $e_1=0.$ In this case, $N_q(\lfloor L_1/10 \rfloor-c_q,e_1) = 1$ since either $\lfloor L_1/10 \rfloor-c_q \geq 0$ and hence the number of monomials of degree at most $e_1$ in $\lfloor L_1/10 \rfloor-c_q$ many variables is $1$, or $\lfloor L_1/10 \rfloor-c_q < 0$ and $N_q(\lfloor L_1/10 \rfloor-c_q,e_1) = 1$ by definition.
\item Now assume that $\Delta_1 \leq q^5.$ In this case, we see immediately that $\lfloor L_1/10 \rfloor-c_q < 0$ and hence $N_q(\lfloor L_1/10 \rfloor-c_q,e_1) = 1$ by definition.
\item Finally assume that $r_1 < c_q$. In this case, $L_1 = \lfloor \log_q \Delta_1\rfloor <  r_1/(q-1) < c_q.$ Hence, we again have $\lfloor L_1/10 \rfloor-c_q < 0$ and thus $N_q(\lfloor L_1/10 \rfloor-c_q,e_1) = 1$ by definition.
\end{itemize}

The above proves the base case of the induction. For the inductive case when all the hypotheses of \cref{lem:case1indn} hold, we see that 
\begin{align*}
\ifIEEEtr & \fi
\rho(n_1,e_1,r_1,\Delta_1,s_1) \ifIEEEtr \\ \fi
&\geq \sum_{i=0}^{\min\{e_1,q-1\}} \rho(n_1-1,e_1 - i,r_1 - (q-1), \Delta_1/q^3, s_1- ki)\\
&\geq \eta(q,k)\cdot \sum_{i=0}^{\min\{e_1,q-1\}} N_q(\lfloor (L_1 - 3)/10 \rfloor-c_q,e_1-i)\\
&\geq \eta(q,k)\cdot \sum_{i=0}^{\min\{e_1,q-1\}} N_q(\lfloor L_1/10 \rfloor-1-c_q,e_1-i)\\
&\geq \eta(q,k)\cdot N_q(\lfloor L_1/10 \rfloor-c_q,e_1),
\end{align*}
where the first inequality is simply the statement of \cref{lem:case1indn}, the second follows by induction, and the fourth follows from the simple observation that for any $L'\in \mathbb{Z}$ and $e' > 0,$
\[
N_q(L',e') \leq \sum_{i=0}^{\min\{e',q-1\}} N_q(L'-1,e'-i).
\] 
This finishes the proof of \eqref{eq:rho-lbd} assuming \cref{lem:case1base} and \cref{lem:case1indn}. We now prove these lemmas.

\begin{proof}[Proof of \cref{lem:case1base}]
Fix any $d_1 \leq (q-1)n_1 - s_1-r_1$ and any $f\in \RM_q(n_1)$ that is $\Delta_1$-far from $\RM_q(n_1,d_1+s_1)$. In particular, $f\not\in \RM_q(n_1,d_1)$. Say $f$ is of degree $d'$ for some $d' > d_1$. As we have $d_1+e_1k\leq d_1+s_1 < (q-1)n_1$, we can fix some $d''$ such that $d_1+e_1k < d'' \leq \min\{(q-1)n_1,d'+e_1k\}$. 

We first show that there exists a monomial $m$ of degree $d''$ and a choice for $P_1,\ldots,P_k$ such that the monomial $m$ has non-zero coefficient in $fP_1\cdots P_k$. If $d'' = d'$, then we can take $m$ to be any monomial of degree $d'$ with non-zero coefficient in $f$ and $P_1,\ldots,P_k$ to each be the constant polynomial $1$. Otherwise, let $d'' = d'+\delta$; note that $\delta\leq e_1k$. Let $\tilde{m} = \LM(f)$ (of degree $d'$). We choose any $m'\in D_{\delta}(\tilde{m})$. Since $\deg(m') = \delta \leq e_1k$, we can find $m_1',\ldots,m_k'$ of degrees at most $e_1$ each such that $m' = m_1'\cdots m_k'$. We set $m = \tilde{m}m'$. It can be checked that if $P_1 = m_1',\ldots,P_k = m_k'$, then the monomial $m$ appears with non-zero coefficient in $fP_1\cdots P_k = fm'$.

We now consider the probability that $m$ has a non-zero coefficient in the random polynomial $g = fP_1\cdots P_k$ obtained when each $P_i$ is chosen uniformly from $\RM_q(n_1,e_1)$. The coefficient of $m$ in $g$ can be seen to be a polynomial $R$ of degree at most $k$ in the coefficients of $P_1,\ldots,P_k$. Since we have seen above that there is a choice of $P_1,\ldots,P_k$ such that this coefficient is non-zero, we know that $R$ is a non-zero polynomial. By the Schwartz-Zippel lemma (\cref{fac:ring}), we see that the probability that $R$ is non-zero is at least $q^{-k/(q-1)}$. Thus, with probability at least $q^{-k/(q-1)}$, the monomial $m$ has non-zero coefficient in $g$ and hence $\deg(g) \geq d'' > d_1+e_1k$. 

Hence, the probability that $\deg(g) \leq d_1+e_1k$ is upper bounded by $(1-q^{-k/(q-1)})$. Using the standard inequality $1-x\leq \exp(-x)$ and the definition of $\eta(q,k)$, we see that
\[
\prob{P_1,\ldots,P_k}{\deg(g) \leq d_1+e_1k} \leq \exp(-\frac{1}{q^{k/(q-1)}}) \leq q^{-\eta(q,k)}.
\]
This proves the lemma.
\end{proof}

\begin{proof}[Proof of \cref{lem:case1indn}]
Fix any $d_1 \leq (q-1)n_1 - s_1 - r_1$ and any $f\in \RM_q(n_1)$
that is $\Delta_1$-far from $\RM_q(n_1,d_1+s_1)$. Since $r_1 \geq
c_q$, \cref{lem:HSS} is applicable to $f$. Hence, there is a linear function $\ell(X)$ such that for each $\alpha \in \F_q$, the restricted function $f|_{\ell(X)=\alpha}$ is $\Delta_1/q^3$-far from $\RM_q(n_1-1,d_1+s_1)$. By applying a linear transformation to the set of variables, we may assume that $\ell(X) = X_{n_1}$. 

Let $q' = \min\{e_1,q-1\}.$ Note that $q' >0.$

Fix any ordering $\{\xi_0,\ldots,\xi_{q-1}\}$ of the field $\F_q$ and consider the univariate basis polynomials $b_i(X)$ ($0\leq j < q$) w.r.t. this ordering  as defined in \cref{sec:basis}. We can view the process of sampling each $P_i(X_1,\ldots,X_{n_1})\in \RM_q(n_1,e_1)$ as independently sampling $Q_{i,j}(X_1,\ldots,X_{n_1-1})\in \RM_q(n_1-1,e_1-j)$ ($0\leq j \leq q'$) and setting $P_i = \sum_{0\leq j < q}b_j(X_{n_1}) Q_{i,j}(X_1,\ldots,X_{n_1-1})$ where $Q_{i,j} = 0$ for $j\in \{q'+1,\ldots,q-1\}$. Let $P$ denote $P_1\cdots P_k$. We can also decompose $P = \sum_{0\leq j < q} b_j(X_{n_1}) Q_j(X_1,\ldots,X_{n_1-1})$.

We now use \cref{lem:ut-decomp}, by which can decompose the product $fP$ as follows
\begin{equation}
\label{eq:fP-decomp}
fP = \sum_{\ell=0}^{q-1}b_\ell(X_{n_1}) \left(Q_\ell\cdot f|_{X_{n_1}=\xi_\ell} + \sum_{0\leq j<\ell}Q_j\cdot h_{j,\ell}\right)
\end{equation}
where each $h_{j,\ell}(X_1,\ldots,X_{n_1-1})$ is some element of $\RM_q(n_1-1)$.

By \cref{lem:prod-basis}, it follows that for each $\ell < q$
\begin{equation}
\label{eq:prod-qij}
Q_\ell = \sum_{(\ell_1,\ldots,\ell_k) \leq \ell}\beta^{(\ell)}_{(\ell_1,\ldots,\ell_k)} Q_{(\ell_1,\ldots,\ell_k)}
\end{equation}
where $\beta^{(\ell)}_{(\ell,\ldots,\ell)} \neq 0$ and $Q_{(\ell_1,\ldots,\ell_k)} = \prod_{i\in [k]}Q_{i,\ell_i}$. Plugging \eqref{eq:prod-qij} into \eqref{eq:fP-decomp} we obtain
\ifIEEEtr
\begin{align}
&fP = \sum_{\ell=0}^{q-1}b_\ell(X_{n_1})\cdot \left(f|_{X_{n_1}=\xi_\ell}\sum_{(\ell_1,\ldots,\ell_k) \leq \ell}\beta^{(\ell)}_{(\ell_1,\ldots,\ell_k)} Q_{(\ell_1,\ldots,\ell_k)}\right.\notag\\
& + \left.\sum_{0\leq j<\ell} h_{j,\ell} \sum_{(\ell_1,\ldots,\ell_k)\leq j}\beta^{(j)}_{(\ell_1,\ldots,\ell_k)} Q_{(\ell_1,\ldots,\ell_k)}\right)\notag\\
&= \sum_{\ell=0}^{q-1}b_\ell(X_{n_1})\times \notag\\
&\underbrace{\left(\beta^{(\ell)}_{(\ell,\ldots,\ell)}Q_{(\ell,\ldots,\ell)} f|_{X_{n_1}=\xi_\ell}
+  \sum_{(\ell_1,\ldots,\ell_k)< \ell}Q_{(\ell_1,\ldots,\ell_k)}h^{(\ell)}_{(\ell_1,\ldots,\ell_k)}\right)}_{:= R_\ell(X_1,\ldots,X_{n_1-1})}\label{eq:fP-decomp-2}
\end{align}
\else
\begin{align}
fP &= \sum_{\ell=0}^{q-1}b_\ell(X_{n_1})\cdot \left(f|_{X_{n_1}=\xi_\ell}\sum_{(\ell_1,\ldots,\ell_k) \leq \ell}\beta^{(\ell)}_{(\ell_1,\ldots,\ell_k)} Q_{(\ell_1,\ldots,\ell_k)}+ \sum_{0\leq j<\ell} h_{j,\ell} \sum_{(\ell_1,\ldots,\ell_k)\leq j}\beta^{(j)}_{(\ell_1,\ldots,\ell_k)} Q_{(\ell_1,\ldots,\ell_k)}\right)\notag\\
&= \sum_{\ell=0}^{q-1}b_\ell(X_{n_1})\times \underbrace{\left(\beta^{(\ell)}_{(\ell,\ldots,\ell)}Q_{(\ell,\ldots,\ell)} f|_{X_{n_1}=\xi_\ell}
+  \sum_{(\ell_1,\ldots,\ell_k)< \ell}Q_{(\ell_1,\ldots,\ell_k)}h^{(\ell)}_{(\ell_1,\ldots,\ell_k)}\right)}_{:= R_\ell(X_1,\ldots,X_{n_1-1})}\label{eq:fP-decomp-2}
\end{align}
\fi
where each $h^{(\ell)}_{(\ell_1,\ldots,\ell_k)} = h^{(\ell)}_{(\ell_1,\ldots,\ell_k)}(X_1,\ldots,X_{n_1-1})\in \RM_q(n_1-1)$. We also use $h^{(\ell)}_{(\ell,\dots,\ell)}$ to denote $\beta^{(\ell)}_{(\ell,\ldots,\ell)}f|_{X_{n_1} = \xi_\ell}$.

Now, we analyze the probability that $fP\in \RM_q(n_1,d_1+e_1k)$. We have
\begin{align}
\ifIEEEtr & \fi
\prob{Q_{i,j}}{fP\in \RM_q(n_1,d_1+e_1k)} \ifIEEEtr \notag \\ \fi
 &\leq \prob{Q_{i,j}}{\bigwedge_{0\leq \ell < q}R_\ell\in \RM_q(n_1,d_1+e_1k - \ell)}\notag\\
&\leq \prod_{0\leq \ell < q}\prob{Q_{i,j}}{R_\ell\in \RM_q(n_1,d_1+e_1k - \ell)\mid \{R_0,\ldots,R_{\ell-1}\}}\notag\\
&\leq \prod_{0\leq \ell < q}\prob{Q_{i,j}}{R_\ell\in \RM_q(n_1,d_1+e_1k - \ell)\mid \{Q_{i,j}\mid i\in [k],j<\ell\}}\label{eq:cond-prob}
\end{align}
where the last inequality follows from the fact that each $R_j$ only depends on $Q_{i,j'}$ where $i\in [k]$ and $j'\leq j$.

Let $\exp_q(\theta)$ denote $q^{\theta}$. We claim that for each $\ell\in \{0,\ldots,q'\}$, the $\ell$th term in the RHS of \eqref{eq:cond-prob} can be bounded as follows.
\ifIEEEtr
\begin{align}
\prob{Q_{i,j}}{R_\ell\in \RM_q(n_1,d_1+e_1k - \ell)\mid \{Q_{i,j}\mid
  i\in [k],j<\ell\}}\notag\\
\leq \exp_q(-\rho(n_1-1,e_1-\ell,r_1-(q-1),\Delta_1/q^3,s_1-k\ell)) \label{eq:Rell}
\end{align}
\else
\begin{equation}
\label{eq:Rell}
\prob{Q_{i,j}}{R_\ell\in \RM_q(n_1,d_1+e_1k - \ell)\mid \{Q_{i,j}\mid i\in [k],j<\ell\}} \leq \exp_q(-\rho(n_1-1,e_1-\ell,r_1-(q-1),\Delta_1/q^3,s_1-k\ell))
\end{equation}
\fi

Substituting into \eqref{eq:cond-prob} (and using the trivial upper bound of $1$ for terms corresponding to $\ell\in \{q'+1,\ldots,q-1\}$) this will show that 
\ifIEEEtr
\begin{align*}
&\rho(n_1,e_1,r_1,\Delta_1,s_1) \geq\\ 
&\sum_{\ell=0}^{q'} \rho(n_1-1,e_1 - \ell,r_1-(q-1), \Delta_1/q^3, s_1- k\ell)
\end{align*}
\else
\[
\rho(n_1,e_1,r_1,\Delta_1,s_1) \geq \sum_{\ell=0}^{q'} \rho(n_1-1,e_1 - \ell,r_1-(q-1), \Delta_1/q^3, s_1- k\ell)
\]
\fi
which proves the lemma.

It remains only to prove \eqref{eq:Rell} for which we use \cref{lem:multilineqns}. We first condition on any choice of $Q_{i,j}$ for $i\in [k]$ and $j < \ell$. The event $R_\ell\in \RM_q(n_1-1,d_1+e_1k-\ell)$ now depends only on the random polynomials in $\mc{Q} = \{Q_{i,\ell}\mid i\in [k]\}$. We view the process of sampling these polynomials as sampling the coefficients of the standard monomials $m\in \RM_q(n_1-1,e_1-\ell)$\footnote{Any basis for the space $\RM_q(n_1-1,e_1-\ell)$ will do here. In particular, we do not need the special basis from \cref{sec:basis}.} independently and uniformly at random from $\F_q$. Let $\zeta_{i,m}$ denote the (random) coefficient of the monomial $m$ in the polynomial $Q_{i,\ell}$.

Scanning the definition of $R_\ell$ in \eqref{eq:fP-decomp-2} above, we see that $R_\ell$ is the sum of polynomials $Q_{(\ell_1,\ldots,\ell_k)}h^{(\ell)}_{(\ell_1,\ldots,\ell_k)}$, where $(\ell_1,\ldots,\ell_k) \leq \ell$. For each $(\ell_1,\ldots,\ell_k) < \ell$, the polynomial $Q_{(\ell_1,\ldots,\ell_k)}$ is a product of at most $k-1$ polynomials from the set $\mc{Q}$.

The event that $R_\ell\in \RM_q(n_1-1,d_1+e_1k-\ell)$ is equal to the probability that each monomial $\tilde{m}$ of degree larger than $d_1+e_1k-\ell$ has zero coefficient in $R_\ell$. Consider the coefficient of $\tilde{m}$ in each term 

\begin{equation}
\label{eq:Qell}
h^{(\ell)}_{(\ell_1,\ldots,\ell_k)}Q_{(\ell_1,\ldots,\ell_k)} = Q'_{(\ell_1,\ldots,\ell_k)}\prod_{i:\ell_i = \ell}Q_{i,\ell_i}
\end{equation}
where $Q'_{(\ell_1,\ldots,\ell_k)}$ is the \emph{fixed}  polynomial $\prod_{i:\ell_i < \ell}Q_{i,\ell_i}\cdot h^{(\ell)}_{(\ell_1,\ldots,\ell_k)}$.

Let $\mc{Z} = \{\zeta_{i,m}\ \mid  i\in [k], m\in \RM_q(n_1-1,e_1-\ell)\}$ and $\mc{Z}_i = \{\zeta_{i,m}\ \mid  m\in \RM_q(n_1-1,e_1-\ell)\}$ for each $i\in [k]$. Clearly, $\Pi = \{\mc{Z}_1,\ldots,\mc{Z}_k\}$ is a partition of $\mc{Z}$. It can be verified from \eqref{eq:Qell} that the coefficient of each monomial $\tilde{m}$ in $h^{(\ell)}_{(\ell_1,\ldots,\ell_k))}Q_{(\ell_1,\ldots,\ell_k)}$ is a $\Pi$-multilinear  polynomial (see \cref{sec:multilin}) $C_{(\ell_1,\ldots,\ell_k)}^{(\tilde{m})}$ applied to the random variables in $\mc{Z}$. In fact, it only depends on the random variables in $\bigcup_{i:\ell_i = \ell}\mc{Z}_i$.  Hence, this polynomial is $\Pi$-set-multilinear if and only if $\ell_1 = \dots = \ell_k = \ell$. 

Hence, from the definition of $R_\ell$ \eqref{eq:fP-decomp-2} we see that the coefficient of $\tilde{m}$ in $R_\ell$ is 
\begin{equation}
\label{eq:coeff}
C^{(\tilde{m})} := \sum_{(\ell_1,\ldots,\ell_k)\leq \ell}C^{(\tilde{m})}_{(\ell_1,\ldots,\ell_k)}
\end{equation}
which is a $\Pi$-multilinear polynomial in $\mc{Z}$ with
set-multilinear part $C^{(\tilde{m})}_{(\ell,\ldots,\ell)}$. We will
use \cref{lem:multilineqns} to bound the probability that $C^{(\tilde{m})}(\zeta_{i,m}: i,m) = 0$.

Now we can analyze the probability that $R_\ell\in \RM_q(n_1-1,d_1+e_1k-\ell)$. We omit the conditioning on $Q_{i,j}$ ($j < \ell$) since they are fixed. Below, $\tilde{m}$ varies over all monomials in $\RM_q(n_1-1)$ of degree $> d_1 + e_1k -\ell$.
\begin{align}
\ifIEEEtr & \fi
\prob{Q_{i,\ell}}{R_\ell\in \RM_q(n_1-1,d_1+e_1k-\ell)}  \ifIEEEtr
            \notag \\ \fi
&= \prob{\zeta_{i,m}}{\bigwedge_{\tilde{m}}C^{(\tilde{m})}(\zeta_{i,m}) = 0}\notag\\
&\leq \prob{\zeta_{i,m}}{\bigwedge_{\tilde{m}}C^{(\tilde{m})}_{(\ell,\ldots,\ell)}(\zeta_{i,m}) = 0}\notag\\
&=\prob{\zeta_{i,m}}{Q_{(\ell,\ldots,\ell)}h^{(\ell)}_{(\ell,\ldots,\ell)} \in \RM_q(n_1-1,d_1+e_1k-\ell)}\notag\\
&= \prob{\zeta_{i,m}}{Q_{(\ell,\ldots,\ell)}f|_{X_{n_1} = \xi_\ell}\in \RM_q(n_1-1,d_1+e_1k-\ell)} \label{eq:smpart}
\end{align}
where the inequality follows from \cref{lem:multilineqns}; the second equality follows from the fact that $C^{(\tilde{m})}_{(\ell,\ldots,\ell)}(\zeta_{i,m}) = 0$ for all $\tilde{m}$ if and only if each monomial of degree more than $d_1+e_1k-\ell$ has zero coefficient in $Q_{(\ell,\ldots,\ell)}h^{(\ell)}_{(\ell,\ldots,\ell)}$; and the last equality follows from the fact that $h^{(\ell)}_{(\ell,\ldots,\ell)} = \beta^{(\ell)}_{(\ell,\ldots,\ell)}f|_{X_{n_1} = \xi_\ell}$ and $\beta^{(\ell)}_{(\ell,\ldots,\ell)}\neq 0$.

The final expression in \eqref{eq:smpart} can be bounded by the induction hypothesis applied with $n_2 = n_1-1$, $e_2 = e_1 - \ell$, $r_2 = r_1 -(q-1)$, $\Delta_2 = \Delta_1/q^3$ and $s_2 = s_1- k\ell$. We show below that the parameters satisfy all the required conditions from \cref{def:rho}.
\begin{itemize}
\item Note that $r_2 = r_1 - (q-1) > 0$ as $r_1 \geq c_q > q$ (see \cref{lem:HSS} for the final inequality). 
\item $Q_{(\ell,\ldots,\ell)} = \prod_{i}Q_{i,\ell}$ is a product of $\ell$ polynomials independently and uniformly sampled from $\RM_q(n_1-1,e_1 - \ell)=\RM_q(n_2,e_2)$. Recall that $e_1 \geq q'\geq \ell$ and hence $e_2=e_1-\ell \geq 0$.
\item By assumption, $g:=f|_{X_n = \xi_\ell}$ is $\Delta_1/q^3 = \Delta_2$-far from $\RM_q(n_1-1,d_1+s_1) = \RM_q(n_2,d_2+s_2)$ where $d_2 = d_1 + k\ell$ and $s_2$ is as defined above. Note that $s_2 = s_1 - k\ell \geq e_1k-k\ell = e_2k$. Also note that 
\ifIEEEtr
\begin{align*}
(q-1)n_2 - d_2 &= (q-1)n_1 - (q-1) - d_1 - k\ell\\ 
&\geq r_1 + s_1 -(q-1) - k\ell = r_2 + s_2,
\end{align*}
\else
\[(q-1)n_2 - d_2 = (q-1)n_1 - (q-1) - d_1 - k\ell \geq r_1 + s_1 - (q-1) - k\ell = r_2 + s_2,\]
\fi
where the inequality uses $d_1 \leq (q-1)n_1 - r_1 - s_1$. Hence, we have $d_2 \leq (q-1)n_2 - r_2 - s_2.$
\item We also have $\Delta_2 = \Delta_1/q^3 < q^{r_1/(q-1) - 3} < q^{r_2/(q-1)}$. Similarly, as $\Delta_1 > q^5,$ we have $\Delta_2 > 0.$
\item Finally, we consider the event that $g\prod_{i}Q_{i,\ell}\in \RM_q(n_1-1,d_1+e_1k-\ell)= \RM_q(n_2,d_2 + e_2k-\ell) \subseteq \RM_q(n_2, d_2 + e_2k)$.
\end{itemize}

Thus, we can upper bound the probability in \eqref{eq:smpart} by $\exp_q(-\rho(n_2,e_2,r_2,\Delta_2,s_2))$, which yields \eqref{eq:Rell} and proves the lemma.
\end{proof}

\section{Two applications}
\label{sec:applns}

\subsection{A question of Dinur and Guruswami}

In this section, we show how \cref{thm:test-e-k} implies
\cref{lem:robustdg}, thus answering a open question raised by
Dinur and Guruswami~\cite{DinurG2015}.

\begin{proof}[Proof of \cref{lem:robustdg}]
The proof of the lemma for robustness $\Delta'$ can be reduced to
\cref{thm:test-e-k} for $k = 2$ as follows.

Let $f$ be $\Delta$-far from $\RM_q(n,d)$ as stated in the
lemma. Call $P$ ``lucky'' if
$\Delta(f\cdot P, \RM_q(m,d+e))\leq \Delta'$. We need to bound the
probability $\Pr_{P \in \RM_q(n,e)}[P \text{ is lucky }]$. For a lucky $P$, let $F$ be a degree-$(d+e)$
polynomial that is $\Delta'$-close to $f\cdot P$. Now, choose
$P'\in_R \RM_q(n,e)$ and let
$g = f P\cdot P'$. Also, let
$G = F\cdot P'$; note that $G\in \RM_q(n,d+2e)$.

Let $D = \{x\in \F_q^n\ \mid \ F(x)\neq f(x)P(x)\}$. We have $|D|\leq \Delta'$. Further, if $P'(x) = 0$ for each $x\in D$, then we have $g = G$ and hence $g\in \RM_q(n,d+2e)$.

Observe that the event that $P'(x) = 0$ for each $x\in D$ is a set of
$|D|\leq \Delta'$ homogeneous linear equations in the (randomly
chosen) coefficients of $P$. These equations simultaneously vanish
with probability at least $q^{-\Delta'}$. Hence, for a lucky $P$, we
see that $\prob{P'}{g\in \RM_q(n,d+2e)}\geq q^{-\Delta'}$. 

Thus, we see that for independent and randomly chosen $P,P'\in \RM_q(n,e)$,
\begin{align*}
&\prob{P,P'}{fPP' \in \RM_q(n,d+2e)}\\
& \geq \Pr_P[P \text{ is lucky }] \cdot \Pr_{P,P'}[ g
  \in \RM_q(n,d+2e) \mid P \text{ is lucky }]\\
& \geq \Pr_P[P \text{ is lucky }] \cdot \Pr_{P,P'}[ g
  = G \mid  P \text{ is lucky }]\\
& \geq \Pr_P[P \text{ is lucky }]\cdot \frac{1}{q^{\Delta'}}.
\end{align*}

Thus, by \cref{thm:test-e-k} we get
\[
\Pr_P[P \text{ is lucky }]\leq \frac{q^{\Delta'}}{q^{q^{\Omega(r)}}}.
\]

The lemma now follows for some $\Delta' = q^{\Omega(r)}$.
\end{proof}

\subsection{Analysis of $\text{Corr-$h$}$}

Recall the test $\text{Corr-$h$}$ defined in the introduction where $h
\in \RM_q(n,k)$ is a polynomial of exact degree $k$. In this
section, we analyze this test $\text{Corr-$h$}$, thus proving
\cref{cor:corrq}. 

For this we need the following two properties of polynomials.

\newcommand{\oexp}[1]{\omega^{\langle #1\rangle}}
\ifIEEEtr
\paragraph{Dual of $\RM_q(n,d)$}
\else
\begin{description}
\item[Dual of $\RM_q(n,d)$:]
\fi 
For any two functions, $f,g \in
  \mc{F}_q(n)$, define $\langle f, g \rangle : = \sum_{x \in \F_q^n}
  f(x) \cdot g(x)$. Given any $\F_q$-space $\mc{C} \subseteq
  \mc{F}_q(n)$, the dual of $\mc{C}$ is defined as $\mc{C}^\perp:= \{
  f \in \mc{F}_q(n) \mid \forall g \in \mc{C}, \langle f, g \rangle =
  0\}$. Recall that $r = (q-1)n-d$. It is well-know that the sets of polynomials
  $\RM_q(n,d)$ and $\RM_q(n,r-1)$ are
  duals of each other~\cite{vanLint}. We use these dual spaces to
  write the indicator variable for the event ``$f \in
\RM_q(n,d)$'' equivalently as
$\mathbbm{1}_{f \in \RM_q(n,d)} = \avg{Q \in
  \RM_q(n,r-1)}{\oexp{f,Q}},$
where $\omega = e^{2\pi i/q}$. This follows from the following
observations. 
\begin{itemize}
\item For any polynomial $P \in \RM_q(n,d)$, we have that for all $Q \in
  \RM_q(n,r-1)$, $\langle P, Q \rangle = 0$. Thus, in this case we
  have $\avg{Q \in
  \RM_q(n,r-1)}{\oexp{P,Q}} =1$.
\item Let $f \notin \RM_q(n,d)$. For each $\alpha \in \F_q$,
  let  $\mc{C}_\alpha: = \{ Q \in
  \RM_q(n,r-1) \mid \langle f,Q \rangle = \alpha \}$. Since $f
  \notin \RM_q(n,d)$, there exists a $Q \in
  \RM_q(n,r-1)$ such that $\langle f, Q \rangle \neq 0$ and hence $\mc{C}_0$ is a proper subspace of
  $\RM_q(n,r-1)$. This implies that $\{\mc{C}_\alpha\}_{\alpha \in \F_q}$
  form an equipartition of $\RM_q(n,r-1)$. Hence, $\avg{Q \in
  \RM_q(n,r-1)}{\oexp{f,Q}} = \avg{\alpha \in \F_q}{\avg{Q \in
  \mc{C}_\alpha}{\oexp{f,Q}}} = \avg{\alpha \in \F_q}{\omega^\alpha} =
0$.  
\end{itemize}

\ifIEEEtr
\paragraph{Squaring trick}
\else
\item[Squaring trick:]
\fi
We use a standard squaring trick to bound the absolute
  value of the quantity $\avg{P}{\oexp{h(P), f}}$. Let $g$ be a
  univariate polynomial of degree exactly $k$ with leading coefficient
  $g_k$. We will show (using induction on $k$) that for all $k \geq
  1$, we have 
\begin{align*}
\left|\avg{P}{\oexp{g(P),f}}\right|^{2^k} & \leq
                                            \avg{P_1,\ldots,P_k}{\oexp{k!g_k
                                            P_1\cdots P_k,f}}.
\end{align*}
The base case of the induction $(k=1)$ can be easily checked to be
true. Let $g(P) = aP + b$ where $a \neq 0$.
\ifIEEEtr
\begin{align*}
\left|\avg{P}{\oexp{aP+b,f}}\right|^2 
& = \avg{P,P_1}{\oexp{(a(P+P_1)+b),f}\cdot \oexp{- (aP+b),
  f}}\\ 
& =\avg{P,P_1}{\oexp{aP_1,f}} = \avg{P_1}{\oexp{aP_1,f}}.
\end{align*}
\else
\begin{align*}
\left|\avg{P}{\oexp{aP+b,f}}\right|^2 
& = \avg{P,P_1}{\oexp{(a(P+P_1)+b),f}\cdot \oexp{- (aP+b),
  f}}
=\avg{P,P_1}{\oexp{aP_1,f}} = \avg{P_1}{\oexp{aP_1,f}}.
\end{align*}
\fi
We now induct from $k-1$ to $k$. Let $g$ be a polynomial of degree
exactly $k$ with leading coefficient $g_k$. To this end, we first observe that
$g(P+P_1)-g(P)$ is a polynomial of degree exactly $k-1$ in $P$ with leading
coefficient $kP_1g_k$.
\ifIEEEtr
\begin{align*}
\left|\avg{P}{\oexp{g(P),f}}\right|^{2^k} 
&=\left(\left|\avg{P}{\oexp{g(P),f}}\right|^{2}\right)^{2^{k-1}}\\ 
&= \left(\avg{P,P_1}{\oexp{g(P+P_1)-g(P),f}}\right)^{2^{k-1}}\\
\text{(by convexity)} &\leq
  \avg{P_1}{\left|\avg{P}{\oexp{g(P+P_1)-g(P),f}}\right|^{2^{k-1}}}\\
\text{(by induction)} & \leq \avg{P_1}{\avg{P_2,\dots,P_k}{\oexp{(k-1)!\cdot (kP_1g_k) \cdot
  P_2P_3\cdots P_k,f}}} \\ 
&=\avg{P_1,\ldots,P_k}{\oexp{k!g_k
                                            P_1\cdots P_k,f}}.
\end{align*}
\else
\begin{align*}
\left|\avg{P}{\oexp{g(P),f}}\right|^{2^k} 
&=\left(\left|\avg{P}{\oexp{g(P),f}}\right|^{2}\right)^{2^{k-1}}= \left(\avg{P,P_1}{\oexp{g(P+P_1)-g(P),f}}\right)^{2^{k-1}}\\
\text{(by convexity)} &\leq
  \avg{P_1}{\left|\avg{P}{\oexp{g(P+P_1)-g(P),f}}\right|^{2^{k-1}}}\\
\text{(by induction)} & \leq \avg{P_1}{\avg{P_2,\dots,P_k}{\oexp{(k-1)!\cdot (kP_1g_k) \cdot
  P_2P_3\cdots P_k,f}}} =\avg{P_1,\ldots,P_k}{\oexp{k!g_k
                                            P_1\cdots P_k,f}}.
\end{align*}
\fi
\ifIEEEtr\else\end{description}\fi

We are now ready to prove \cref{cor:corrq}.
\begin{proof}[Proof of \cref{cor:corrq}] Since the class of polynomials $\RM_q(n,d+ek)$ is closed under scalar multiplication, we can assume (by multiplying by a non-zero scalar if necessary) that $h$ is monic. 
\ifIEEEtr
\begin{align*}
&\Pr_{P\in \RM_q(n,e)}\left[ f \cdot h(P) \in \RM_q(n,d+ek)\right ]\\
& = \left|\avg{P \in \RM_q(n,e), Q \in \RM_q(n,s-1)}{\oexp{f
  \cdot h(P), Q}} \right|\\ 
& =\left| \avg{Q}{\avg{P}{\oexp{h(P),fQ}}}
  \right|^{2^k/2^k} \\
&\leq \left(\avg{Q}{\left|\avg{P}{\oexp{h(P),fQ}}\right|^{2^k}}\right)^{1/2^k}\quad\qquad \text{(by convexity)} \\
&\leq \left(
  \avg{Q}{\avg{P_1,\ldots,P_k}{\oexp{k! P_1\cdots
  P_k,fQ}}}\right)^{1/2^k} \quad \text{(by the squaring trick)}\\ 
& = \left(\avg{P_1,\ldots,P_k}{\avg{Q}{\oexp{P_1\cdots P_k f,
  Q}}}\right)^{1/2^k}\\
& = \left(\Pr_{P_1,\ldots,P_k}\left[ f \cdot \prod_{i} P_i \in \RM_q(n,d+ek)\right]\right)^{1/2^k}
\end{align*}
\else
\begin{align*}
\Pr_{P\in \RM_q(n,e)}\left[ f \cdot h(P) \in \RM_q(n,d+ek)\right ]
& = \left|\avg{P \in \RM_q(n,e), Q \in \RM_q(n,s-1)}{\oexp{f
  \cdot h(P), Q}} \right| \left| \avg{Q}{\avg{P}{\oexp{h(P),fQ}}}
  \right|^{2^k/2^k} \\
\text{(by convexity)}&\leq 
  \left(\avg{Q}{\left|\avg{P}{\oexp{h(P),fQ}}\right|^{2^k}}\right)^{1/2^k}\\
\text{(by the squaring trick)}&\leq \left(
  \avg{Q}{\avg{P_1,\ldots,P_k}{\oexp{k! P_1\cdots
  P_k,fQ}}}\right)^{1/2^k} 
= \left(\avg{P_1,\ldots,P_k}{\avg{Q}{\oexp{P_1\cdots P_k f,
  Q}}}\right)^{1/2^k}\\
& = \left(\Pr_{P_1,\ldots,P_k}\left[ f \cdot \prod_{i} P_i \in \RM_q(n,d+ek)\right]\right)^{1/2^k}
\end{align*}
\fi
where the first inequality follows from Jensen's inequality and the second from the Squaring trick. For the third equality, we have used the fact that since $k < q$, the polynomials $k!P_1\cdots P_k$ and $P_1\cdots P_k$ are distributed identically.

The corollary now follows from \cref{thm:test-e-k}.
\end{proof}

\section*{Acknowledgements.} We thank Madhu Sudan for many encouraging
discussions and feedback. We also thank the anonymous reviewers of
FSTTCS 2016 for many corrections and pointing out a weakness in a
previous version of \cref{lem:robustdg}. Finally, we thank the anonymous reviewers for the IEEE Transactions on Information Theory for their insightful comments.

\ifIEEEtr
\newcommand{\etalchar}[1]{$^{#1}$}

\begin{IEEEbiography}
  [{\includegraphics[width=1in,height=1.25in,clip,keepaspectratio]{harsha}}]
  {Prahladh Harsha} is a theoretical computer scientist at the Tata
  Institute of Fundamental Research (TIFR). He received his
  B.Tech. degree in Computer Science and Engineering from the IIT Madras in 1998 and his S.M. and
  Ph.D. degrees in Computer Science from MIT in 2000 and 2004
  respectively. Prior to joining TIFR in 2010, he was at Microsoft
  Research, Silicon Valley and at the
  Toyota Technological Institute at Chicago. His areas of interests
  include computational complexity, hardness of approximation, coding
  theory and information theory.
\end{IEEEbiography}
\begin{IEEEbiography}[{\includegraphics[width=1in,height=1.25in,clip,keepaspectratio]{srikanth}}]
{Srikanth Srinivasan} is a theoretical computer scientist at the Department of Mathematics in the Indian Institute of Technology Bombay in Mumbai, India. He got his undergraduate degree from the
{Indian Institute of Technology Madras}. He received 
his Ph.D. from {The Institute of Mathematical Sciences}
in Chennai, India 
in 2011, where his advisor was
{V. Arvind}. 
His research interests include circuit complexity, derandomization, and
related areas of mathematics. 
\end{IEEEbiography}

\else
{\small 
\bibliographystyle{prahladhurl}
\bibliography{robustDG-bib}
}
\fi

\end{document}